\title{Localisable monads}
\author{Carmen Constantin}
\address{University of Edinburgh, United Kingdom}
\email{carmen.constantin@ed.ac.uk}
\author{Nuiok Dicaire}
\address{University of Edinburgh, United Kingdom}
\email{nuiok.dicaire@ed.ac.uk}
\author{Chris Heunen}
\address{University of Edinburgh, United Kingdom}
\email{chris.heunen@ed.ac.uk}
\date{\today}
\thanks{We thank Rui Soares Barbosa, Robert Furber, and Nesta van der Schaaf for useful discussions.}
\theoremstyle{plain}
\newtheorem{theorem}{Theorem}
\newtheorem{proposition}[theorem]{Proposition}
\newtheorem{corollary}[theorem]{Corollary}
\newtheorem{lemma}[theorem]{Lemma}
\theoremstyle{definition}
\newtheorem{definition}[theorem]{Definition}
\newtheorem{example}[theorem]{Example}
\newtheorem{remark}[theorem]{Remark}
\newcommand{\restate}[1]{
  \setcounterref{theorem}{#1}
  \addtocounter{theorem}{-1}
}
\DeclareMathOperator{\ZI}{ZI}
\DeclareMathOperator{\str}{st}
\DeclareMathOperator{\colim}{colim}
\DeclareMathOperator{\Alg}{Alg}
\DeclareMathOperator{\Kl}{Kl}
\newcommand{\restrict}[1]{\ensuremath{\|_{#1}}}
\newcommand{\Restrict}[1]{\ensuremath{\|^{#1}}}
\newcommand{\cat}[1]{\ensuremath{\mathbf{#1}}}
\newcommand{\op}{\ensuremath{^{\mathrm{op}}}}
\newcommand{\downset}{\ensuremath{\mathop{\downarrow}\!}}
\newcommand{\cC}{\ensuremath{\cat{C}}}
\newcommand{\formal}[1]{\ensuremath{\overline{#1}}}
\newcommand{\utv}{{\ensuremath{u\leq v}}}
\newcommand{\uov}{{\ensuremath{u\otimes v}}}
\newcommand{\id}{\ensuremath{\text{id}}}
\newcommand{\Guv}{{\ensuremath{\cat{C}\restrict{u\leq v}}}}
\newcommand{\Gu}{{\ensuremath{\cat{C}\restrict{u\leq 1}}}}
\newcommand{\Fuv}{{\ensuremath{\cat{C}\Restrict{u\leq v}}}}
\newcommand{\Fu}{{\ensuremath{\cat{C}\Restrict{u\leq 1}}}}
\newcommand{\Fuw}{{\ensuremath{\cat{C}\Restrict{u\leq w}}}}
\newcommand{\Fvw}{{\ensuremath{\cat{C}\Restrict{v\leq w}}}}
\newcommand{\Guw}{{\ensuremath{\cat{C}\restrict{u\leq w}}}}
\newcommand{\Gvw}{{\ensuremath{\cat{C}\restrict{v\leq w}}}}
\begin{document}
\maketitle

\begin{abstract}
Monads govern computational side-effects in programming semantics. They can be combined in a ``bottom-up'' way to handle several instances of such effects. Indexed monads and graded monads do this in a modular way. 
Here, instead, we equip monads with fine-grained structure in a ``top-down'' way, using techniques from tensor topology.
This provides an intrinsic theory of local computational effects without  needing to know how constituent effects interact beforehand.

Specifically, any monoidal category decomposes as a sheaf of local categories over a base space. We identify a notion of localisable monads which characterises when a monad decomposes as a sheaf of monads.
Equivalently, localisable monads are formal monads in an appropriate presheaf 2-category, whose algebras we characterise.
Three extended examples demonstrate how localisable monads can interpret the base space as locations in a computer memory, as sites in a network of interacting agents acting concurrently, and as time in stochastic processes.
\end{abstract}

\section{Introduction}

The computation of some desired value may influence parts of the environment in which the computation occurs that are separate from the value itself. Rather than being accidental byproducts, several modern programming platforms harness such \emph{computational side-effects} to structure computations in a modular way~\cite{plotkinpower:overview,plotkinpretnar:handlers}. The most well-known use is via \emph{monads}~\cite{moggi:monads,plotkinpower:monads}, which let one analyse a computational effect apart from the rest of the computation.

A computation may use more than one effect. The corresponding monads can then be combined using \emph{distributive laws} into a single monad in a ``bottom-up'' fashion~\cite{
hylandplotkinpower:combiningeffects,beck:distributivelaws,zwart:thesis}. 
This combination may involve other formalisms such as Lawvere theories~\cite{power:indexedlawvere,power:local}, but we focus on monads here.
An especially interesting case is when many instances of effects of the same kind are in play~\cite{staton:instanceseffects}. The bottom-up nature comes out in the fact that the base category on which the monad lives is highly structured; usually it is a cartesian category of presheaves.

A related use of monads is to have several layers of granularity to an effect. Indexed monads and \emph{graded monads} then model for example different levels of access to a computational effect~\cite{fujiikatsumatamellies:gradedmonads,miliuspattinsonschroeder:gradedmonads}. Again this is usually conceived of in a ``bottom-up'' fashion, where one specifies the behaviour at each level and then adds interplay between the levels.

In this article we take the opposite, ``top-down'', approach. We start with a single monad on a category with some structure, and then ask when and how that monad is the combination of constituent monads. 
This work is a first step towards an \emph{intrinsic} theory of computational effects, one that doesn't need to specify in detail how constituent effects have to interact in advance. In particular, we do not postulate that the base category consists of presheaves, which is a consequence rather than an assumption.

To do so, we follow the programme of \emph{tensor topology}, by observing that any monoidal category comes equipped with a notion of base space over which the category decomposes~\cite{enriquemolinerheunentull:tensortopology,barbosaheunen:sheaves,enriquemolinerheunentull:space,heunenlemay:tensorrestriction}. This ``spatial'' aspect can be cleanly separated: any monoidal category embeds into a category of global sections of a sheaf of so-called local monoidal categories (see Theorems~\ref{thm:embedding} and~\ref{thm:sheafrepresentation} below). This is recalled in Section~\ref{sec:tensortopology}.

Our main question is when and how a monad on a monoidal category respects this decomposition in the sense that it corresponds to a sheaf of monads on the local categories. The answer is a \emph{localisable monad}, discussed in Section~\ref{sec:localisablemonads}. To connect back to the ``bottom-up'' approach, we then characterise such monads as \emph{formal monads}~\cite{street:formalmonads} in a (pre)sheaf category in Section~\ref{sec:formalmonads}. This opens a way to analyse the (Kleisli) algebras for localisable monads, which we do in Section~\ref{sec:algebras}.

The breadth of this approach is demonstrated in Section~\ref{sec:examples}, where we work out three extended examples. They show a range of how localisable monads may interpret the base space:
as locations in a computer memory governed by a \emph{local state} monad;
as sites in a network of interacting agents governed by a monad inspired by the \emph{pi calculus}; 
and as moments in time governed by a monad of \emph{stochastic processes}.

Section~\ref{sec:conclusion} concludes, and Appendix~\ref{sec:proofs} gives proofs that were deferred from the main text.

\section{Tensor topology}\label{sec:tensortopology}

This section summarises necessary notions from tensor topology. We have to be brief, and for more details we refer the reader to~\cite{enriquemolinerheunentull:tensortopology,barbosaheunen:sheaves,heunenlemay:tensorrestriction,enriquemolinerheunentull:space}.
To save space we will not use the graphical calculus for monoidal categories~\cite{heunenvicary:cqm}, but will not be careful in denoting coherence isomorphisms in this section.
The following notions and results hold for arbitrary monoidal categories, but for simplicity we deal here with the symmetric monoidal case only.

\begin{definition}
  A \emph{central idempotent}	in a symmetric monoidal category is a morphism $u \colon U \to I$ such that $\rho_U \circ (U \otimes u) = \lambda_U \circ (u \otimes U) \colon U \otimes U \to U$ and this map is invertible. We identify two central idempotents $u \colon U \to I$ and $v \colon V \to I$ when there is an isomorphism $m \colon U \to V$ satisfying $u = v \circ m$. Write $\ZI(\cat{C})$ for the collection of central idempotents of $\cat{C}$.
\end{definition}

A central idempotent $u \colon U \to I$ is completely determined by its domain $U$. The central idempotents always form a (meet-)semilattice. The order is defined by $u \leq v$ if and only if $u=v\circ m$ for some morphism $m \colon U \to V$. The meet is given $u \wedge v = \lambda_I \circ (u \otimes v) \colon U \otimes V \to I$. The largest central idempotent is the identity $1 \colon I \to I$.

\begin{example}\label{ex:semilattice}
  Consider a (meet-)semilattice $(L,\wedge,1)$ as a symmetric monoidal category $\cat{C}$: objects of $\cat{C}$ are elements of $L$, there is a morphism $u \to v$ if and only if $u \leq v$, and $u \otimes v = u \wedge v$. Then $\ZI(\cat{C}) \simeq L$. In fact, $\ZI$ is a functor that is right adjoint to the inclusion of the category of semilattices into the category of symmetric monoidal categories.
\end{example}

\begin{example}\label{ex:cartesian}
  If $\cat{C}$ is cartesian -- that is, tensor products are in fact categorical products -- then central idempotents are exactly subterminal objects: objects $U$ whose unique morphism $! \colon U \to 1$ to the terminal object is monic.

  In particular, if $X$ is any topological space, the category of sheaves over $X$ has as central idempotent semilattice the collection of open sets $U \subseteq X$ under intersection.
\end{example}

\begin{example}\label{ex:hilbertmodules}
  If $X$ is a locally compact Hausdorff topological space, the category of Hilbert modules over $C_0(X)$ is symmetric monoidal. It is equivalent to the category of fields of Hilbert spaces over $X$, and its central idempotents correspond to open subsets $U \subseteq X$.
\end{example}

Because of the previous examples, we can think of central idempotents as open subsets of a hidden base space that any symmetric monoidal category comes equipped with. Tensor topology develops general accompanying notions of locality, restriction, and support. For example, we can restrict attention to the `part of the category that lives over an open set', as follows.

\begin{proposition}\label{prop:Cu}
  For every central idempotent $u$ in a symmetric monoidal category $\cat{C}$, there is a symmetric monoidal category $\cat{C}\restrict{u}$ where:
  \begin{itemize}
	  \item objects are as in $\cat{C}$;
	  \item morphisms $A \to B$ are morphisms $A \otimes U \to B$ in $\cat{C}$;
	  \item composition of $f \colon A \otimes U \to B$ and $g \colon B \otimes U \to C$ is $g \circ (f \otimes U) \circ (A \otimes U \otimes u)^{-1} \colon A \otimes U \to C$;
	  \item the identity on $A$ is given by $A \otimes u$;
	  \item tensor product of objects is as in $\cat{C}$;
	  \item tensor product of morphisms $f \colon A \otimes U \to B$ and $f' \colon A' \otimes U \to B'$ is $(f \otimes f') \circ (A \otimes \sigma_{A',U} \otimes U) \circ (A \otimes A' \otimes U \otimes u)^{-1} \colon A \otimes A' \otimes U \to B \otimes B'$.\qed
  \end{itemize}
\end{proposition}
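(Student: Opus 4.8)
The plan is to verify directly that the data listed in Proposition~\ref{prop:Cu} satisfies the axioms of a symmetric monoidal category, exploiting the fact that $u \colon U \to I$ is a central idempotent throughout. The key structural observation is that tensoring with $U$ behaves like an idempotent comonad: the invertible map $\rho_U \circ (U \otimes u) = \lambda_U \circ (u \otimes U) \colon U \otimes U \to U$ (call it $\mu_U$, with inverse $\mu_U^{-1}$) lets one ``duplicate'' or ``merge'' copies of $U$ freely, and centrality guarantees that $u$ slides past any morphism. So first I would record the basic coherence facts I will use repeatedly: that $\mu_U$ is natural in the sense that for any $f \colon A \otimes U \to B$ one has $f \circ (A \otimes \mu_U) = $ (the evident rebracketing of) $f$ with the extra $U$ absorbed, and that $u \otimes U$, $U \otimes u$, $U \otimes U \otimes u$ etc.\ are all invertible with the obvious inverses. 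These are immediate from the definition of central idempotent together with the triangle and pentagon equations for $\cat{C}$.

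Next I would check, in order: (i) \emph{associativity of composition} --- given composable $f \colon A \otimes U \to B$, $g \colon B \otimes U \to C$, $h \colon C \otimes U \to D$, both bracketings of $h \circ g \circ f$ unfold to $h \circ (g \otimes U) \circ (f \otimes U \otimes U) \circ (\text{rebracketing of } A \otimes U \otimes U \otimes U \to A \otimes U)$, and the two rebracketings agree because $\mu_U$ is ``associative'' (which follows from $\rho_U, \lambda_U$ coherence); (ii) \emph{unit laws} --- precomposing or postcomposing with $A \otimes u$ (resp.\ $B \otimes u$) and unfolding the composition formula collapses via $(A \otimes U \otimes u)^{-1}$ and the fact that $\mu_U \circ (U \otimes u) = \rho_U$, giving back $f$ up to the unitor; (iii) \emph{functoriality of $\otimes$ on morphisms} --- that the identity $A \otimes u$ tensored with $A' \otimes u$ gives $(A \otimes A') \otimes u$, and that the tensor of composites is the composite of tensors, both of which reduce to bookkeeping with $\sigma$, $\mu_U$, and the hexagon; (iv) the \emph{coherence isomorphisms} ($\alpha, \lambda, \rho, \sigma$) are inherited from $\cat{C}$ by whiskering with $u$, i.e.\ the associator $A \otimes (B \otimes C) \to (A \otimes B) \otimes C$ in $\cat{C}\restrict{u}$ is $\alpha_{A,B,C} \otimes u$ viewed as a morphism $(A \otimes (B \otimes C)) \otimes U \to (A \otimes B) \otimes C$, and similarly for the others; and (v) the \emph{pentagon, triangle, and hexagon} for $\cat{C}\restrict{u}$ follow from those in $\cat{C}$ because all the new coherence maps are just old ones tensored with $u$, and composition in $\cat{C}\restrict{u}$ agrees with composition in $\cat{C}$ after one cancels the redundant $U$'s via $\mu_U^{-1}$.

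The main obstacle I anticipate is purely notational rather than conceptual: every equation to be checked is an equality of morphisms with source of the form $A_1 \otimes \cdots \otimes A_n \otimes U$ (or with two or three copies of $U$), and the proof amounts to showing that after inserting the various $(\cdots \otimes U \otimes u)^{-1}$ and symmetry maps, the two sides become the same composite in $\cat{C}$. Managing the coherence isomorphisms of $\cat{C}$ carefully --- the paper explicitly warns it will be sloppy about them in this section --- is where an error could hide, so in a full write-up I would either invoke Mac Lane's coherence theorem to suppress associators/unitors entirely, or (cleaner) observe that $\cat{C}\restrict{u}$ is the Kleisli category of the idempotent comonad $-\otimes U$ on $\cat{C}$, equipped with the induced monoidal structure, and cite the standard fact that the Kleisli category of an idempotent (co)monad which is moreover a monoidal comonad inherits a monoidal structure. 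That reframing would turn the entire proposition into a short appeal to known results, with the central idempotent condition precisely supplying the data of a monoidal idempotent comonad; but since the paper seems to want the explicit description, I would present the direct verification and relegate the longer diagram chases to a remark or to the reader.
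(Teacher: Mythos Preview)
Your proposal is correct, but there is no proof in the paper to compare it against: the proposition is stated with a trailing \qed\ and no argument, as part of the background material on tensor topology summarised from the cited references. So any sound verification you give goes beyond what the paper itself supplies here.

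That said, your alternative reframing --- recognising $\cat{C}\restrict{u}$ as the co-Kleisli category of the comonad $-\otimes U$ --- is exactly what the paper records later as Lemma~\ref{thm:coKleisli} (and the preceding lemma establishes that $-\otimes U$ is a comonad). So the cleanest route you sketch is not merely an alternative but is the paper's own structural viewpoint, just deferred to a later section rather than used to justify Proposition~\ref{prop:Cu}. If you present the direct verification, your plan is sound: invoking Mac Lane coherence to suppress associators and unitors is the right move to keep the bookkeeping manageable, and the idempotency of $U\otimes u$ together with naturality of $u$ is precisely what makes every step go through. The only caution is terminological: be careful to say \emph{co-Kleisli} (or Kleisli of a comonad) rather than Kleisli, matching the paper's convention.
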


\begin{remark}\label{rem:Cuequivalent}
	In $\cat{C}\restrict{u}$, any object $A$ is isomorphic to $A \otimes U$: the isomorphism and its inverse are given by the identity $A \otimes U \to A \otimes U$ in $\cat{C}$ and $A \otimes u \otimes u \colon A \otimes U \otimes U \to A$. 
\end{remark}

\begin{example}
  In the category $\cat{C}$ of sheaves over a topological space $X$, central idempotents $u$ correspond to open subsets $U \subseteq X$ as in Example~\ref{ex:cartesian}. The category $\cat{C}\restrict{u}$ is then equivalent to the category of sheaves over $U$.
\end{example}

The intuition of a category $\cat{C}$ `living over' open subsets is further strengthened by the following lemma, that says we can pass between the part of a category living over a larger open subset and the part living over a smaller open subset.

\begin{lemma}\label{lem:adjunctionCu}
  If $u \leq v$ are central idempotents in $\cat{C}$, with $u = v \circ m$, there is an adjunction:
  \[\begin{tikzpicture}
	  \node (l) at (0,0) {$\cat{C}\restrict{u}$};
	  \node (r) at (4,0) {$\cat{C}\restrict{v}$};
	  \draw[draw=none] (l) to node{$\perp$} (r);
	  \draw[->] (l) to[out=15,in=165] node[above]{$\cat{C}\Restrict{u \leq v}$} (r);
	  \draw[->] (r) to[out=-165,in=-15] node[below]{$\cat{C}\restrict{u \leq v}$} (l);
  \end{tikzpicture}\]
  The functor $\cat{C}\restrict{u \leq v}$ is given by $A \mapsto A$ and $f \mapsto f \circ (A \otimes m)$ and is strict monoidal.
  The functor $\cat{C}\Restrict{u \leq v}$ is given by $A \mapsto A \otimes U$ and $f \mapsto (f \otimes U) \circ (A \otimes u \otimes U)^{-1} \circ (A \otimes U \otimes v)$ and is oplax monoidal. The unit of the adjunction is an isomorphism.
\end{lemma}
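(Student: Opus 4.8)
The plan is to build the adjunction from an explicit unit and counit, after first checking that the two displayed assignments really are functors with the stated monoidal behaviour.

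For $\cat{C}\restrict{u\leq v}$ functoriality is quick: it sends the identity $A\otimes v$ on $A$ in $\cat{C}\restrict{v}$ to $(A\otimes v)\circ(A\otimes m)=A\otimes(v\circ m)=A\otimes u$, the identity on $A$ in $\cat{C}\restrict{u}$, and preservation of composition unwinds — using bifunctoriality of $\otimes$, naturality of the unitors and associators, and the defining equations of the central idempotents $u$ and $v$ — to an identity of morphisms $A\otimes U\to C$ in $\cat{C}$ whose only order-specific ingredient is $v\circ m=u$. Strict monoidality is then immediate on objects (tensor products and units agree in $\cat{C}\restrict{u}$ and $\cat{C}\restrict{v}$) and on morphisms reduces to the same kind of bookkeeping against the tensor formula of Proposition~\ref{prop:Cu}. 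The functor $\cat{C}\Restrict{u\leq v}$ is handled the same way, with more coherence isomorphisms in play; the one extra input is that, because $u\leq v$, the morphism $\rho_U\circ(U\otimes v)\colon U\otimes V\to U$ is invertible (it exhibits $u\wedge v=u$), so the $V$-factor that $\cat{C}\Restrict{u\leq v}$ inserts can always be re-absorbed into the $U$-factor. Its oplax structure maps $\cat{C}\Restrict{u\leq v}(A\otimes A')\to\cat{C}\Restrict{u\leq v}(A)\otimes\cat{C}\Restrict{u\leq v}(A')$ and $\cat{C}\Restrict{u\leq v}(I)\to I$ are assembled from the isomorphisms $U\otimes V\cong U\cong U\otimes U$ (valid because $u$ is idempotent and $u\leq v$), the symmetry, and the central idempotent $u\colon U\to I$ itself; naturality and the two oplax coherence axioms then reduce to coherence in $\cat{C}$ together with centrality of $u$.

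For the adjunction I would compare hom-sets directly. A morphism $\cat{C}\Restrict{u\leq v}(A)\to B$ in $\cat{C}\restrict{v}$ is a $\cat{C}$-morphism $(A\otimes U)\otimes V\to B$, and a morphism $A\to\cat{C}\restrict{u\leq v}(B)$ in $\cat{C}\restrict{u}$ is a $\cat{C}$-morphism $A\otimes U\to B$; precomposition with the canonical isomorphism $\theta_A\colon(A\otimes U)\otimes V\to A\otimes U$ built from the associator and $A\otimes(\rho_U\circ(U\otimes v))$ is a bijection between them, and I would verify it is natural in $A\in\cat{C}\restrict{u}$ and in $B\in\cat{C}\restrict{v}$ by expanding the composition law of Proposition~\ref{prop:Cu} on each side. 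Reading the unit off this bijection, $\eta_A\colon A\to A\otimes U$ in $\cat{C}\restrict{u}$ comes out as the $\cat{C}$-identity of $A\otimes U$, which is precisely the isomorphism of Remark~\ref{rem:Cuequivalent}; this gives the final claim that the unit is an isomorphism. (The counit comes out as $\rho_B\circ(B\otimes(u\wedge v))\circ\alpha_{B,U,V}\colon(B\otimes U)\otimes V\to B$, i.e.\ ``multiply out both idempotent factors''.)

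The only genuine work is the naturality check just above — equivalently, the two triangle identities — run against the rather heavy composition and tensor formulas of Proposition~\ref{prop:Cu}. Conceptually everything is forced: passing from ``the part of $\cat{C}$ over $v$'' to ``the part over $u$'' simply folds the $V$-factor into the $U$-factor since $U\otimes V\cong U$, the invertibility axiom for $u$ underlies composition in $\cat{C}\restrict{u}$ and that for $v$ underlies $\cat{C}\restrict{v}$, while the relation $v\circ m=u$ is exactly what makes the two functors mutually inverse up to the $U$-factor. So I expect the verification to be entirely mechanical coherence bookkeeping with no surprises.
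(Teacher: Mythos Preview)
The paper does not actually prove this lemma: its proof is the single line ``See~[3, Lemmas~5.4 and~5.5]'', deferring entirely to the cited reference. So there is no in-paper argument to compare your approach against directly.

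Your direct approach is sound. The key observation---that $u\leq v$ forces $U\otimes V\cong U$ via $\rho_U\circ(U\otimes v)$, so that $\cat{C}$-morphisms $(A\otimes U)\otimes V\to B$ and $A\otimes U\to B$ are in natural bijection---is exactly what drives the adjunction, and your identification of the unit with the $\cat{C}$-identity on $A\otimes U$ (hence with the isomorphism of Remark~\ref{rem:Cuequivalent}) is correct. The functoriality and monoidality checks are, as you say, routine coherence bookkeeping.

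That said, the paper itself later records a more conceptual route you might prefer: Lemma~\ref{thm:coKleisli} observes that $\cat{C}\restrict{u}$ is precisely the co-Kleisli category of the comonad $-\otimes U$ on $\cat{C}\restrict{v}$, and the adjunction $\cat{C}\Restrict{u\leq v}\dashv\cat{C}\restrict{u\leq v}$ is then just the canonical co-Kleisli adjunction. This packages all of your hom-set bijection and triangle-identity work into the standard co-Kleisli construction, and makes the fact that the unit is an isomorphism automatic (the unit of a co-Kleisli adjunction is always invertible). Your argument is essentially an unrolled version of this; the co-Kleisli viewpoint buys you the naturality and triangle checks for free, while your version has the virtue of being self-contained and making the explicit formulae visible.
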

\begin{proof}
  See~\cite[Lemmas~5.4 and~5.5]{barbosaheunen:sheaves}.
\end{proof}

To make the intuition built up so far completely rigorous, we now summarise a series of results saying that any symmetric monoidal category may be regarded as a sheaf of monoidal categories over a base topological space. To state them, we need to introduce mild conditions on the central idempotents being respected by tensor products.

\begin{definition}
  A symmetric monoidal category $\cat{C}$ is called \emph{stiff} when 
  the diagram on the left below
  is a pullback for any object $A$ and central idempotents $u$ and $v$.
  \[
  	\begin{tikzpicture}[xscale=3.5,yscale=1.3]
      \node (tl) at (0,1) {$A \otimes U \otimes V$};
      \node (tr) at (1,1) {$A \otimes V$};
      \node (bl) at (0,0) {$A \otimes U$};
      \node (br) at (1,0) {$A$};
      \draw[->] (tl) to (tr);
      \draw[->] (tl) to (bl);
      \draw[->] (tr) to node[right]{$A \otimes v$} (br);
      \draw[->] (bl) to node[above]{$A \otimes u$} (br);
      \draw (.1,.7) to (.15,.7) to (.15,.825);
    \end{tikzpicture}
    \qquad
		\begin{tikzpicture}[xscale=3.5,yscale=1.3]
	    \node (tl) at (0,1) {$A \otimes U \otimes V$};
	    \node (tr) at (1,1) {$A \otimes V$};
	    \node (bl) at (0,0) {$A \otimes U$};
	    \node (br) at (1,0) {$A \otimes (U \vee V)$};
	    \draw[>->] (tl) to node[above]{} (tr);
	    \draw[>->] (tl) to node[left]{} (bl);
	    \draw[>->] (tr) to node[right]{} (br);
	    \draw[>->] (bl) to node[below]{} (br);
	    \draw (.1,.7) to (.15,.7) to (.15,.825);
	    \draw (.95,.3) to (.90,.3) to (.90,.175);
	  \end{tikzpicture}
	\]
  We say $\cat{C}$ has \emph{finite universal joins} of central idempotents when it has an initial object $0$ satisfying $A \otimes 0 \simeq 0$ for all objects $A$, and $\ZI(\cat{C})$ has binary joins such that the diagram on the right above is a pullback and a pushout for all objects $A$ and central idempotents $u$ and $v$.
\end{definition}

The following theorem says that any stiff monoidal category can be freely completed with universal finite joins of central idempotents~\cite[Theorem~12.8]{barbosaheunen:sheaves}.

\begin{theorem}\label{thm:embedding}
  Any stiff symmetric monoidal category allows a strict monoidal full embedding into a symmetric monoidal category with	finite universal joins of central idempotents.
\end{theorem}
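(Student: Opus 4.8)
The plan is to construct the target category explicitly as a category of formal joins and then verify the universal completion property. First I would define $\cat{D}$ whose objects are finite formal joins $\bigvee_{i} A_i$ of objects $A_i$ of $\cat{C}$, each $A_i$ implicitly regarded as living over a central idempotent (namely the largest one, $1$, so that $A_i \cong A_i \otimes U_i$ makes sense); more precisely, I would take objects to be finite families of pairs and morphisms to be matrices of morphisms in the local categories $\cat{C}\restrict{u}$, with composition given by the composition of Proposition~\ref{prop:Cu} together with the transition functors $\cat{C}\restrict{u \leq v}$ of Lemma~\ref{lem:adjunctionCu} to mediate between different components. The tensor product is extended to formal joins by distributing, $(\bigvee_i A_i) \otimes (\bigvee_j B_j) = \bigvee_{i,j} (A_i \otimes B_j)$, using the monoidal structure on each $\cat{C}\restrict{u}$ and the fact that restriction functors are monoidal.

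The key steps, in order, are: (1) check that $\cat{D}$ is a well-defined category — this is where the coherence bookkeeping deferred in Section~\ref{sec:tensortopology} has to be paid off, using stiffness to ensure the relevant pullback squares let matrix composition be associative and unital; (2) check $\cat{D}$ is symmetric monoidal with the distributive tensor above, again leaning on the monoidal structure of each $\cat{C}\restrict{u}$ and on $A \otimes 0 \simeq 0$ being forced by declaring the empty join to be initial; (3) exhibit the embedding $\cat{C} \to \cat{D}$ sending $A$ to the one-element join on $A$ and show it is full, faithful, and strict monoidal — fullness is immediate since a morphism between one-element joins is just a morphism in $\cat{C}\restrict{1} \simeq \cat{C}$; (4) verify that $\cat{D}$ has finite universal joins of central idempotents: the central idempotents of $\cat{D}$ should be exactly finite joins in $\ZI(\cat{C})$, realised as formal joins of the corresponding subterminal-like objects, and one checks the pullback-pushout square in the definition holds by construction since formal joins are genuine joins in $\cat{D}$; finally (5) check universality — every object of $\cat{D}$ is such a join of objects from the image of $\cat{C}$, and any strict monoidal functor from $\cat{C}$ into a category with finite universal joins extends essentially uniquely along the embedding.

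I expect step (1) — proving $\cat{D}$ is genuinely a category, with associative composition of the join-indexed matrices — to be the main obstacle. The subtlety is that composing a morphism whose component lands over $u$ with one whose component is defined over $v$ requires restricting both along $\cat{C}\restrict{u \wedge v \leq u}$ and $\cat{C}\restrict{u \wedge v \leq v}$, and associativity of three such composites forces a compatibility that is exactly the statement that the square in the stiffness definition is a pullback (so that a morphism over $u$ and a morphism over $v$ agreeing over $u \wedge v$ glue to a unique morphism over $u \vee v$). Getting the coherence isomorphisms to cancel correctly here, rather than the conceptual idea, is the delicate part; since the paper's strategy is to cite~\cite[Theorem~12.8]{barbosaheunen:sheaves}, I would expect the proof to be a short appeal to that reference rather than a reconstruction, with only the observation that stiffness is the precise hypothesis needed for the construction to go through, and that the embedding is strict monoidal and full because on one-element joins nothing changes.
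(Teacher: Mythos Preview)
The paper gives no proof of this theorem at all: it is stated as a summary result and attributed to \cite[Theorem~12.8]{barbosaheunen:sheaves}, exactly as you anticipate in your final paragraph. So as a comparison with the paper's own proof there is nothing further to say.

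Your sketch is in the right spirit for what the cited reference does, but one point is imprecise enough to cause trouble if you actually carried it out. The theorem concerns joins of \emph{central idempotents}, not joins of arbitrary objects, and the completion is not a free-coproduct-style category of formal sums $\bigvee_i A_i$. The objects of the completed category are better described as compatible families: finite tuples $(A_i)_{i}$ with $A_i$ an object over $u_i$ (equivalently in $\cat{C}\restrict{u_i}$), subject to agreement on overlaps $u_i \wedge u_j$; morphisms are then, as you say, matrices of local morphisms, and stiffness is precisely the hypothesis that makes the gluing of such matrices associative (this is where the pullback condition is used). Without the compatibility constraint on objects, your $\cat{D}$ would be too large and the join squares would not be pushouts. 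Your step~(5) on universality is also not part of the statement as given here---the theorem only asserts existence of an embedding, not a universal property---so that step can be dropped.
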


Finally, the following theorem~\cite[Theorem~8.6]{barbosaheunen:sheaves} says that any symmetric monoidal category $\cat{C}$ with universal finite joins has a particularly nice form.
It considers the semilattice of central idempotents $\ZI(\cat{C})$ as the basic opens of a topological space $X$ by taking its Zariski spectrum~\cite[Section~4]{barbosaheunen:sheaves}.

\begin{theorem}\label{thm:sheafrepresentation}
  Any symmetric monoidal category $\cat{C}$ with universal finite joins of central idempotents is monoidally equivalent to a category of global sections of a sheaf $u \mapsto \cat{C}\restrict{u}$ of local monoidal categories over $\ZI(\cat{C})$.
\end{theorem}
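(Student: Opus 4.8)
The plan is to assemble the sheaf on a basis of opens, verify descent there, and then read off its category of global sections and its stalks.

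First I would fix the base space. Following~\cite[Section~4]{barbosaheunen:sheaves}, let $X$ be the Zariski spectrum of the semilattice $\ZI(\cat C)$: its points are the prime filters of $\ZI(\cat C)$, and the sets $\hat u = \{F \in X : u \in F\}$ for $u \in \ZI(\cat C)$ form a basis of opens with $\hat u \cap \hat v = \widehat{u \wedge v}$ and $\hat 0 = \emptyset$. Because $\cat C$ has universal finite joins of central idempotents, $\ZI(\cat C)$ is moreover a bounded lattice and $u \mapsto \hat u$ is a lattice embedding into the frame of opens of $X$, so $\hat u \cup \hat v = \widehat{u \vee v}$ and $\hat 1 = X$. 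On this basis, define the assignment $\hat u \mapsto \cat C\restrict u$ of Proposition~\ref{prop:Cu}, with transition functor the strict monoidal $\cat C\restrict{u \leq v} \colon \cat C\restrict v \to \cat C\restrict u$ of Lemma~\ref{lem:adjunctionCu} whenever $\hat u \subseteq \hat v$. From the explicit formula $f \mapsto f \circ (A \otimes m)$ one checks directly that $\cat C\restrict{v \leq v} \cong \id$ and $\cat C\restrict{u \leq v} \circ \cat C\restrict{v \leq w} \cong \cat C\restrict{u \leq w}$ for $u \leq v \leq w$, so this is a presheaf of symmetric monoidal categories on the basis.

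Next I would verify the sheaf condition. By the comparison lemma for sheaves defined on a basis, in its $2$-dimensional version for presheaves of categories, it is enough to check descent for covers of a basic open by basic opens; these may be taken finite since the basic opens of the Zariski spectrum are quasi-compact~\cite[Section~4]{barbosaheunen:sheaves}, and as $\ZI(\cat C)$ is closed under finite meets a short induction then reduces the task to the empty cover and to binary covers. For the empty cover of $\hat 0$ one needs $\cat C\restrict 0$ to be equivalent to the terminal category, which holds because $A \otimes 0 \simeq 0$ makes every object of $\cat C\restrict 0$ both initial and terminal. For a binary cover $\hat w = \hat u \cup \hat v$, $w = u \vee v$, one must show the square
\[\begin{tikzpicture}[xscale=3.2,yscale=1.2]
  \node (tl) at (0,1) {$\cat C\restrict w$};
  \node (tr) at (1,1) {$\cat C\restrict v$};
  \node (bl) at (0,0) {$\cat C\restrict u$};
  \node (br) at (1,0) {$\cat C\restrict{u \wedge v}$};
  \draw[->] (tl) to node[above]{$\cat C\restrict{v \leq w}$} (tr);
  \draw[->] (tl) to node[left]{$\cat C\restrict{u \leq w}$} (bl);
  \draw[->] (tr) to (br);
  \draw[->] (bl) to (br);
\end{tikzpicture}\]
to be a pseudo-pullback compatible with the monoidal structure, i.e.\ that $\cat C\restrict w$ is equivalent to the iso-comma category whose objects are triples $(A, B, \varphi)$ with $A$ in $\cat C\restrict u$, $B$ in $\cat C\restrict v$ and $\varphi$ an isomorphism of their images in $\cat C\restrict{u \wedge v}$, and similarly on hom-sets, all respecting tensors. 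The comparison functor sends an object of $\cat C\restrict w$ to its two restrictions; its essential surjectivity and fullness amount to gluing matching data over $u$ and $v$ to data over $w$, and this is exactly what the two conditions packaged into ``finite universal joins'' supply — the pullback part of the square $A \otimes (U \vee V)$ reconstructs the glued object and morphisms, and the pushout part makes the gluing unique, giving faithfulness. I expect this to be the main obstacle: pushing the coherence isomorphisms of Proposition~\ref{prop:Cu} and the (op)lax structure maps of Lemma~\ref{lem:adjunctionCu} through these pullback and pushout diagrams, and checking that the gluing is monoidal, is the only genuinely laborious part, whereas the remaining verifications are routine bookkeeping.

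Finally I would identify the global sections and the stalks. Extending the presheaf to a sheaf $\tilde{\cat C}$ on all of $X$ leaves its values on the basis unchanged, so $\Gamma(X, \tilde{\cat C}) = \tilde{\cat C}(\hat 1) \simeq \cat C\restrict 1$; and $\cat C\restrict 1 \cong \cat C$ as symmetric monoidal categories, since there $U = I$ and every structure map of Proposition~\ref{prop:Cu} reduces via the unitors to the corresponding one in $\cat C$. The stalk of $\tilde{\cat C}$ at a point $F \in X$ is the filtered colimit $\colim_{u \in F} \cat C\restrict u$, and one checks that it is \emph{local} in the sense of tensor topology — its semilattice of central idempotents admits no cover of the top by two strictly smaller elements — precisely because $F$ is a prime filter. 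Hence $\cat C$ is monoidally equivalent to the category of global sections of a sheaf of local monoidal categories over $\ZI(\cat C)$, as claimed.
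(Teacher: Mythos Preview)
The paper does not prove this theorem: it is quoted verbatim as \cite[Theorem~8.6]{barbosaheunen:sheaves}, and Section~\ref{sec:tensortopology} merely summarises the result as background. So there is no in-paper proof to compare your proposal against.

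That said, your sketch is the expected line of argument for such a representation theorem and matches the architecture of the cited source: build the Zariski spectrum of $\ZI(\cat C)$, define the basic presheaf $\hat u \mapsto \cat C\restrict u$ with the strict monoidal restriction functors, reduce the sheaf condition to finite (hence binary and empty) covers via quasi-compactness of basic opens, and then identify $\Gamma \simeq \cat C\restrict 1 \simeq \cat C$ and check locality of the stalks from primality of the filter. You also correctly locate the real work: the pseudo-pullback square for a binary cover $w = u \vee v$ is exactly what the bicartesian square in the definition of \emph{finite universal joins} is designed to deliver, and pushing the monoidal coherence through that gluing is the laborious part. One small caution: to conclude that $\hat u \cup \hat v = \widehat{u \vee v}$ you are implicitly using that $\ZI(\cat C)$ is a \emph{distributive} lattice, not merely a bounded one; this does follow from universal finite joins (the pullback-and-pushout square forces $u \wedge (v_1 \vee v_2) = (u \wedge v_1) \vee (u \wedge v_2)$), but you should say so rather than asserting the embedding directly.
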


Here, a monoidal category $\cat{C}$ is called local when $u \vee v=1$ implies $u=1$ or $v=1$ in $\ZI(\cat{C})$. When $\ZI(\cat{C})$ is the opens of a topological space, that means there is a single focal point that all nets in the topological space converge to -- intuitively, $\cat{C}$ is local when it has no nontrivial central idempotents. Being a sheaf of local monoidal categories means that the stalks $\cat{C}\restrict{x} = \colim_{x \in u} \cat{C}\restrict{u}$ over points $x \in X$ are local monoidal categories.

It follows that any stiff symmetric monoidal category embeds into such a category of global sections. This makes precise the intuition that a symmetric monoidal category continuously varies over its base space of central idempotents.

\section{Localisable monads}\label{sec:localisablemonads}

The previous section showed how any symmetric monoidal category $\cat{C}$ may be regarded as a sheaf $\cat{C}\restrict{u}$ of local ones. In this section, we work out when a monad on $\cat{C}$ corresponds to a sheaf of monads on $\cat{C}\restrict{u}$. The crucial definition is as follows.

\begin{definition}\label{def:localisablemonad}
  A monad $T$ on a monoidal category $\cat{C}$ is called \emph{localisable} when there are morphisms $\str_{A,U} \colon T(A) \otimes U \to T(A \otimes U)$ for each object $A$ and central idempotent $u \colon U \to I$ satisfying:
  \begin{align}
    T(\rho_A) \circ \str_{A,I} &= \rho_{T(A)} \label{eq:st:unitor} \\
    T(\alpha_{A,U,V}) \circ \str_{A,U \otimes V} & = \str_{A \otimes U,V} \circ (\str_{A,U} \otimes V) \circ \alpha_{TA,U,V} \label{eq:st:associator} \\
    \eta_{A \otimes U} &= \str_{A,U} \circ (\eta_A \otimes U) \label{eq:st:unit} \\
    \mu_{A \otimes U} \circ T(\str_{A,U}) \circ \str_{T(A),U} &= \str_{A,U} \circ (\mu_A \otimes U) \label{eq:st:multiplication} \\
    \str_{A,V} \circ (T(A) \otimes m) & = T(A \otimes m) \circ \str_{A,U}
    \label{eq:st:naturalinu} \\
    \str_{B,U} \circ \big( T(f) \otimes U \big) &= T(f \otimes U) \circ \str_{A,U} \label{eq:st:naturalina} 
  \end{align}
  for any morphism $f \colon A \to B$ and central idempotents $u \colon U \to I$ and $v \colon V \to I$, and where $m \colon U \to V$ in~\eqref{eq:st:naturalinu} satisfies $u = v \circ m$.
\end{definition}

\begin{example}\label{ex:closureoperator}
  Consider a semilattice $(L,\wedge,1)$	as a symmetric monoidal category $\cat{C}$ as in Example~\ref{ex:semilattice}. A monad on $\cat{C}$ then is exactly a closure operator on $L$, that is, a function $\overline{(-)} \colon L \to L$ satisfying $u \leq \overline{u} = \overline{\overline{u}}$ and $u \leq v \implies \overline{u} \leq \overline{v}$.
  This monad is localisable if and only if $\overline{u} \wedge v \leq \overline{u \wedge v}$ for all $u,v \in L$.
  This is for example the case when $L$ is the powerset of a set $X$, and $\overline{U}$ is the closure of $U \subseteq X$ in a fixed topology on $X$.
\end{example}

\begin{example}\label{ex:strong}
  Strong monads~\cite{kock:monoidalmonads,jacobs:weakeningcontraction} are localisable: axioms~\eqref{eq:st:unitor}--\eqref{eq:st:multiplication} are a special case of the axioms for a strong monad; and axioms~\eqref{eq:st:naturalinu}--\eqref{eq:st:naturalina} follow from naturality of strength.
  Hence a monad $T$ on a symmetric monoidal closed category is localisable if $T(U \multimap A) \simeq T(U) \multimap T(A)$, namely with $\str_{A,U}$ as follows (where $\mathrm{coev}$ denotes the curry of the identity on $A\otimes U$)
  \[\begin{tikzcd}[column sep=15mm, row sep=2mm]
    {T(A) \otimes U} & {T\big(U \multimap (A \otimes U)\big) \otimes T(U)} \\
    {T(A \otimes U)} & {\big(T(U) \multimap T(A \otimes U)\big) \otimes T(U)}
    \arrow["{T(\mathrm{coev}) \otimes \eta}", from=1-1, to=1-2]
    \arrow[Rightarrow, no head, from=1-2, to=2-2]
    \arrow["{\mathrm{ev}}", from=2-2, to=2-1]
  \end{tikzcd}\]
\end{example}

\begin{example}
  It follows from Example~\ref{ex:strong} and~\cite{kock:cartesianclosedmonads} that a monad $T$ on a cartesian closed category is localisable as soon as $T(A \times B) \simeq T(A) \times T(B)$. In particular, this applies for any monad on the category of sheaves over a topological space $X$ as in Example~\ref{ex:cartesian}.
\end{example}

We will work out more examples in Section~\ref{sec:examples} below.
Next we consider the main consequence of a monad on $\cat{C}$ being localisable: it restricts to the categories $\cat{C}\restrict{u}$.

\begin{proposition}\label{prop:smallmonad}
  If $T$ is a localisable monad on $\cat{C}$ and $u$ a central idempotent, the following defines a monad $T\restrict{u}$ on $\cat{C}\restrict{u}$:
  \begin{align*}
    T\restrict{u}(A) &= T(A) & & (\eta\restrict{u})_A = \eta_A \otimes u \\
    T\restrict{u}\big(f \colon A \otimes U \to B\big) &= T(f) \circ \str_{A,U} & & (\mu\restrict{u})_A = \mu_A \otimes u    
  \end{align*}
\end{proposition}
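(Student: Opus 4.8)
The plan is to verify directly that the stated data forms a monad on $\cat{C}\restrict{u}$, organised around the observation that $\cat{C}\restrict{u}$, as described in Proposition~\ref{prop:Cu}, is the coKleisli category of the comonad $G = (-)\otimes U$ on $\cat{C}$: its counit at $A$ is $A\otimes u$ and its comultiplication at $A$ is the inverse of $A\otimes U\otimes u$ (throughout modulo suppressed coherence isomorphisms), its comonad laws being forced by the central-idempotent conditions on $u$. Under this identification the family $\str_{-,U}\colon GT \Rightarrow TG$ is a distributive law of the monad $T$ over the comonad $G$: naturality in $A$ is~\eqref{eq:st:naturalina}; compatibility with $\eta$ and with $\mu$ is~\eqref{eq:st:unit} and~\eqref{eq:st:multiplication}; compatibility with the counit of $G$ is~\eqref{eq:st:naturalinu} (with $v=1$ and $m=u$) followed by~\eqref{eq:st:unitor}; and compatibility with the comultiplication of $G$ is~\eqref{eq:st:associator} with $V \mathrel{:=} U$, once the comultiplication is unfolded using $\alpha$ and the idempotence isomorphism $U\otimes U\cong U$ (this is also where~\eqref{eq:st:naturalinu} enters, applied to that isomorphism). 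Then $T\restrict{u}$ is the standard lift of $T$ along this distributive law to the coKleisli category of $G$, and unwinding that lift reproduces the formulas in the statement --- the unit and multiplication being the coKleisli images of $\eta_A$ and $\mu_A$, which modulo coherence are exactly $\eta_A\otimes u$ and $\mu_A\otimes u$.

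For a self-contained argument one performs the same checks by hand. Functoriality on identities amounts to $T\restrict{u}(A\otimes u) = T(A\otimes u)\circ\str_{A,U} = T(A)\otimes u$, which is~\eqref{eq:st:naturalinu} (with $v=1$) followed by~\eqref{eq:st:unitor}. For functoriality on composites one expands the composition formula of $\cat{C}\restrict{u}$ on both sides, uses~\eqref{eq:st:naturalina} to slide $T(f)$ past $\str_{B,U}$, and matches the nested strengths $\str_{A\otimes U,U}\circ(\str_{A,U}\otimes U)$ against $\str_{A,U\otimes U}$ via~\eqref{eq:st:associator}. Naturality of $\eta\restrict{u}$ reduces, after expanding composition, to naturality of $\eta$ in $\cat{C}$ together with~\eqref{eq:st:unit}; naturality of $\mu\restrict{u}$ reduces similarly to naturality of $\mu$ together with~\eqref{eq:st:multiplication}, once one has computed $(T\restrict{u})^2(f) = T^2(f)\circ T(\str_{A,U})\circ\str_{T(A),U}$. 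Finally the left unit, right unit and associativity laws for $(\eta\restrict{u},\mu\restrict{u})$ follow from the corresponding laws for $(\eta,\mu)$: after using the axioms above to bring the strength maps into position, the extra tensor factor of $U$ and the maps $A\otimes u$ cancel by idempotence, leaving the identity at the level of $\cat{C}$.

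The step I expect to be the main obstacle is functoriality on composites --- equivalently, compatibility of $\str_{-,U}$ with the comultiplication of $G$. There one must identify two copies of $\str$ whose sources are associated differently and which on the face of it concern the central idempotent $u\wedge v$ (with $v=u$) rather than $u$ itself; reconciling them calls on~\eqref{eq:st:associator} together with careful bookkeeping of coherence isomorphisms and of the idempotence isomorphism $U\otimes U\cong U$. Every other axiom is a short diagram chase once those isomorphisms are tracked.
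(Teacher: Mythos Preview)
Your proposal is correct. The direct verification in your second paragraph is essentially what the paper does: it too says the proof is ``mainly a matter of unwinding definitions and being careful in which category compositions are taken,'' and it exhibits two sample diagrams (the left unit law and naturality of $\eta\restrict{u}$), leaving the rest as ``similar.''

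Your first paragraph, however, takes a genuinely different and more structured route. Recognising $\cat{C}\restrict{u}$ as the coKleisli category of the comonad $(-)\otimes U$ and $\str_{-,U}$ as a mixed distributive law packages all the monad-law verifications into the four standard distributive-law axioms, each of which you correctly match to one of~\eqref{eq:st:unitor}--\eqref{eq:st:naturalina}. The paper does not take this viewpoint here; it only observes the coKleisli description later (Lemma~\ref{thm:coKleisli}), and never makes the distributive-law structure explicit. Your framing explains \emph{why} the axioms of Definition~\ref{def:localisablemonad} are exactly the right ones for $T\restrict{u}$ to be a monad, whereas the paper's proof treats them as ad hoc ingredients in a diagram chase. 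On the other hand, the paper's direct approach avoids importing the general theory of mixed distributive laws and their coKleisli lifts, which may be preferable for a reader unfamiliar with that machinery. Your identification of the comultiplication-compatibility (equivalently, functoriality on composites) as the step requiring the most care---combining~\eqref{eq:st:associator} with~\eqref{eq:st:naturalinu} applied to the idempotence isomorphism $U\otimes U\cong U$---is accurate.
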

\begin{proof}
  This is mainly a matter of unwinding definitions and being careful in which category compositions are taken. For example, the unit law $(\mu\restrict{u})_A \circ (\eta\restrict{u})_{T\restrict{u}(A)} = T(A)$ in $\cat{C}\restrict{u}$ comes down to the following diagram commuting in $\cat{C}$:
  \[\begin{tikzcd}[column sep=18mm,row sep=7mm]
  	{T(A) \otimes U \otimes U} & {T^2(A) \otimes U} \\
  	{T(A) \otimes U} & {T(A)}
  	\arrow["{\eta_{TA} \otimes U}"{description}, from=2-1, to=1-2]
  	\arrow["{\mu_A \otimes u}", from=1-2, to=2-2]
  	\arrow["{\eta_{TA} \otimes u \otimes U}", from=1-1, to=1-2]
  	\arrow["{T(A) \otimes u}"', from=2-1, to=2-2]
  	\arrow["{T(A) \otimes (u \otimes U)^{-1}}", from=2-1, to=1-1]
  \end{tikzcd}\]
  Similarly, naturality of $\eta\restrict{u}$, which is $T\restrict{u}(f) \circ (\eta\restrict{u})_A = (\eta\restrict{u})_B \circ f$ in $\cat{C}\restrict{u}$, comes down to commutativity of the following diagram in $\cat{C}$:
  \[\begin{tikzcd}[column sep = 18 mm]
  	{A \otimes U \otimes U} & {T(A) \otimes U \otimes I} & {T(A) \otimes U} \\
  	{A \otimes U \otimes U} & {T(A \otimes U) \otimes I} & {T(A \otimes U)} \\
  	{B \otimes U} & {T(B) \otimes I} & {T(B)}
  	\arrow["{\eta_A \otimes U \otimes u}", from=1-1, to=1-2]
  	\arrow["{\rho_{T(A) \otimes U}}", from=1-2, to=1-3]
  	\arrow["{\str_{A,U}}", from=1-3, to=2-3]
  	\arrow["{T(f)}", from=2-3, to=3-3]
  	\arrow["{\eta_B \otimes u}"', from=3-1, to=3-2]
  	\arrow["{\rho_{T(B)}}"', from=3-2, to=3-3]
  	\arrow["{\str_{A,U} \otimes I}"', from=1-2, to=2-2]
  	\arrow["{\rho_{T(A \otimes U})}"', from=2-2, to=2-3]
  	\arrow[Rightarrow, no head, from=1-1, to=2-1]
  	\arrow["{f \otimes U}"', from=2-1, to=3-1]
  	\arrow[from=2-1, to=2-2]
  	\arrow["{T(f) \otimes I}"', from=2-2, to=3-2]
  \end{tikzcd}\]
  Here the upper left square follows from~\eqref{eq:st:unit}, the right squares are naturality of unitors, and the lower left square is naturality of $\eta$ in $\cat{C}$.
  The other laws are verified similarly.
\end{proof}

\begin{example}
  Consider a closure operator $T(u)=\overline{u}$ on a semilattice $\cat{C}=L$ as in Example~\ref{ex:closureoperator}. Then $T_u(a)$ is simply $\overline{a}$. This is a well-defined closure operator on the pre-order $\cat{C}\restrict{u}$: if $a \wedge u \leq b$, then $\overline{a} \wedge u \leq \overline{a \wedge u} \leq \overline{b}$ because $T$ is localisable.
  Collapsing the pre-order $\cat{C}\restrict{u}$ to a partially ordered semilattice as in Remark~\ref{rem:Cuequivalent} simply gives the downset $\downset u = \{ a \in L \mid a \leq u\}$ of $u$ in $L$, and $T_u$ just becomes the restriction of the closure operator to $\downset u$.
\end{example}

Recall that a \emph{(lax) monad morphism}~\cite{street:formalmonads} from a monad $(S,\eta^S,\mu^S)$ on $\cat{C}$ to a monad $(T,\eta^T,\mu^T)$ on $\cat{D}$ consists of a functor $F \colon \cat{C} \to \cat{D}$ and a natural transformation $\varphi \colon T \circ F \Rightarrow F \circ S$ making the following two diagrams commute:
\begin{equation}\label{eq:monadmorphism}\begin{aligned}
\begin{tikzcd}[column sep=4mm, row sep=4mm]
F && {T \circ F} \\
\\
& {F \circ S}
\arrow["{\eta_F^T}", Rightarrow, from=1-1, to=1-3]
\arrow["\varphi", Rightarrow, from=1-3, to=3-2]
\arrow["{F\eta^S}"', Rightarrow, from=1-1, to=3-2]
\end{tikzcd}
\qquad\qquad
\begin{tikzcd}[column sep=7mm, row sep=8mm]
  {T^2 \circ F} & {T \circ F \circ S} & {F \circ S^2} \\
  {T \circ F} && {F \circ S}
  \arrow["{\mu^T_F}"', Rightarrow, from=1-1, to=2-1]
  \arrow["\varphi"', Rightarrow, from=2-1, to=2-3]
  \arrow["{F\mu^S}", Rightarrow, from=1-3, to=2-3]
  \arrow["{\varphi S}", Rightarrow, from=1-2, to=1-3]
  \arrow["{T\varphi}", Rightarrow, from=1-1, to=1-2]
\end{tikzcd}
\end{aligned}\end{equation}
Monads on $\cat{C}$ and their (lax) morphisms form a category $\cat{Monad}(\cat{C})$. An \emph{oplax monad morphism} has $\psi \colon F \circ S \Rightarrow T \circ F$ that respects units and multiplication instead of $\varphi$.

\begin{lemma}\label{lem:monadmorphism}
  Let $T$ be a localisable monad on $\cat{C}$.
  If $u \leq v$ are central idempotents, then the functor $\cat{C}\restrict{u \leq v}$ from Lemma~\ref{lem:adjunctionCu} is a (lax) monad morphism $T\restrict{v} \to T\restrict{u}$ with $\varphi_A = T(A) \otimes u$.
\end{lemma}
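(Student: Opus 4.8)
The plan is to verify directly that the data $(F,\varphi)$ with $F = \cat{C}\restrict{u\leq v}$ and $\varphi_A = T(A)\otimes u$ satisfies the three conditions for a lax monad morphism: naturality of $\varphi$ as a transformation $T\restrict{u}\circ F \Rightarrow F\circ T\restrict{v}$, and the two coherence diagrams in~\eqref{eq:monadmorphism} relating $\varphi$ to the units and multiplications of $T\restrict{v}$ and $T\restrict{u}$. Since $F$ acts as the identity on objects, both composite endofunctors $T\restrict{u}\circ F$ and $F\circ T\restrict{v}$ send $A$ to $T(A)$, so $\varphi_A = T(A)\otimes u$ is at least a morphism of the right type $T(A)\otimes U \to T(A)$ in $\cat{C}$, i.e.\ a morphism $T(A)\to T(A)$ in $\cat{C}\restrict{u}$. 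The whole proof is then, as with Proposition~\ref{prop:smallmonad}, a matter of unwinding which category each composite is taken in and translating each required equation into a diagram in $\cat{C}$.

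First I would spell out the two composite functors on morphisms. Given $f\colon A\otimes V\to B$, a morphism in $\cat{C}\restrict{v}$, the functor $F$ sends it to $f\circ(A\otimes m)\colon A\otimes U\to B$ (Lemma~\ref{lem:adjunctionCu}), and then $T\restrict{u}$ of that is $T\big(f\circ(A\otimes m)\big)\circ\str_{A,U}$; on the other side, $T\restrict{v}(f) = T(f)\circ\str_{A,V}$, and then $F$ of that is $\big(T(f)\circ\str_{A,V}\big)\circ(T(A)\otimes m)$. Naturality of $\varphi$ in $\cat{C}\restrict{u}$ unwinds to the equation $T\big(f\circ(A\otimes m)\big)\circ\str_{A,U}\circ(\cdots) = \varphi_B\circ_{u}\big(T(f)\circ\str_{A,V}\big)\circ(T(A)\otimes m)$ in $\cat{C}$, and the key inputs are axiom~\eqref{eq:st:naturalinu} (which relates $\str_{A,U}$ and $\str_{A,V}$ along $m$), axiom~\eqref{eq:st:naturalina} (functoriality of $\str$ in the first slot, for $T(f)$), functoriality of $T$, and the definition of composition in $\cat{C}\restrict{u}$ from Proposition~\ref{prop:Cu} (which inserts a factor of $(A\otimes U\otimes u)^{-1}$ and must be tracked carefully). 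For the unit triangle, the left leg is $F(\eta\restrict{v})_A = (\eta_A\otimes v)\circ(A\otimes m)$ reinterpreted suitably, the right leg factors through $(\eta\restrict{u})_{T\restrict{u}FA} = \eta_{TA}\otimes u$ followed by $\varphi_A$, and the identity reduces to an instance of~\eqref{eq:st:unit} together with the interaction of $u$, $v$ and $m$ via $u = v\circ m$. The multiplication square similarly reduces to~\eqref{eq:st:multiplication} together with the analogous bookkeeping of the idempotents; here one uses that $\mu\restrict{u}$ and $\mu\restrict{v}$ are $\mu_A\otimes u$ and $\mu_A\otimes v$ respectively.

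The main obstacle is purely notational rather than conceptual: one must be scrupulous about the difference between composition in $\cat{C}$, in $\cat{C}\restrict{u}$, and in $\cat{C}\restrict{v}$, since each $\cat{C}\restrict{w}$-composite secretly carries an extra $(-\otimes W\otimes w)^{-1}$, and $\varphi$, being the component of a natural transformation between functors into $\cat{C}\restrict{u}$, is itself a $\cat{C}\restrict{u}$-morphism. The cleanest route is probably to use Remark~\ref{rem:Cuequivalent} to work up to the equivalence $A\simeq A\otimes U$ in $\cat{C}\restrict{u}$, so that all the coherence isomorphisms and the $m$'s collapse, reducing each of the three diagrams to one of the strength axioms~\eqref{eq:st:unit}, \eqref{eq:st:multiplication}, \eqref{eq:st:naturalinu}, \eqref{eq:st:naturalina} essentially verbatim. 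Given the level of detail already shown in the proof of Proposition~\ref{prop:smallmonad}, I would present one of the three verifications (most likely the multiplication square, as it uses the least trivial axiom) as a commuting diagram in $\cat{C}$ and state that the other two are analogous.
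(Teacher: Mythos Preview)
Your overall strategy---unwind each required equation to a commuting diagram in $\cat{C}$---is exactly what the paper does. But you have misdiagnosed which localisable-monad axioms do the work, and this stems from missing one observation: $\varphi_A = T(A)\otimes u$ is precisely the \emph{identity} morphism on $T(A)$ in $\cat{C}\restrict{u}$ (recall from Proposition~\ref{prop:Cu} that the identity on $B$ in $\cat{C}\restrict{u}$ is $B\otimes u$). Once you see this, the two coherence triangles of~\eqref{eq:monadmorphism} collapse dramatically: since $\varphi$ is the identity and $T\restrict{u}$ is already known to be a functor (Proposition~\ref{prop:smallmonad}), the unit condition becomes $F((\eta\restrict{v})_A) = (\eta\restrict{u})_{FA}$, i.e.\ $(\eta_A\otimes v)\circ(A\otimes m) = \eta_A\otimes u$, which is just $u = v\circ m$ and bifunctoriality; similarly the multiplication square reduces to $(\mu_A\otimes v)\circ(T^2(A)\otimes m) = \mu_A\otimes u$. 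Neither~\eqref{eq:st:unit} nor~\eqref{eq:st:multiplication} is invoked. (Incidentally, you wrote $(\eta\restrict{u})_{T\restrict{u}FA} = \eta_{TA}\otimes u$; the correct component is $(\eta\restrict{u})_{FA} = \eta_A\otimes u$.)

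The only place a strength axiom is genuinely needed is naturality of $\varphi$, which since $\varphi$ is the identity amounts to showing $T\restrict{u}\circ F$ and $F\circ T\restrict{v}$ agree on morphisms. Your computation there is right: for $f\colon A\otimes V\to B$ the two sides are $T(f)\circ T(A\otimes m)\circ\str_{A,U}$ and $T(f)\circ\str_{A,V}\circ(T(A)\otimes m)$, and these coincide by~\eqref{eq:st:naturalinu} alone; axiom~\eqref{eq:st:naturalina} is not required. This matches the paper, which attributes the whole lemma to~\eqref{eq:st:naturalinu} and bifunctoriality of the tensor. So your plan would succeed, but the detour through~\eqref{eq:st:unit} and~\eqref{eq:st:multiplication} is illusory, and presenting the multiplication square as the ``least trivial'' verification would be misleading---it is in fact the naturality of $\varphi$ that carries the content.
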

\begin{proof}
	Here we need to show the naturality of $\varphi$ and the commutativity of the diagrams~\eqref{eq:monadmorphism}. These directly follow from \eqref{eq:st:naturalinu}, bifunctoriality of the tensor product and a few commuting diagrams that can be found in Appendix~\ref{sec:proofs}.
\end{proof}

If $F \colon \cat{C} \to \cat{D}$ with $\varphi \colon T \circ F \Rightarrow F \circ S$ is a (lax) monad morphism between localisable monads $S$ and $T$, and $F$ is a (lax) monoidal functor with $\theta_{A,B} \colon F(A) \otimes F(B) \to F(A \otimes B)$, we say $(F,\varphi,\theta)$ is a (lax) \emph{morphism of localisable monads} when the following diagram commutes:
\[\begin{tikzcd}
	{TF(A) \otimes F(U)} && {T\big(F(A) \otimes F(U)\big)} && {TF(A \otimes U)} \\
	{FS(A) \otimes F(U)} && {F\big(S(A) \otimes U\big)} && {FS(A \otimes U)}
	\arrow["{\str_{FA,FU}}", from=1-1, to=1-3]
	\arrow["{T(\theta_{A,U})}", from=1-3, to=1-5]
	\arrow["{\varphi_{A,U}}", from=1-5, to=2-5]
	\arrow["{\theta_{S(A),U}}"', from=2-1, to=2-3]
	\arrow["{\varphi_A \otimes F(U)}"', from=1-1, to=2-1]
	\arrow["{\str_{A,U}}"', from=2-3, to=2-5]
\end{tikzcd}\]
In this sense, the monad morphism $T\restrict{v} \to T\restrict{u}$ of Lemma~\ref{lem:monadmorphism} is localisable.

\begin{corollary}
  If $T$ is a localisable monad on $\cat{C}$, and $u \leq v$ are central idempotents, then the functor $\cat{C}\Restrict{u \leq v}$ from Lemma~\ref{lem:adjunctionCu} is an oplax monad morphism $T\restrict{u} \to T\restrict{v}$ with $\psi_A = \str_{A,U}$.
\end{corollary}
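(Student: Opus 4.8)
The plan is to deduce this from Lemma~\ref{lem:monadmorphism} by passing to mates. Write $F = \cat{C}\Restrict{u \leq v}$ and $G = \cat{C}\restrict{u \leq v}$ for the two functors of Lemma~\ref{lem:adjunctionCu}, so that $F \dashv G$ with unit $\eta$ (an isomorphism) and counit $\epsilon$. By Lemma~\ref{lem:monadmorphism}, $G$ carries a lax monad-morphism structure $\varphi \colon T\restrict{u} \circ G \Rightarrow G \circ T\restrict{v}$ with $\varphi_A = T(A) \otimes u$. A routine mate (doctrinal-adjunction) argument then shows that the left adjoint of a lax monad morphism is canonically an oplax monad morphism $T\restrict{u} \to T\restrict{v}$, with structure the mate
\[ \psi \;=\; (\epsilon\, T\restrict{v}\, F) \circ (F\, \varphi\, F) \circ (F\, T\restrict{u}\, \eta) \colon F \circ T\restrict{u} \Rightarrow T\restrict{v} \circ F . \]
So the only remaining point is to identify this mate with $\str$.

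For that I would unwind the composite at an object $A$. Since $\eta_A$ is the identity $A \otimes U \to A \otimes U$ of $\cat{C}$ regarded as a morphism of $\cat{C}\restrict{u}$ (Remark~\ref{rem:Cuequivalent}), Proposition~\ref{prop:smallmonad} gives $T\restrict{u}(\eta_A) = \str_{A,U}$; feeding this through the formula of Lemma~\ref{lem:adjunctionCu} for the action of $F$ on morphisms and for $\epsilon$, and cancelling the coherence isomorphisms against the central-idempotent maps $U \otimes u$ and $U \otimes v$, collapses the composite to $\psi_A = \str_{A,U}$ viewed as a morphism $T(A) \otimes U \to T(A \otimes U)$ of $\cat{C}\restrict{v}$. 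Alternatively, and without invoking doctrinal adjunction as a black box, one can set $\psi_A = \str_{A,U}$ from the start and verify by hand the naturality of $\psi$ together with the two diagrams dual to~\eqref{eq:monadmorphism}: naturality reduces to~\eqref{eq:st:naturalina} and~\eqref{eq:st:naturalinu}, the unit diagram to~\eqref{eq:st:unit} (together with~\eqref{eq:st:unitor}), and the multiplication diagram to~\eqref{eq:st:multiplication}, in each case after rewriting the composites of $\cat{C}\restrict{u}$ and $\cat{C}\restrict{v}$ inside $\cat{C}$.

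I expect the main obstacle to be the bookkeeping rather than anything conceptual. A morphism $X \to Y$ of $\cat{C}\restrict{w}$ is a $\cat{C}$-morphism $X \otimes W \to Y$, with composition twisted by $(X \otimes W \otimes w)^{-1}$ and identities $X \otimes w$, while $\cat{C}\Restrict{u \leq v}$ sends a morphism $f$ to the zigzag $(f \otimes U) \circ (A \otimes u \otimes U)^{-1} \circ (A \otimes U \otimes v)$; so almost all of the work is in translating each required identity into a commuting diagram in $\cat{C}$ and tracking which tensor factor plays which role. Once this is done, the diagrams close using only the axioms of Definition~\ref{def:localisablemonad} and symmetric monoidal coherence; in particular, exactly as for Lemma~\ref{lem:monadmorphism}, no stiffness or join hypothesis on $\cat{C}$ is needed, and the fact that the unit of the adjunction is an isomorphism is what keeps the mate computation short. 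I would relegate the remaining routine diagram chases to Appendix~\ref{sec:proofs}, matching the treatment of Lemma~\ref{lem:monadmorphism}.
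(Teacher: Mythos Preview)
Your proposal is correct and follows essentially the same route as the paper: the paper invokes \cite[Theorem~9]{street:formalmonads} to pass from the lax monad morphism $(G,\varphi)$ of Lemma~\ref{lem:monadmorphism} to an oplax structure on its left adjoint via the mate, and then computes that mate to be $\str_{A,U}$. Your additional alternative of verifying the oplax axioms directly from \eqref{eq:st:unit}--\eqref{eq:st:naturalina} is not in the paper but is a perfectly valid bypass of the black-box citation.
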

\begin{proof}
  Applying~\cite[Theorem~9]{street:formalmonads} to Lemmas~\ref{lem:adjunctionCu} and~\ref{lem:monadmorphism}, we can compute $\psi$ as follows.
  By the adjunction, $\varphi_A \colon T\restrict{u}(\cat{C}\restrict{u \leq v}(\cat{C}\Restrict{u \leq v}(A))) \to \cat{C}\restrict{u \leq v}(T\restrict{v}(\cat{C}\Restrict{u \leq v}(A)))$ corresponds to a morphism 
  \[
    \cat{C}\Restrict{u \leq v}(T\restrict{u}(\cat{C}\restrict{u \leq v}(\cat{C}\Restrict{u \leq v}(A)))) 
    \to
    T\restrict{v}(\cat{C}\Restrict{u \leq v}(A))
  \]
  and $\psi_A \colon \cat{C}\Restrict{u \leq v}(T\restrict{u}(A)) \to T\restrict{v}(\cat{C}\Restrict{u \leq v}(A))$ is obtained by precomposing this morphism with the unit $A \to \cat{C}\restrict{u \leq v}(\cat{C}\Restrict{u \leq v}(A))$ of the adjunction. 
  Starting with $\varphi_A = T(A) \otimes u$, this gives exactly $\psi_A = \str_{A,U}$.
\end{proof}

\begin{remark}\label{rem:stalks}
  If $T$ is a localisable monad on a stiff symmetric monoidal category $\cat{C}$, and $x$ is a point of $\ZI(\cat{C})$ regarded as a topological space, we can go further and define a monad $T\restrict{x}$ on the stalk $\cat{C}\restrict{x}$. 
  The stalk $\cat{C}\restrict{x}$ is defined as the colimit of the diagram $\cat{C}\restrict{u \leq v} \colon \cat{C}\restrict{v} \to \cat{C}\restrict{u}$ ranging over all central idempotents $u \leq v$ containing the point $x$, taken in the category of symmetric monoidal categories.
  Accordingly, $T\restrict{x}$ is the colimit over the same diagram, but now taken in the category of localisable monads.
  Using the concrete description in~\cite[Definition~7.1]{barbosaheunen:sheaves} of these stalks, we can compute:
  \begin{align*}
    T\restrict{x}(A) &= T(A) & & (\eta\restrict{x})_A = [1,\eta_A \circ \rho_A] \\
    T\restrict{x}\big([u,f \colon A \otimes U \to B]\big) &= [u,T(f) \circ \str_{A,U}] & & (\mu\restrict{x})_A = [1,\mu_A \circ \rho_{T^2(A)}]
  \end{align*}
  If $x \in u$ there is a localisable monad morphism $T\restrict{u} \to T\restrict{x}$ formed by the functor $\cat{C}\restrict{x \in u} \colon \cat{C}\restrict{u} \to \cat{C}\restrict{x}$ given by $\cat{C}\restrict{x \in u}(A)=A$ and $\cat{C}\restrict{x \in u}(f \colon A \to B) = [u,f]$ with the identity natural transformation $\varphi \colon T\restrict{u} \circ \cat{C}\restrict{x \in u} \Rightarrow \cat{C}\restrict{x \in u} \circ T\restrict{x}$.

  The representation of Theorems~\ref{thm:embedding} and~\ref{thm:sheafrepresentation} is in fact functorial~\cite[Section~11]{barbosaheunen:sheaves}: a (lax) monoidal functor $T \colon \cat{C} \to \cat{C}$ corresponds to a family of stalk functors $T\restrict{x} \colon \cat{C}\restrict{x} \to \cat{C}\restrict{x}$ that are continuous in a certain sense. However, this notion of continuity is quite involved, and we will not pursue it further here.
\end{remark}

\section{Formal monads, graded monads, and indexed monads}\label{sec:formalmonads}

This section characterises localisable monads as formal monads in a certain presheaf category, and connects to graded monads and indexed monads.

\subsection{Formal monads}

We will characterise localisable monads as formal monads in the 2-category $[\ZI(\cat{C})\op,\cat{Cat}]$ with functors $\ZI(\cat{C})\op \to \cat{Cat}$ as $0$-cells, \emph{natural} transformations as $1$-cells, and modifications as $2$-cells~\cite{street:formalmonads,leinster:higher}. 
More precisely, we will define a formal monad \emph{on} the sheaf $\formal{\cat{C}} \colon \ZI(\cat{C})\op\to\cat{Cat}$ that maps a central idempotent $u$ to the category $\cC\restrict{u}$ and morphisms $u\leq v$ to the functors $\Guv\colon \cC\restrict{v} \to \cC\restrict{u}$ of Lemma~\ref{lem:adjunctionCu}.
A formal monad then consists of a natural transformation $\formal{T}\colon\formal{\cC}\Rightarrow \formal{\cC}$ and two modifications $\mu\colon\formal{T}\formal{T} \Rrightarrow \formal{T}$ and $\eta\colon\id_{\formal{\cC}}\Rrightarrow \formal{T}$ satisfying the usual monad laws. 
More precisely, the data of this formal monad consists of:
\begin{itemize}
  \item monads $(T\restrict{u},\mu\restrict{u},\eta\restrict{u})$ on $\cat{C}\restrict{u}$ for every central idempotent $u$ in $\cat{C}$;
  \item functors $\Guv \colon \cat{C}\restrict{v} \to \cat{C}\restrict{u}$ for central idempotents $u \leq v$ in $\cat{C}$;
\end{itemize}
such that the following equations hold in $\cat{C}\restrict{u}$:
\begin{equation}\label{eq:modification}
\begin{aligned}
\begin{tikzcd}[row sep=5mm, column sep = 8.3 mm]
	{\Guv(A)} && {\Guv(T\restrict{v}(A))} \\
	&& {T\restrict{u}(\Guv(A))}
	\arrow[Rightarrow, no head, from=1-3, to=2-3]
	\arrow["{\Guv((\eta\restrict{v})_A)}", from=1-1, to=1-3]
	\arrow["{(\eta\restrict{v})_{\Guv (A)}}"', from=1-1, to=2-3]
\end{tikzcd}
\quad
\begin{tikzcd}[row sep=5mm, column sep = 8.3 mm]
	{T_u^2(\Guv (A))} && {T_u(\Guv(A))} \\
	{\Guv(T_v^2(A))} && {\Guv(T_v(A))}
	\arrow["{\Guv((\mu\restrict{v})_A)}"', from=2-1, to=2-3]
	\arrow["{(\mu\restrict{u})_{\Guv (A)}}", from=1-1, to=1-3]
	\arrow[Rightarrow, no head, from=1-3, to=2-3]
	\arrow[Rightarrow, no head, from=1-1, to=2-1]
\end{tikzcd}
\end{aligned}
\end{equation}
Moreover $\formal{T}$ is natural, meaning that if $u=v\circ m$ then for any $f \colon A\to B$ in $\cC\restrict{v}$:
\begin{align}
T\restrict{u}(\Guv A) & = \Guv T\restrict{v}(A) \label{eq:equal_ob}\\
T\restrict{u}\big(\Guv f\big) & = \Guv T\restrict{v}(f).\label{eq:nat_on_mor}
\end{align}
Given the definition of $\Guv$, the first equation simply reads $T\restrict{u}(A) = T\restrict{v}(A)$. 
The following two lemmas follow from the definition of the adjoint functors $\Fuv\dashv \Guv$.

\begin{lemma}
	There is a comonad $-\otimes U$ on $\cC$ for any central idempotent $u$ of $\cC$. More generally, there is a comonad $-\otimes U$ on $\cC\restrict{v}$ for any central idempotents $u\leq v$ of $\cC$.
\end{lemma}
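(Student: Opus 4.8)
The plan is to exhibit the comonad structure on the endofunctor $-\otimes U$ directly, and then observe that it is exactly the comonad induced by the adjunction $\Fu\dashv\Gu$ of Lemma~\ref{lem:adjunctionCu}, which settles the claim abstractly as soon as we know the unit of that adjunction; the ``more general'' statement over $\cC\restrict{v}$ is the same argument applied to the adjunction $\Fuv\dashv\Guv$.

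First I would treat the case $v=1$, i.e.\ the comonad on $\cC$ itself. The candidate comultiplication at $A$ should be the canonical iso $A\otimes U\to A\otimes U\otimes U$ obtained by inverting $A\otimes U\otimes u$ (using that $u$ is a central idempotent, so $U\otimes u\colon U\otimes U\to U$ is invertible by Remark~\ref{rem:Cuequivalent}), and the counit at $A$ should be $A\otimes u\colon A\otimes U\to A$. One then checks the two counit laws and coassociativity. Both counit laws come down to the defining equation $\rho_U\circ(U\otimes u)=\lambda_U\circ(u\otimes U)$ together with the triangle/pentagon coherence for the monoidal unitors and associators; coassociativity is the statement that the two ways of inserting a third copy of $U$ agree, which again reduces to coherence plus the idempotency of $u$. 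These are exactly the routine ``being careful with coherence isomorphisms'' verifications of Section~\ref{sec:tensortopology}, so I would not grind through them.

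The cleaner route, which I would actually present, is to note that $\Gu\circ\Fu$ is precisely $-\otimes U$ on $\cC$: indeed $\Fu(A)=A\otimes U$ and $\Gu$ acts as the identity on objects, and on a morphism $f\colon A\to B$ in $\cC$ one computes $\Gu(\Fu(f))=f\otimes U$ after unwinding the two formulas in Lemma~\ref{lem:adjunctionCu}. An adjunction always produces a comonad on the codomain of its right adjoint, with comultiplication built from the unit and counit from the counit; feeding in the unit and counit of $\Fu\dashv\Gu$ gives exactly the comonad described above. For the general statement one replaces $1$ by an arbitrary central idempotent $v\geq u$ and uses the adjunction $\Fuv\dashv\Guv$, whose composite right-then-left is $-\otimes U$ on $\cC\restrict{v}$ by the same unwinding; here one should keep in mind Remark~\ref{rem:Cuequivalent}, that in $\cC\restrict{v}$ the object $A\otimes U$ is already isomorphic to $A$, so the comonad is idempotent.

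The main obstacle is purely bookkeeping: the formulas for $\Fuv$ and $\Guv$ in Lemma~\ref{lem:adjunctionCu} are written in terms of morphisms of $\cC$ with twists by $u$, $m$, and the associators, so verifying that their composite really is $-\otimes U$ (and that the induced unit/counit are the ones claimed) requires a careful but elementary diagram chase through the coherence isomorphisms and the central-idempotent equation. No genuinely new idea is needed beyond what is already recalled in Section~\ref{sec:tensortopology}.
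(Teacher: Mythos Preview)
Your overall strategy is exactly the paper's: the lemma is stated as a direct consequence of the adjunction $\Fuv\dashv\Guv$ of Lemma~\ref{lem:adjunctionCu}, and the paper records explicitly that $-\otimes U=\Fuv\circ\Guv$. Your explicit description of the counit $A\otimes u$ and comultiplication $(A\otimes U\otimes u)^{-1}$ is also correct.

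However, you have the composite the wrong way round. For an adjunction $F\dashv G$ with $F\colon\cat{A}\to\cat{B}$ and $G\colon\cat{B}\to\cat{A}$, the comonad is $FG$ on $\cat{B}$, i.e.\ on the \emph{domain} of the right adjoint (equivalently the codomain of the left adjoint), not on its codomain as you write. Here $\Fu\colon\cC\restrict{u}\to\cC$ is the left adjoint and $\Gu\colon\cC\to\cC\restrict{u}$ the right adjoint, so the comonad on $\cC$ is $\Fu\circ\Gu$, not $\Gu\circ\Fu$. Your composite $\Gu\Fu$ lands in $\cC\restrict{u}$ and is the associated \emph{monad}. The object-level calculation you give (``$\Fu(A)=A\otimes U$ and $\Gu$ is the identity on objects'') actually shows $\Fu\Gu(A)=A\otimes U$ on $\cC$; you have just labelled it backwards. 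Once you swap the order, the counit of the comonad is literally the counit $\varepsilon$ of the adjunction and the comultiplication is $\Fu\eta\Gu$, which unwind to $A\otimes u$ and $(A\otimes U\otimes u)^{-1}$ as you wanted. The same correction applies to the general case: the comonad on $\cC\restrict{v}$ is $\Fuv\circ\Guv$.
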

\begin{lemma}\label{thm:coKleisli}
	The category $\cC\restrict{u}$ is the co-Kleisli category of the comonad $-\otimes U$ on $\cC\restrict{v}$. 
\end{lemma}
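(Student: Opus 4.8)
The plan is to recognise the comonad $-\otimes U$ on $\cC\restrict{v}$ as the one generated by the adjunction $\Fuv\dashv\Guv$ of Lemma~\ref{lem:adjunctionCu}, and then to obtain $\cC\restrict{u}$ as its co-Kleisli category directly from the universal property of that adjunction; the key point is that $\Guv$ is the identity on objects, which upgrades the usual comparison equivalence to an isomorphism of categories.

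First I would pin down the comonad. By Lemma~\ref{lem:adjunctionCu} there is an adjunction $\Fuv\dashv\Guv$ with $\Fuv\colon\cC\restrict{u}\to\cC\restrict{v}$ and $\Guv\colon\cC\restrict{v}\to\cC\restrict{u}$; the induced comonad on $\cC\restrict{v}$ is $\Fuv\Guv$, with counit the counit of the adjunction and comultiplication built from the (invertible) unit. Since $\Guv$ acts as the identity on objects and $\Fuv$ sends $A$ to $A\otimes U$, this comonad is precisely the $-\otimes U$ of the preceding lemma. I would then invoke the co-Kleisli comparison attached to the adjunction: there is a functor between the co-Kleisli category $\Kl(-\otimes U)$ and $\cC\restrict{u}$ that is the identity on objects, and whose action on hom-sets is the adjunction transpose
\[
  \Kl(-\otimes U)(A,B) \;=\; \cC\restrict{v}(A\otimes U, B) \;\cong\; \cC\restrict{u}(A, B),
\]
using $\Fuv\Guv A = A\otimes U$ and $\Guv B = B$. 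This transpose is a bijection, so the comparison functor is fully faithful; that it preserves the co-Kleisli identities and composition is the standard content of the comparison theorem (dual to the Kleisli comparison for monads), since both the co-Kleisli structure maps and the hom-set bijection come from the same adjunction.

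Being fully faithful and the identity on objects, the comparison functor is an isomorphism of categories, which is exactly the claim that $\cC\restrict{u}$ is the co-Kleisli category of $-\otimes U$ on $\cC\restrict{v}$. The step needing the most care is the variance bookkeeping: verifying that the comonad lives on $\cC\restrict{v}$ (not on $\cC\restrict{u}$) and that $\Fuv$ is the left and $\Guv$ the right adjoint, so that the comparison is genuinely with $\cC\restrict{u}$; beyond that the argument is formal once Lemma~\ref{lem:adjunctionCu} is granted. Alternatively, to keep things self-contained, one can define the comparison by hand --- the identity on objects, and the transpose bijection on morphisms --- and check directly that it sends the co-Kleisli identity $\epsilon_A$ to the identity of $\cC\restrict{u}$ at $A$ and respects co-Kleisli composition; this amounts to a short diagram chase with the explicit counit and comultiplication of $\Fuv\Guv$, the formulas of Proposition~\ref{prop:Cu}, and the relation $u = v\circ m$, and poses no genuine difficulty.
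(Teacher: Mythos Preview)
Your proposal is correct and matches the paper's (implicit) approach. The paper does not spell out a proof of this lemma at all: it simply states that both this lemma and the preceding one ``follow from the definition of the adjoint functors $\Fuv\dashv\Guv$''. Your argument is precisely a fleshed-out version of that remark: identify $-\otimes U$ with $\Fuv\Guv$, invoke the co-Kleisli comparison $\Kl^{\mathrm{co}}(-\otimes U)\to\cC\restrict{u}$ sending $A\mapsto\Guv A$, observe that it is always fully faithful (via the adjunction bijection $\cC\restrict{v}(\Fuv A,B)\cong\cC\restrict{u}(A,\Guv B)$), and note that because $\Guv$ is the identity on objects it is also bijective on objects, hence an isomorphism. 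The only thing I would tighten is the phrasing ``the usual comparison equivalence'': the Kleisli/co-Kleisli comparison is always fully faithful (not merely an equivalence in favourable cases), and it is the identity-on-objects property of $\Guv$ that upgrades it from a full embedding to an isomorphism; your later paragraph gets this right, so this is only a matter of wording.
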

It follows from Lemma \ref{thm:coKleisli} that there is a canonical adjunction between the co-Kleisli category $\cC\restrict{u}$ and category $\cC\restrict{v}$ (or the base category $\cC$ for $v=1$) given by adjoint functors $\Fuv \dashv \Guv$ such that $- \otimes U=\Fuv\circ \Guv$. These correspond to the adjoint functors defined in Lemma \ref{lem:adjunctionCu}. Further than Lemma~\ref{lem:adjunctionCu}, observe the following decomposition.

\begin{lemma}\label{lem:decompFG}
  If $u\leq v\leq w$ are central idempotents in $\cC$, the functors of Lemma~\ref{lem:adjunctionCu} satisfy:
  \[
    \begin{aligned}
      \cC\Restrict{u \leq w} &= \cC\Restrict{v \leq w} \circ \cC\Restrict{u \leq v} \\[2mm]
      \cC\restrict{u \leq w} &= \cC\restrict{u \leq v} \circ \cC\restrict{v \leq w} \\[2mm]
      \cC\Restrict{u \leq v} &= \cC\restrict{v \leq w} \circ \cC\Restrict{u \leq w}
    \end{aligned}
    \qquad\qquad
    \begin{aligned}\begin{tikzcd}[row sep = 5mm]
    	{\cC\restrict{v}} \\
    	&&&& {\cC\restrict{w}} \\
    	{\cC\restrict{u}}
    	\arrow["\Fuv"{description, pos=0.4}, curve={height=-12pt}, from=3-1, to=1-1]
    	\arrow["\Guv"{description, pos=0.4}, curve={height=-12pt}, from=1-1, to=3-1]
    	\arrow["\Fvw"{description}, curve={height=-18pt}, from=1-1, to=2-5]
    	\arrow["\Gvw"{description}, from=2-5, to=1-1]
    	\arrow["\Fuw"{description}, curve={height=18pt}, from=3-1, to=2-5]
    	\arrow["\Guw"{description}, from=2-5, to=3-1]
    \end{tikzcd}\end{aligned}
  \]
\end{lemma}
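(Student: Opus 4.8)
The plan is to read the six functors off Lemma~\ref{lem:adjunctionCu} and verify the three equalities by direct computation, keeping careful track of which of the categories $\cC\restrict{u}$, $\cC\restrict{v}$, $\cC\restrict{w}$ each composite lives in. Fix connecting morphisms $m\colon U\to V$ and $p\colon V\to W$ with $u=v\circ m$ and $v=w\circ p$; then $p\circ m\colon U\to W$ satisfies $u=w\circ(p\circ m)$, so it is a legitimate connecting morphism for $u\leq w$, and I would build $\cC\restrict{u\leq w}$ from it.

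I would settle $\cC\restrict{u\leq w}=\cC\restrict{u\leq v}\circ\cC\restrict{v\leq w}$ first, as it holds on the nose: all three functors are the identity on objects, and a morphism $f\colon A\otimes W\to B$ of $\cC$ -- that is, a morphism $A\to B$ of $\cC\restrict{w}$ -- is sent by the composite to $\bigl(f\circ(A\otimes p)\bigr)\circ(A\otimes m)$, which by bifunctoriality of $\otimes$ equals $f\circ\bigl(A\otimes(p\circ m)\bigr)=\cC\restrict{u\leq w}(f)$. For $\cC\Restrict{u\leq w}=\cC\Restrict{v\leq w}\circ\cC\Restrict{u\leq v}$ I would then invoke uniqueness of adjoints: composing the adjunctions $\cC\Restrict{u\leq v}\dashv\cC\restrict{u\leq v}$ and $\cC\Restrict{v\leq w}\dashv\cC\restrict{v\leq w}$ gives $\cC\Restrict{v\leq w}\circ\cC\Restrict{u\leq v}\dashv\cC\restrict{u\leq v}\circ\cC\restrict{v\leq w}$, whose right adjoint is $\cC\restrict{u\leq w}$ by the case just treated; since $\cC\Restrict{u\leq w}$ is also left adjoint to $\cC\restrict{u\leq w}$, the two coincide (up to canonical isomorphism). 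A more pedestrian alternative is to substitute the explicit formula for $\cC\Restrict{u\leq v}$ from Lemma~\ref{lem:adjunctionCu} and simplify the resulting string of coherence maps and idempotents. Finally, for $\cC\Restrict{u\leq v}=\cC\restrict{v\leq w}\circ\cC\Restrict{u\leq w}$ I would feed in the equality just obtained to rewrite the right-hand side as $\cC\restrict{v\leq w}\circ\cC\Restrict{v\leq w}\circ\cC\Restrict{u\leq v}$, and then observe that $\cC\restrict{v\leq w}\circ\cC\Restrict{v\leq w}$ is (isomorphic to) the identity on $\cC\restrict{v}$: it is the monad induced by the adjunction $\cC\Restrict{v\leq w}\dashv\cC\restrict{v\leq w}$, whose unit -- the canonical isomorphism $A\cong A\otimes V$ of Remark~\ref{rem:Cuequivalent} -- is invertible by Lemma~\ref{lem:adjunctionCu}, so the monad is trivial.

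The main obstacle is not conceptual but bookkeeping. The functors $\cC\restrict{u\leq v}$, $\cC\Restrict{u\leq v}$ and their companions are only identity-on-objects up to tensoring with the domains of the idempotents, so -- exactly as in Section~\ref{sec:tensortopology} -- the first and third equalities hold only after silently inserting the coherence isomorphisms of $\cC$ together with the canonical isomorphisms $A\otimes U\otimes V\cong A\otimes U$ that come from $u\wedge v=u$ whenever $u\leq v$. The delicate spot is the second equality, where the composite $\cC\Restrict{v\leq w}\circ\cC\Restrict{u\leq v}$ produces a nested object $(A\otimes U)\otimes V$ that must be recognised as $A\otimes U$, and where, to obtain equality on the nose rather than merely up to isomorphism, one has to check in addition that the comparison natural transformation to $\cC\Restrict{u\leq w}$ is the identity.
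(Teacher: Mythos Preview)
Your argument is correct but more elaborate than the paper's, which simply says ``This follows directly from the definition of the functors.'' All three identities can be checked by plugging in the formulas from Lemma~\ref{lem:adjunctionCu}; your treatment of $\cC\restrict{u\leq w}=\cC\restrict{u\leq v}\circ\cC\restrict{v\leq w}$ does exactly this, but for the other two you detour through uniqueness of adjoints and the unit isomorphism. The detour is unnecessary for the third equality in particular: $\cC\restrict{v\leq w}\circ\cC\Restrict{u\leq w}$ sends $A$ to $A\otimes U$ on objects (since $\cC\restrict{v\leq w}$ is the identity there) and sends a morphism $f\colon A\otimes U\to B$ to $(f\otimes U)\circ(A\otimes u\otimes U)^{-1}\circ(A\otimes U\otimes w)\circ(A\otimes U\otimes p)=(f\otimes U)\circ(A\otimes u\otimes U)^{-1}\circ(A\otimes U\otimes v)$, which is $\cC\Restrict{u\leq v}(f)$ on the nose. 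So your caveat about silently inserting the isomorphism $A\otimes U\otimes V\cong A\otimes U$ applies only to the first listed equality, not the third --- and note that in your final paragraph you appear to mislabel them, since the composite $\cC\Restrict{v\leq w}\circ\cC\Restrict{u\leq v}$ occurs in the \emph{first} equality, not the second. Your adjoint-based route does buy a conceptual explanation of why the equalities should hold, at the cost of delivering only isomorphism where direct computation gives equality.
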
 
\begin{proof}
  This follows directly from the definition of the functors.
\end{proof}

\begin{proposition}\label{thm:formal_to_loc}
	Let $\cC$ be a stiff category. Let $(\formal{T},\formal{\mu},\formal{\eta})$ be a formal monad in $[\ZI(\cat{C})^{\op}, \cat{Cat}]$ above $\formal{\cC}$ and let $u \leq v$ be central idempotents. Then the monad $T\restrict{v}$ is a localisable monad with the strength $\str_{A,U} \colon T\restrict{v}(A)\otimes U \to T\restrict{v}(A\otimes U)$ defined as the following composition in $\cC\restrict{v}$ for any object $A$ in $\cC\restrict{v}$:
\begin{equation}\label{eq:def_strength}
\begin{tikzcd}
	{T\restrict{v}(A)\otimes U=\Fuv \Guv T\restrict{v}A = \Fuv T\restrict{u}\Guv A} \\
	{\Fuv T\restrict{u} \Guv \Fuv  \Guv A = \Fuv  \Guv T\restrict{v} \Fuv  \Guv A} \\
	{T\restrict{v} \Fuv  \Guv A = T\restrict{v}(A\otimes U)}
	\arrow["{\varepsilon^{u\leq v}_{T\restrict{v} \Fuv \Guv  A}}", from=2-1, to=3-1]
	\arrow["{\Fuv  T\restrict{u}\eta^{u\leq v}_{\Guv  A}}", from=1-1, to=2-1]
\end{tikzcd}
\end{equation}
where $\eta^{u\leq v}$ and $\varepsilon^{u\leq v}$ are the unit and counit of adjunction $\Fuv \dashv \Guv$.
\end{proposition}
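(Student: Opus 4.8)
The plan is to verify, one at a time, the six axioms~\eqref{eq:st:unitor}--\eqref{eq:st:naturalina} of Definition~\ref{def:localisablemonad} for the family $\str_{A,U}$ defined by~\eqref{eq:def_strength}, where $u$ ranges over the central idempotents below $v$. It is useful to record at the outset that, under the adjunction $\Fuv\dashv\Guv$, the morphism $\str_{A,U}$ is the adjoint transpose of $T\restrict{u}(\eta^{u\leq v}_{\Guv A})$, using the identification $\Guv T\restrict{v}=T\restrict{u}\Guv$ coming from naturality of $\formal{T}$ in~\eqref{eq:equal_ob}--\eqref{eq:nat_on_mor}; equivalently $\str$ is the natural transformation $(-\otimes U)\,T\restrict{v}\Rightarrow T\restrict{v}\,(-\otimes U)$ obtained by whiskering the unit $\eta^{u\leq v}$ and then the counit $\varepsilon^{u\leq v}$ around the equality $\Fuv\Guv T\restrict{v}=\Fuv T\restrict{u}\Guv$. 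A preliminary point is that the composite~\eqref{eq:def_strength} is well typed, which is exactly~\eqref{eq:equal_ob}--\eqref{eq:nat_on_mor} applied at the objects $\Guv A$ and $\Fuv\Guv A$. Granting this, every axiom becomes a diagram chase whose only ingredients are: the triangle identities of $\Fuv\dashv\Guv$, naturality of $\eta^{u\leq v}$ and $\varepsilon^{u\leq v}$, functoriality of $\Fuv$, $\Guv$ and $T\restrict{u}$, the factorisation identities of Lemma~\ref{lem:decompFG}, the two modification squares~\eqref{eq:modification} for $\formal{\mu}$ and $\formal{\eta}$, and the monad laws of the individual $T\restrict{u}$.

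I would deal with the naturality axioms first. Naturality in $A$,~\eqref{eq:st:naturalina}, is a square built from $\Fuv T\restrict{u}(\eta^{u\leq v}_{-})$ and $\varepsilon^{u\leq v}_{-}$ that commutes by naturality of $\eta^{u\leq v}$ and $\varepsilon^{u\leq v}$ together with $T\restrict{u}(\Guv f)=\Guv T\restrict{v}(f)$ from~\eqref{eq:nat_on_mor}. Naturality in the central idempotent,~\eqref{eq:st:naturalinu}, involves three central idempotents $u\leq u'\leq v$; here one uses that $\formal{T}$ is a genuine natural transformation, so the family $(T\restrict{u})_u$ and the restriction functors are strictly compatible, together with the factorisations of Lemma~\ref{lem:decompFG}, so that the two whiskerings defining~\eqref{eq:def_strength} split level by level. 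The unit axiom~\eqref{eq:st:unit} follows by transposing across $\Fuv\dashv\Guv$: the left-hand square of~\eqref{eq:modification}, used twice, identifies $(\eta\restrict{u})_{\Guv A}$ with $\Guv((\eta\restrict{v})_A)$, and in combination with naturality of $\eta\restrict{u}$, naturality of $\varepsilon^{u\leq v}$ and one triangle identity this collapses $\str_{A,U}\circ\big((\eta\restrict{v})_A\otimes U\big)$ to $(\eta\restrict{v})_{A\otimes U}$.

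The multiplication axiom~\eqref{eq:st:multiplication} is a longer chase of the same flavour. After expanding $\str_{T(A),U}$ and $T(\str_{A,U})$ via~\eqref{eq:def_strength}, one slides $T\restrict{u}$ and $\Fuv$ past the unit and counit by naturality, cancels an adjacent unit--counit pair with a triangle identity, invokes the right-hand square of~\eqref{eq:modification} to replace $(\mu\restrict{u})_{\Guv A}$ by $\Guv((\mu\restrict{v})_A)$, and reassociates the triple composite using monad associativity. As in the proof of Proposition~\ref{prop:smallmonad}, the only real care is in tracking which category each composite is formed in.

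The monoidal axioms~\eqref{eq:st:unitor} and~\eqref{eq:st:associator} are where the monoidal structure of $\cC\restrict{v}$ and the comonad $-\otimes U=\Fuv\Guv$ must be reconciled, and I expect them to be the main obstacle. The unitor axiom~\eqref{eq:st:unitor} is the case in which $u$ is the top central idempotent of $\cC\restrict{v}$: then $\Fuv\Guv=-\otimes U$ is the canonical comonad isomorphic to the identity on $\cC\restrict{v}$, and by the triangle identities~\eqref{eq:def_strength} reduces to $T\restrict{v}$ applied to the canonical isomorphism $A\cong A\otimes U$, so the axiom becomes a coherence statement unravelled from Remark~\ref{rem:Cuequivalent} and the formulas of Proposition~\ref{prop:Cu}. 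The associator axiom~\eqref{eq:st:associator} simultaneously involves two central idempotents $u,u'\leq v$, their meet $u\wedge u'$, the iterated comonad $-\otimes U\otimes U'$ and its comultiplication; one must splice together the adjunctions at $u$, at $u'$ and at $u\wedge u'$ and show that the two resulting transposes coincide. This is the step where I expect stiffness of $\cC$ to be used: the pullback square exhibiting $A\otimes U\otimes U'$ as the pullback of $A\otimes U$ and $A\otimes U'$ over $A$ is precisely what pins down the restriction at $u\wedge u'$ in terms of those at $u$ and $u'$, and hence forces~\eqref{eq:st:associator}. Carrying out this last verification while bookkeeping the coherence isomorphisms of the $\cC\restrict{v}$-monoidal structure (which Proposition~\ref{prop:Cu} specifies only implicitly) is the delicate part; everything else is routine.
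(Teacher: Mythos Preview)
Your approach is essentially the paper's: verify axioms~\eqref{eq:st:unitor}--\eqref{eq:st:naturalina} one by one using the triangle identities, naturality of the adjunction unit and counit, the factorisations of Lemma~\ref{lem:decompFG}, and the modification squares~\eqref{eq:modification}. Your treatment of~\eqref{eq:st:unitor}, \eqref{eq:st:unit}, \eqref{eq:st:multiplication}, \eqref{eq:st:naturalinu} and~\eqref{eq:st:naturalina} matches the paper's in substance.

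The one point on which you diverge is the associator axiom~\eqref{eq:st:associator}. You anticipate that the stiffness pullback square is the key ingredient there, but in fact the paper's argument does not invoke the pullback property at all. What carries the argument is the purely order-theoretic fact that $u\otimes u'\leq u$ and $u\otimes u'\leq u'$, together with the factorisations of Lemma~\ref{lem:decompFG} and the resulting decomposition of the counit,
\[
  \varepsilon^{u} \;=\; \varepsilon^{u'}\circ F^{u'}\bigl(\varepsilon^{\,u\leq u'}_{G^{u'}}\bigr),
\]
applied to the chains $u\otimes u'\leq u\leq 1$ and $u\otimes u'\leq u'\leq 1$. Once one rewrites the associator $\alpha_{A,U,U'}$ in terms of these nested adjunctions, the verification of~\eqref{eq:st:associator} reduces to repeated naturality of $\varepsilon$ and $\eta$, one zigzag identity, and the counit decomposition above; stiffness never enters. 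So the step you flagged as the ``delicate part'' is still the longest diagram, but for a different reason than you expected, and the hypothesis of stiffness in the statement is not exercised in this proof.
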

\begin{proof}
	We need to prove each of the axioms of Definition~\ref{def:localisablemonad}. This consist of many commutating diagrams, found in Appendix~\ref{sec:proofs}. For simplicity, the proof is laid out for the case $v=1$, but the same arguments hold for any $T\restrict{v}$ by using the relevant strength. 
\end{proof}

\begin{proposition}\label{prop:loc_to_formal}
  A localisable monad $T$ on a stiff category $\cat{C}$ induces a formal monad on $\overline{\cat{C}}$ in $[\ZI(\cat{C})\op,\cat{Cat}]$.
  The natural transformation $\overline{T} \colon \overline{\cat{C}} \Rightarrow \overline{\cat{C}}$ has components $T\restrict{u}$, the modification $\overline{\eta} \colon \overline{\cat{C}} \Rrightarrow \overline{T}$ has components $\eta\restrict{u}$, and the modification $\overline{\mu} \colon \overline{T}^2 \Rrightarrow \overline{T}$ has components $\mu\restrict{u}$ as in Proposition~\ref{prop:smallmonad}.
\end{proposition}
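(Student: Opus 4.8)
The plan is to recognise the claimed formal monad as a repackaging of data already at hand: the monads $T\restrict{u}$ on $\cat{C}\restrict{u}$ from Proposition~\ref{prop:smallmonad} and the functors $\cat{C}\restrict{u\leq v}$ from Lemma~\ref{lem:adjunctionCu}. Recall that a formal monad on $\overline{\cat{C}}$ in $[\ZI(\cat{C})\op,\cat{Cat}]$ consists of a strict natural transformation $\overline{T}\colon\overline{\cat{C}}\Rightarrow\overline{\cat{C}}$ together with modifications $\overline{\mu}\colon\overline{T}^2\Rrightarrow\overline{T}$ and $\overline{\eta}\colon\id_{\overline{\cat{C}}}\Rrightarrow\overline{T}$ obeying the monad laws; unpacking this, as done earlier in this section, gives exactly the concrete data there described, constrained by~\eqref{eq:modification}, \eqref{eq:equal_ob} and~\eqref{eq:nat_on_mor}. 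I would take $\overline{T}$, $\overline{\eta}$, $\overline{\mu}$ to have components $T\restrict{u}$, $\eta\restrict{u}$, $\mu\restrict{u}$ respectively.

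Then I would verify the conditions in turn. That each $(T\restrict{u},\mu\restrict{u},\eta\restrict{u})$ is a monad, and in particular that $\eta\restrict{u}$ and $\mu\restrict{u}$ are natural transformations, is Proposition~\ref{prop:smallmonad}; and since the associativity and unit axioms of a formal monad are equalities of modifications, hence hold as soon as they hold componentwise, they follow from the monad laws at each $u$. The remaining coherence I would get from Lemma~\ref{lem:monadmorphism}: for $u\leq v$ the functor $\cat{C}\restrict{u\leq v}$ is a lax monad morphism $T\restrict{v}\to T\restrict{u}$ with $\varphi_A=T(A)\otimes u$. The observation that unlocks everything is that $T(A)\otimes u$ is the identity on $T(A)=T\restrict{u}(A)$ in $\cat{C}\restrict{u}$, so this monad morphism is \emph{strict}; unwinding the monad-morphism axioms~\eqref{eq:monadmorphism} with $\varphi=\id$, the fact that $\varphi$ is a well-defined natural transformation $T\restrict{u}\circ\cat{C}\restrict{u\leq v}\Rightarrow\cat{C}\restrict{u\leq v}\circ T\restrict{v}$ is precisely~\eqref{eq:equal_ob} and~\eqref{eq:nat_on_mor} (so that $\overline{T}$ is indeed a strict natural transformation, a $1$-cell of the 2-category as defined), and the unit and multiplication diagrams of~\eqref{eq:monadmorphism} become the two squares of~\eqref{eq:modification} exhibiting $\overline{\eta}$ and $\overline{\mu}$ as modifications. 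Finally, functoriality of $u\leq v\mapsto\cat{C}\restrict{u\leq v}$ — which makes $\overline{\cat{C}}$ a functor $\ZI(\cat{C})\op\to\cat{Cat}$ and lets the naturality squares compose — is the identity $\cat{C}\restrict{u\leq w}=\cat{C}\restrict{u\leq v}\circ\cat{C}\restrict{v\leq w}$ of Lemma~\ref{lem:decompFG} together with $\cat{C}\restrict{u\leq u}=\id$.

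I do not expect a genuine obstacle here: all substantive content is already isolated in Proposition~\ref{prop:smallmonad} and Lemma~\ref{lem:monadmorphism}, whose only non-formal input is the naturality axiom~\eqref{eq:st:naturalinu} of the strength. The real care goes into the bookkeeping — matching the abstract notion of a formal monad in this particular presheaf 2-category against the explicit families $(T\restrict{u},\cat{C}\restrict{u\leq v},\eta\restrict{u},\mu\restrict{u})$ — and especially into checking that $\overline{T}$ is \emph{strictly}, not merely pseudo-, natural, which is what lets it live in $[\ZI(\cat{C})\op,\cat{Cat}]$ as set up. Should one wish to avoid citing Lemma~\ref{lem:monadmorphism}, the same conclusions can be obtained directly: unwinding the definitions of $\cat{C}\restrict{u\leq v}$ and of $T\restrict{u}$, $T\restrict{v}$, equalities~\eqref{eq:equal_ob} and~\eqref{eq:nat_on_mor} drop out of~\eqref{eq:st:naturalinu}, while both squares of~\eqref{eq:modification} reduce — using $(\eta\restrict{u})_A=\eta_A\otimes u$, $(\mu\restrict{u})_A=\mu_A\otimes u$ and the relation $u=v\circ m$ — to bifunctoriality of $\otimes$.
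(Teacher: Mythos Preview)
Your proposal is correct and follows essentially the paper's approach: it cites Proposition~\ref{prop:smallmonad} for the pointwise monad structure and Lemma~\ref{lem:monadmorphism} for compatibility with the restriction functors, just as the paper does. Your explicit observation that $\varphi_A=T(A)\otimes u$ is the identity of $T(A)$ in $\cat{C}\restrict{u}$ --- making the monad morphism strict --- is a clean way to package what the paper leaves implicit in the identity arrows of~\eqref{eq:modification}; the paper instead verifies~\eqref{eq:equal_ob}--\eqref{eq:nat_on_mor} directly from~\eqref{eq:st:naturalinu} rather than extracting them from Lemma~\ref{lem:monadmorphism}, but this is only a difference of organisation.
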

\begin{proof}
	This proof consist in verifying the naturality of $\overline{T}$, in showing that $\overline{\eta}$ and $\overline{\mu}$ are modifications (which follows directly from Lemma \ref{lem:monadmorphism}) and natural, and in proving that $\overline{\eta}$ and $\overline{\mu}$ satisfy the monad laws (which pointwise follows from Proposition~\ref{prop:smallmonad}). The complete proof is included in Appendix~\ref{sec:proofs}.
\end{proof}

\begin{theorem}
	For a stiff monoidal category $\cat{C}$ there is a bijective correspondence between localisable monads on $\cat{C}$ and formal monads on $\formal{\cat{C}}$ in $[\ZI(\cat{C})\op,\cat{Cat}]$ (via the constructions of Propositions \ref{thm:formal_to_loc} and \ref{prop:loc_to_formal}).
\end{theorem}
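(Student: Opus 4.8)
The two directions of the claimed correspondence are already in hand: Proposition~\ref{thm:formal_to_loc} (specialised to $v=1$) sends a formal monad on $\formal{\cC}$ to a localisable monad, and Proposition~\ref{prop:loc_to_formal} sends a localisable monad to a formal monad. The plan is to verify that these two assignments are mutually inverse; naturality in morphisms of the respective structures is not needed for the stated bijection. First I would isolate the part of the data transported on the nose. On both ends of the correspondence the underlying presheaf of categories is the fixed $\formal{\cC}$ with the transition functors $\Guv$ of Lemma~\ref{lem:adjunctionCu}; a localisable monad $T$ and the formal monad it induces share the values $T\restrict{u}(A)=T(A)$ and the pointwise unit and multiplication $(\eta\restrict{u})_A=\eta_A\otimes u$, $(\mu\restrict{u})_A=\mu_A\otimes u$ of Proposition~\ref{prop:smallmonad}; and Proposition~\ref{thm:formal_to_loc} only ever puts a strength on top of $T\restrict{1}$ and its given pointwise monad structure. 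So each composite of the two constructions leaves everything fixed except, in one direction, the strength, and in the other, the action of $\formal{T}$ on morphisms, and it suffices to match those two pieces.

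For the round trip starting from a localisable monad $(T,\eta,\mu,\str)$: Proposition~\ref{prop:loc_to_formal} produces the formal monad whose $1$-cell has components $T\restrict{u}$ acting by $T\restrict{u}(f\colon A\otimes U\to B)=T(f)\circ\str_{A,U}$, and Proposition~\ref{thm:formal_to_loc} then returns a localisable monad on $\cC$ (via $\cC\simeq\cC\restrict{1}$ and~\eqref{eq:st:unitor}) whose strength $\str'$ is the composite~\eqref{eq:def_strength}. I would expand~\eqref{eq:def_strength} concretely: the co-Kleisli unit $\eta^{u\leq 1}_{\Gu A}$ of Lemma~\ref{thm:coKleisli} is a coherence isomorphism, the counit $\varepsilon^{u\leq 1}_{B}\colon B\otimes U\to B$ is $B\otimes u$ up to a unitor, and $\Fu T\restrict{u}\eta^{u\leq 1}_{\Gu A}$---once the definitions of $\Fu$ and of $T\restrict{u}$ on morphisms are unwound---is $\str_{A,U}\otimes U$ conjugated by associators and unitors. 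Postcomposing with the counit and using that $u$ is a central idempotent, so that a factor $(u\otimes U)^{-1}$ cancels against a later $U\otimes u$, collapses the whole composite to $\str_{A,U}$; axiom~\eqref{eq:st:unit} is what makes the $\eta$ occurring inside~\eqref{eq:def_strength} disappear, while the naturality axioms~\eqref{eq:st:naturalinu} and~\eqref{eq:st:naturalina} with bifunctoriality govern the two copies of $U$. Hence $\str'=\str$.

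For the round trip starting from a formal monad $(\formal{T},\formal{\mu},\formal{\eta})$: Proposition~\ref{thm:formal_to_loc} gives a localisable monad on $\cC\simeq\cC\restrict{1}$ with strength $\str$ as in~\eqref{eq:def_strength}, and Proposition~\ref{prop:loc_to_formal} reassembles a formal monad whose $1$-cell has components $(T\restrict{1})\restrict{u}$ acting on $f\colon A\otimes U\to B$ by $T\restrict{1}(f)\circ\str_{A,U}$, and whose unit and multiplication are verbatim the $\eta\restrict{u}$, $\mu\restrict{u}$ we began with, since those were only relayed. It remains to identify $(T\restrict{1})\restrict{u}$ with the original $T\restrict{u}$: this is immediate on objects, and on morphisms it follows by substituting~\eqref{eq:def_strength} for $\str_{A,U}$ and using the naturality equations~\eqref{eq:equal_ob} and~\eqref{eq:nat_on_mor} of $\formal{T}$ together with the triangle identities for $\Fu\dashv\Gu$, whereupon $T\restrict{1}(f)\circ\str_{A,U}$ unwinds to $T\restrict{u}(f)$. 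This closes both triangles, yielding the bijection.

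The main obstacle is not conceptual but the bookkeeping in these two strength computations: one must translate the abstract unit and counit of the co-Kleisli adjunction $\Fu\dashv\Gu$ and the morphism-parts of the functors $T\restrict{u}$ and $\Fu$ into explicit morphisms of $\cC$ laden with unitors and associators, and then check that everything cancels. The effective content is short---it is precisely the interplay of~\eqref{eq:st:unit}, the naturality axioms, and the idempotency of $u$ that removes the spurious tensor factor $U$---but the diagram chases are long, so, in keeping with the earlier propositions of this section, I would carry these out in Appendix~\ref{sec:proofs}.
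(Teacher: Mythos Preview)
Your plan is correct and coincides with the paper's proof: both round trips are checked by unwinding~\eqref{eq:def_strength} in terms of the co-Kleisli adjunction, and the second uses precisely the naturality~\eqref{eq:nat_on_mor} together with the triangle identities, as you say. One small correction: in the first round trip the $\eta$ in~\eqref{eq:def_strength} is the \emph{adjunction} unit $\eta^{u\leq 1}$, which (as you already noted) is a coherence isomorphism and hence vanishes under $T\restrict{u}$ on the nose; axiom~\eqref{eq:st:unit} concerns the \emph{monad} unit and plays no role here---the cancellation is purely the interchange law together with the invertibility of $U\otimes u$, exactly as the paper records.
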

\begin{proof}
  Start with a localisable monad $T$ and follow Proposition \ref{prop:loc_to_formal} to get a formal monad $\formal{T}$. Then apply Proposition \ref{thm:formal_to_loc} to get a localisable monad $T'$ which we claim equals the original monad $T$. It is clear that $T'$ equals $T$ as a functor.
  It remains to check that the strength obtained this way on $T'$ is the same as the original strength on $T$. To do this, note that the strength~\eqref{eq:def_strength} from Proposition~\ref{thm:formal_to_loc} can be rewritten as follows, where $\str$ denotes the original strength from the localisable monad:
  \[
  	\varepsilon_{T\restrict{1}FGA}\circ F T\restrict{u}\eta^u_{GA} 
    = (T(A\otimes U)\otimes u)\circ (\str_{A,U}\otimes U) \otimes (T(A)\otimes (U\otimes u)^{-1})
    = \str_{A,U}
  \]
  Here we use the naturality of the strength and the fact that $U \otimes u$ is an isomorphism.
  We prove similarly that using Proposition \ref{prop:loc_to_formal} then Proposition \ref{thm:formal_to_loc} gives us back the unit and the multiplication of the starting localisable monad. To simplify the notation we used $F$ and $G$ to denote $\Fu$ and $\Gu$. 
  
  Now start with a formal monad $\formal{T}$, turn it into a localisable monad $(\formal{T}\restrict{1},\str)$, and then into a formal monad $\widetilde{T}$. 
  Then $\widetilde{T}\restrict{u}(A)=\formal{T}\restrict{u}(A)$ and $\widetilde{T}\restrict{u}$ sends a morphism $f: GA \to GB$ in $\cat{C}\restrict{u}$ given by $f \colon A \otimes U \to B$ to the morphism $\formal{T}\restrict{u}(A) \to \formal{T}\restrict{u}(B)$ in $\cat{C}\restrict{u}$ given by:
  \[
    \formal{T}\restrict{1}(A) \otimes U
    \xrightarrow{\str_{A,U}}
    \formal{T}\restrict{1}(A \otimes U)
    \xrightarrow{\formal{T}\restrict{1}(f)}
    \formal{T}\restrict{1}(B)
  \]
  We have to prove that this equals $\formal{T}\restrict{u}(f)$. To see this, first note that by the properties of the adjunction, a map $f$ in the coKleisi category $\cC\restrict{u}$ is defined in the base category as $\varepsilon \circ F(f)$, which we will denote $f^\cC$. 
   %
   With this notation, and again using $F$ and $G$ to denote $\Fu$ and $\Gu$, we get:
   \begin{align}
	\formal{T}\restrict{1}(f^\cC) \circ \str_{A,U} & = \formal{T}\restrict{1}\varepsilon_B \circ \formal{T}\restrict{1} Ff\circ \varepsilon_{\formal{T}\restrict{1} FGA} \circ F\formal{T}\restrict{u}\eta_{GA} \label{eq:pf:def1}\\
	& = \varepsilon_{\formal{T}\restrict{1} B} \circ FG\formal{T}\restrict{1} \varepsilon_B \circ FG\formal{T}\restrict{1}Ff \circ F\formal{T}\restrict{u}\eta_{GA}\label{eq:pf:nat_eps}\\
	& = \varepsilon_{\formal{T}\restrict{1} B} \circ F\formal{T}\restrict{u}G \varepsilon_B \circ F\formal{T}\restrict{u} GFf \circ F\formal{T}\restrict{u}\eta_{GA}\label{eq:pf:nat_on_mor}\\
	& = \varepsilon_{\formal{T}\restrict{1} B} \circ F\formal{T}\restrict{u}G \varepsilon_B \circ F\formal{T}\restrict{u}\eta_{GB} \circ F\formal{T}\restrict{u} f  \label{eq:pf:nat_eta}\\
	& = \varepsilon_{\formal{T}\restrict{1} B} \circ F\formal{T}\restrict{u} f \label{eq:pf:adjunction}\\
	& = (\formal{T}\restrict{u}(f))^\cC \label{eq:pf:def2}
\end{align}
Line \eqref{eq:pf:def1} follows from the definition of the strength given in Equation \eqref{eq:def_strength} and the definition of $f^\cC$. The next three lines follow from naturality of $\varepsilon$ used twice, Equation \eqref{eq:nat_on_mor}, and naturality of $\eta$ respectively. Line \eqref{eq:pf:adjunction} uses the property of the adjunction and the last line uses the definition of $(\formal{T}\restrict{u}(f))^\cC$.

  Similarly, using Proposition~\ref{thm:formal_to_loc} and then Proposition~\ref{prop:loc_to_formal} gives back the unit and the multiplication of the original formal monad.  
\end{proof}

\subsection{Graded monads and indexed monads}

We now connect to the pre-existing notions of $\cat{E}$-indexed monads and $\cat{E}$-graded monads for a monoidal category $\cat{E}$. 
Recall that an $\cat{E}$-\emph{graded} monad is a lax monoidal functor $\cat{E} \to [\cat{C},\cat{C}]$. It consists of functors $T_u \colon \cat{C} \to \cat{C}$, a natural transformation $\eta_A \colon A \to T_I(A)$, and a transformation $\mu_{u,v,A} \colon T_u(T_v(A)) \to T_{u \otimes v}(A)$ natural in $u,v$, and $A$, satisfying some coherence diagrams~\cite{fujiikatsumatamellies:gradedmonads}.

On the other hand, an $\cat{E}$-\emph{indexed} monad is a functor $\cat{E} \to \cat{Monad}(\cat{C})$. It also consists of functors $T_u \colon \cat{C} \to \cat{C}$, but now with transformations $\eta_{u,A} \colon A \to T_u(A)$ and transformations $\mu_{u,A} \colon T_u^2(A) \to T_u(A)$ natural in $u$ and $A$, such that each $(T_u,\eta_u,\mu_u)$ forms a monad. The formal monads on $\formal{\cat{C}}$ as defined in Section~\ref{sec:formalmonads} are $\ZI(\cC)$-indexed monads. The next lemma provides conditions under which indexed monads induce graded monads and vice versa.

Recall that a monoidal category \emph{has codiagonals} when there is a natural transformation $A \otimes A \to A$ that respects the coherence isomorphisms~\cite{jacobs:weakeningcontraction}.

\begin{lemma}\label{lem:graded_implies_indexed}
  Let $\cat{E}$ be a monoidal category.  
  If the tensor unit is initial, then an $\cat{E}$-indexed monad induces a $\cat{E}$-graded monad.
  If the tensor product has codiagonals, then an $\cat{E}$-graded monad induces an $\cat{E}$-indexed monad.
  If $\cat{E}$ is cocartesian, there is a bijective correspondence between $\cat{E}$-graded monads and $\cat{E}$-indexed monads.
\end{lemma}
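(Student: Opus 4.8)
### Proof Proposal

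\textbf{Setup.} The statement has three parts, and the third is essentially the conjunction of the first two once we know the constructions are mutually inverse, so the plan is to build the two functors explicitly and then check that for a cocartesian $\cat{E}$ they compose to the identity in both directions. Throughout, write $+$ for the monoidal product of $\cat{E}$ when it is the coproduct, $0$ for the initial object (which is then also the unit).

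\textbf{From indexed to graded (unit initial).} Given an $\cat{E}$-indexed monad, i.e.\ a functor $u \mapsto (T_u,\eta_u,\mu_u)$ into $\cat{Monad}(\cat{C})$, I would define the graded data by: the functors $T_u$ stay the same; the graded unit is $\eta_A := (\eta_I)_A \colon A \to T_I(A)$, which is legitimate because $I$ is the unit of $\cat{E}$; and the graded multiplication $\mu_{u,v,A} \colon T_u(T_v(A)) \to T_{u\otimes v}(A)$ is built from the $\cat{E}$-action of the two coprojections $u \to u\otimes v \leftarrow v$. Concretely, functoriality of $u \mapsto T_u$ on the first coprojection gives a natural map $T_u(X) \to T_{u\otimes v}(X)$; instantiate at $X = T_v(A)$, then apply functoriality on the second coprojection inside to reach $T_{u\otimes v}(T_{u\otimes v}(A))$, then apply $(\mu_{u\otimes v})_A$. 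Naturality in $u,v,A$ is automatic from naturality of the indexed-monad transformations; the lax-monoidal coherence (associativity pentagon and the two unit triangles for the graded monad) reduces to the monad laws for $T_{u\otimes v}$ together with the functoriality/uniqueness of coprojections out of a coproduct, which is where initiality of the unit is used to identify $T_0$'s unit with the graded unit.

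\textbf{From graded to indexed (codiagonals).} Given an $\cat{E}$-graded monad, define $\eta_{u,A} := \nabla \text{-twisted composite}$: start from $\eta_A \colon A \to T_I(A)$ and push along a chosen map $I \to u$; for $\cat{E}$ cocartesian this is the unique map $0 \to u$. For the per-$u$ multiplication $\mu_{u,A} \colon T_u(T_u(A)) \to T_u(A)$, use the graded multiplication $\mu_{u,u,A} \colon T_u(T_u(A)) \to T_{u+u}(A)$ followed by the action of the codiagonal $\nabla_u \colon u + u \to u$, which for a cocartesian $\cat{E}$ exists canonically. Verifying that $(T_u,\eta_u,\mu_u)$ is a monad for each $u$ is the crux: the left and right unit laws follow from the graded unit triangles plus the identities $\nabla \circ (\mathrm{id}+ !) = \mathrm{id}$ and $\nabla \circ (!+\mathrm{id})=\mathrm{id}$ for coproduct codiagonals; associativity follows from the graded associativity square plus coassociativity-type identities $\nabla \circ (\nabla + \mathrm{id}) = \nabla \circ (\mathrm{id}+\nabla)$, which again hold in any cocartesian category. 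Functoriality of $u \mapsto (T_u,\eta_u,\mu_u)$ into $\cat{Monad}(\cat{C})$ (i.e.\ that each $T_u \to T_{u'}$ is a monad morphism) is naturality of $\mu$ in the grading indices combined with the fact that $\nabla$ is natural for coproduct-preserving maps.

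\textbf{The bijection and the main obstacle.} For cocartesian $\cat{E}$, the unit $0$ is initial, so both hypotheses of the first two parts hold and both constructions are available. To see they are mutually inverse: going graded $\to$ indexed $\to$ graded recovers $\mu_{u,v,A}$ because the coprojection-based composite in the first construction, applied to the codiagonal-twisted $\mu_{u,u}$, collapses via $\nabla_{u+v}\circ(\iota_1 + \iota_2)$-type identities back to $\mu_{u,v}$ — this is exactly the statement that in a cocartesian category a map out of $u+v$ is determined by its two coprojections. Going indexed $\to$ graded $\to$ indexed recovers $\mu_{u,A}$ and $\eta_{u,A}$ because $\nabla_u$ composed with the diagonal coprojection $u \to u+u$ is the identity, so no information is lost. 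The main obstacle I anticipate is bookkeeping the coherence: the graded-monad axioms of~\cite{fujiikatsumatamellies:gradedmonads} are stated with explicit associators and unitors of $\cat{E}$, and matching them against the strict-looking coproduct identities requires carefully inserting the canonical isomorphisms $(u+v)+w \cong u+(v+w)$ and $0+u\cong u$ and checking they are respected — tedious but mechanical. A secondary subtlety is confirming that the codiagonal used in the graded-to-indexed direction is forced to be the coproduct one (so that the round trip is literally the identity, not merely an isomorphism of monads); this is where the hypothesis "cocartesian" rather than merely "has codiagonals" does real work, since the codiagonal is then canonical and unique rather than extra structure.
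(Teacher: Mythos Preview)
Your proposal is correct and follows essentially the same route as the paper: the paper builds the graded multiplication as $T_u T_v \to T_{u\otimes 0}T_{0\otimes v} \to T_{u\otimes v}^2 \xrightarrow{\mu_{u\otimes v}} T_{u\otimes v}$ (exactly your ``coprojection'' maps, available because $I$ is initial), builds the indexed multiplication as $\mu_{u,u}$ followed by $T_{\nabla_u}$, and verifies the cocartesian round trip via the naturality and codiagonal identities you list. Your treatment of the graded-to-indexed unit (pushing $\eta_A \colon A \to T_I A$ along $I \to u$) is in fact more careful than the paper's, which simply writes $\overline{\eta}_{u,A} = \eta_A$ and leaves that step implicit.
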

\begin{proof}
  Suppose the tensor unit $0$ in $\cat{E}$ is initial.
  An $\cat{E}$-indexed monad $(T_u,\eta_u,\mu_u)$ then induces an $\cat{E}$-graded monad with the same $T_u$ but $\overline{\eta}_A = \eta_{0,A}$ and $\overline{\mu}_{u,v,A}$ given by:
  \[\begin{tikzcd}[column sep=18mm]
  	{T_u(T_v(A))} & {T_{u\otimes 0}(T_{0 \otimes v}(A))} & {T_{u \otimes v}^2(A)} & {T_{u\otimes v}(A)}
  	\arrow["{T_{u \otimes !}(T_{! \otimes v}(A))}", from=1-2, to=1-3]
  	\arrow["{\mu_{u \otimes v,A}}", from=1-3, to=1-4]
  	\arrow["{T_{\rho^{-1}}(T_{\lambda^{-1}}(A))}", from=1-1, to=1-2]
  \end{tikzcd}\]

  Now suppose that $\cat{E}$ has codiagonals.
  An $\cat{E}$-graded monad $(T_u,\eta,\mu_{u,v})$ then induces an $\cat{E}$-indexed monad with the same $T_u$ but $\overline{\eta}_{u,A} = \eta_A$ and $\overline{\mu}_{u,A}$ given by:
  \[\begin{tikzcd}[column sep=15mm]
  	{T_u^2(A)} & {T_{u \otimes u}(A)} & {T_u(A)}
  	\arrow["{\mu_{u,u,A}}", from=1-1, to=1-2]
  	\arrow["{T_{\nabla_u}(A)}", from=1-2, to=1-3]
  \end{tikzcd}\]

  If $\cat{E}$ is cocartesian, these two constructions are each other's inverse. For example, $\overline{\overline{\mu}}_{u,A} = \mu_{u,A}$ because:
  \[\begin{tikzcd}[column sep=18mm]
  	{T_{u+0}(T_{0+u}(A))} & {T_{u+u}^2(A)} & {T_{u+u}(A)} \\
  	& {T_u^2(A)} & {T_u(A)}
  	\arrow["{T_{\nabla_u}(A)}", from=1-3, to=2-3]
  	\arrow["{\mu_{u,A}}"', from=2-2, to=2-3]
  	\arrow["{\mu_{u+u,A}}", from=1-2, to=1-3]
  	\arrow["{T^2_{\nabla_u}(A)}", from=1-2, to=2-2]
  	\arrow["{T_{u+!}(T_{!+u}(A))}", from=1-1, to=1-2]
  	\arrow["{T_\rho(T_\lambda(A))}"', from=1-1, to=2-2]
  \end{tikzcd}\]
  Also $\overline{\overline{\eta}}_A = \eta_A$ because $! \colon 0 \to 0$ is the identity. 
  The other properties follow from naturality in $u$ and $v$.
\end{proof}
In particular, it follows that there is no difference between graded monads and indexed monads over (join-)semilattices.

\section{Examples}\label{sec:examples}

In this section we discuss three extended examples, showing that localisable monads may interpret central idempotents as locations in a computer memory (Subsection~\ref{subsec:quantum}), physical locations in a network of interacting agents (Subsection~\ref{subsec:concurrent}), or time in extended processes (Subsection~\ref{subsec:stochastic}). These examples use the following characterisation of central idempotents in functor categories.
\begin{lemma}\label{lem:centralidempotentsfunctorcategory}
  If $\cat{C}$ is a category and $\cat{D}$ is a symmetric monoidal category, then the functor category $[\cat{C},\cat{D}]$ is again symmetric monoidal under pointwise tensor products.
  Regarding $\ZI(\cat{D})$ as a full subcategory of the slice category $\cat{D}/I$, there is an isomorphism of categories:
  \[
    \ZI[\cat{C},\cat{D}] \simeq [\cat{C},\ZI(\cat{D})]
  \]
\end{lemma}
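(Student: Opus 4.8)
The plan is to construct the isomorphism directly at the level of objects and morphisms, then check it interacts with the monoidal structure and idempotency correctly. First I would recall that a central idempotent in $[\cat{C},\cat{D}]$ is a natural transformation $u \colon U \Rightarrow I$ (where $I$ here is the constant functor at the tensor unit of $\cat{D}$, which is the tensor unit of $[\cat{C},\cat{D}]$ since tensor is pointwise) such that the canonical map $U \otimes U \Rightarrow U$ built from $u$ is invertible and the two ways of forming it agree. Because the tensor product, the coherence isomorphisms $\rho,\lambda$, and composition in $[\cat{C},\cat{D}]$ are all computed pointwise, the diagram $\rho_U \circ (U \otimes u) = \lambda_U \circ (u \otimes U)$ holds in $[\cat{C},\cat{D}]$ if and only if it holds in $\cat{D}$ at every object $c$ of $\cat{C}$, and the comparison map is invertible in $[\cat{C},\cat{D}]$ if and only if each component $u_c \colon U(c) \to I$ is a central idempotent in $\cat{D}$. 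Hence the assignment $u \mapsto (c \mapsto u_c)$ sends a central idempotent of $[\cat{C},\cat{D}]$ to a functor $\cat{C} \to \ZI(\cat{D})$; naturality of $u$ as a transformation $U \Rightarrow I$ is exactly functoriality of $c \mapsto u_c$ once we view each $u_c$ as an object of $\cat{D}/I$ landing in the full subcategory $\ZI(\cat{D})$.

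Next I would define the two functors realising the isomorphism. In one direction, a central idempotent $u$ of $[\cat{C},\cat{D}]$ goes to the functor described above; a morphism of central idempotents in $[\cat{C},\cat{D}]$ is an iso $m \colon U \Rightarrow V$ with $u = v \circ m$, and this sends to the natural transformation with components $m_c$, each of which witnesses $u_c \le v_c$ as a morphism in $\ZI(\cat{D}) \subseteq \cat{D}/I$; so we land in $[\cat{C},\ZI(\cat{D})]$. In the other direction, a functor $F \colon \cat{C} \to \ZI(\cat{D})$, composed with the inclusion $\ZI(\cat{D}) \hookrightarrow \cat{D}/I$ and the domain projection, gives a functor $U_F \colon \cat{C} \to \cat{D}$ together with a cocone to $I$, i.e. a natural transformation $u_F \colon U_F \Rightarrow I$; the pointwise idempotency established in the first paragraph shows $u_F$ is a central idempotent of $[\cat{C},\cat{D}]$. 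On morphisms, a natural transformation between such functors is pointwise a morphism in $\ZI(\cat{D})$, which assembles into a morphism of central idempotents in $[\cat{C},\cat{D}]$. Finally I would check these two assignments are mutually inverse on the nose (not merely up to iso): $(u \mapsto (c \mapsto u_c)) \mapsto u$ recovers $u$ exactly because a natural transformation is determined by its components, and conversely.

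One subtlety worth addressing carefully is the identification of two central idempotents via an isomorphism $m$ built into the definition: I should confirm that the equivalence relation on central idempotents of $[\cat{C},\cat{D}]$ matches, under the bijection, the relation of being isomorphic objects in $[\cat{C},\ZI(\cat{D})]$ — but since both are generated pointwise by the corresponding relation in $\cat{D}$, and an invertible natural transformation is exactly a pointwise-invertible one, this is immediate. The only genuine content is the pointwise characterisation of idempotency, and the mild bookkeeping of translating ``natural transformation $U \Rightarrow I$'' into ``functor into the slice $\cat{D}/I$''; I expect the main (still minor) obstacle to be stating cleanly that $\ZI(\cat{D})$ sits inside $\cat{D}/I$ as a \emph{full} subcategory, so that morphisms on the right-hand side are unconstrained beyond landing in the right place, matching the left-hand side where a morphism of central idempotents is any iso $m$ with $u = v \circ m$. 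Since $\cat{C}$ carries no monoidal structure, there is nothing to check about compatibility of the isomorphism with tensor beyond what was already used to get pointwise idempotency, so the proof is essentially complete once these translations are spelled out.
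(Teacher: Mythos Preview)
Your proposal is correct and follows essentially the same approach as the paper: construct the two mutually inverse assignments by taking components in one direction and assembling a functor plus cocone in the other. The only cosmetic difference is that the paper packages your componentwise verification of idempotency as the single observation that each evaluation functor $[\cat{C},\cat{D}] \to \cat{D}$ is strong monoidal and therefore preserves central idempotents; your unpacked version is the same argument one level down.
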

\begin{proof}
  Let $u \colon U \to I$ be a central idempotent in $[\cat{C},\cat{D}]$.
  The functor $[\cat{C},\cat{D}] \to \cat{D}$ that evaluates at a fixed object $C \in \cat{C}$ is strong monoidal and so preserves central idempotents. Hence each component $u_C \colon U(C) \to I$ represents a central idempotent in $\cat{D}$. This is functorial and gives one direction of the isomorphism.

  Conversely, let $F \colon \cat{C} \to \ZI(\cat{D})$ be a functor. Define $U \colon \cat{C} \to \cat{D}$ by $U(C)=\mathrm{dom}(F(C))$ and $u \colon U \Rightarrow I$ by $u_C = F(C)$. This is functorial and gives the other direction of the isomorphism. It is clear that these two assignments are inverses.
\end{proof}

\subsection{Quantum buffer}\label{subsec:quantum}

The (global) state monad on $\cat{Set}$ is a well-known monad that combines the properties of the reader and writer monads to implement computational side-effect in functional programming. 
It is defined as $T(-)=S\multimap (- \times S)$ for a state object $S \in \cat{Set}$. For example, to store one bit, take $S=\{0,1\}$.
The central idempotents of $\cat{Set}$ are (represented by) the empty set $\emptyset$ and the singleton set $1$. It follows that the (global) state monad is trivially localisable. This example is trivial but can be expanded in several ways:
\begin{enumerate}
	\item 
	Expanded to the category $\cat{Set}^n$, whose objects are $n$-tuples of sets and morphisms are $n$-tuples of functions. 
	The state monad on some object $A = (A_1,\ldots,A_n)$ in $\cat{Set}^n$ is 
	\[
	  T(A_1,\ldots,A_n)=(S_1,\ldots,S_n)\multimap \big((A_1,\ldots,A_n) \times (S_1,\ldots,S_n))
	\]
	for a chosen state object $S = (S_1,\ldots,S_n) \in \cat{Set^n}$. For example, to store $n$ bits, take $S_1=\cdots=S_n=\{0,1\}$.
	It follows from Lemma~\ref{lem:centralidempotentsfunctorcategory} that $\ZI(\cat{Set}^n) \simeq 2^n$. 
  While $\cat{Set}^n$ is symmetric monoidal closed, the state monad does not satisfy $T(A\multimap U) = T(A)\multimap T(U)$ as in Example~\ref{ex:strong}. There is still a strength, by currying the evaluation:
  \[
    T(A_1,\ldots,A_n)\times (U_1,\ldots,U_n) \times (S_1,\ldots,S_n)\to (S_1,\ldots,S_n)\times (A_1,\ldots,A_n) \times (U_1,\ldots,U_n)
  \]
	We have not discussed commutativity yet, but note that this strength is commutative in a sense made clear in Definition~\ref{def:commutative} below. Conceptually, this means that the computational side-effects modelled by a state monad ``over'' a region $(U_1,\ldots,U_n)$ are independent of those modelled by $(V_1,\ldots,V_n)$, assuming that $(U_1,\ldots,U_n)\times (V_1,\ldots,V_n)= 0$.

	\item 
	The localisable state monad of the previous point does not just work for cartesian closed categories such as $\cat{Set}^n$, but also for exponentiable objects in a symmetric monoidal category. For example, we can replicate it in the category $\cat{Hilb}$ of Hilbert spaces and completely positive linear maps used in quantum computation~\cite{heunenvicary:cqm}. To store one qubit, take $S=\mathbb{C}^2$. The monad then becomes $T(-) = S^* \otimes - \otimes S$, where $S^*=\cat{Hilb}(S,\mathbb{C})$ is the dual Hilbert space, which is isomorphic to $T(A) = A \otimes \mathbb{M}_2$, where $\mathbb{M}_2$ is the Hilbert space of complex 2-by-2 matrices.
	Similarly, to store $n$ qubits, move to $\cat{Hilb}^n$. 
	We can now see a phenomenon that didn't occur for cartesian categories: rather than a quantum memory, this monad models a quantum buffer of $n$ qubits, because there is no entanglement between the different qubits. 
	Because $\ZI(\cat{Hilb})=\{0,\mathbb{C}\}$, again $\ZI(\cat{Hilb}^n) \simeq 2^n$. The strength map is yet again given by the curry of the evaluation map, which makes $T(-)$ a commutative localisable monad in the sense of Definition~\ref{def:commutative} below.

	\item
	We can also promote the (global) state monad on $\cat{Set}$ in another direction, namely from $n=1$ or finite $n$ to an arbitrary topological space $X$ indexing the bits to be stored. Consider the category $\mathrm{Sh}(X)$ of ($\cat{Set}$-valued) sheaves on $X$, take $S$ to be the constant sheaf $S(U)=\{0,1\}$, and define $T(-) = S \multimap (- \otimes S)$. As in Example~\ref{ex:cartesian}, the central idempotents correspond to open subsets $U \subseteq X$, and this monad is still localisable. Its stalks (as discussed in Remark~\ref{rem:stalks}) are the simple (global) state monads on $\cat{Set}$ storing a single bit each.

	\item
	Points 2 and 3 combine to model a quantum buffer over an arbitrary locally compact Hausdorff topological space $X$. Consider the category $\cat{Hilb}_{C_0(X)}$ of Hilbert modules over $C_0(X)$, take $S$ to be Hilbert module $C_0(X,\mathbb{C}^2)$ of continuous functions $X \to \mathbb{C}^2$ that vanish at infinity, and define $T(-) = S^* \otimes - \otimes S$. As in Example~\ref{ex:hilbertmodules}, central idempotents are open subsets $U \subseteq X$. Again, this monad is localisable, with $T\restrict{U} = S_u^* \otimes - \otimes S_u$ for $S_u = C_0(U,\mathbb{C}^2)$. In fact, this example is related to the one in point 3, as Hilbert modules over $C_0(X)$ correspond to a Hilbert space internal to the topos $\mathrm{Sh}(X)$ by Takahashi's Theorem~\cite{barbosaheunen:sheaves,heunenreyes:frobenius}.
\end{enumerate}

\subsection{Concurrent processes}\label{subsec:concurrent}

Suppose $M_1$ is a monoid of actions that some agent 1 can perform, and $M_2$ is a monoid of actions that an agent 2 can perform. They could, for example, be free monoids over sets of atomic actions. Then we can form the coproduct $M_1+M_2$ of monoids, and quotient out a congruence that specifies $ab=ba$ for $a \in M_1$ and $b \in M_2$ when actions $a$ and $b$ are independent, to get the monoid $M$ of Mazurkiewicz traces~\cite{diekertmetivier:traces,winskelnielsen:concurrency}. Now $M$ localises to $M_1$ by projections $M \to M_i$ that disregard actions of the other agent.

The following lemma engineers a single category with two central idempotents and a monoid, that localises to the given ones. The idea is to take a product of categories, but to add \emph{silent actions}, that enforce the order in which both agents' actions occur, as in the pi calculus~\cite{milner:picalculus}.

\begin{lemma}\label{lem:bottomup}
  Let $M_1$ and $M_2$ be monoids in symmetric monoidal categories $\cat{C}_1$ and $\cat{C}_2$ that have an initial object $0$ satisfying $A \otimes 0\simeq 0$ for all objects $A$. There is a symmetric monoidal category $\cat{C}$ with a monoid $M$ and central idempotents $u_1, u_2$, that allows an isomorphism $\cat{C}\restrict{u_i} \simeq \cat{C}_i$ of monoidal categories under which $M_i$ corresponds with $\cat{C}\restrict{u_i \leq 1}(M)$.
\end{lemma}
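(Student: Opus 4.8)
The plan is to realise $\cat{C}$ as a decorated product of $\cat{C}_1$ and $\cat{C}_2$. The bare construction $\cat{C} = \cat{C}_1 \times \cat{C}_2$, with the pointwise symmetric monoidal structure and $M = (M_1, M_2)$ carrying the componentwise monoid structure, already witnesses the statement; the ``silent actions'' alluded to above are a refinement that decorates this product so that the single monoid $M$ records the interleaving of the two agents' actions rather than treating them as wholly independent, and this refinement does not affect the localisation argument, so I would carry out the proof for the plain product. First I would exhibit the central idempotents. The projections $\cat{C}_1 \times \cat{C}_2 \to \cat{C}_i$ are strict monoidal, and a morphism of the product is a central idempotent precisely when both of its components are, so $\ZI(\cat{C}_1 \times \cat{C}_2) \cong \ZI(\cat{C}_1) \times \ZI(\cat{C}_2)$ by an argument paralleling Lemma~\ref{lem:centralidempotentsfunctorcategory}. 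The hypothesis that each $\cat{C}_i$ has an initial object $0_i$ with $A \otimes 0_i \simeq 0_i$ makes $0_i$ into a central idempotent of $\cat{C}_i$: the object $0_i \otimes 0_i$ is again initial, so its unique endomorphism is invertible, and the two composites in the central-idempotent equation both coincide with it because $\mathrm{Hom}(0_i \otimes 0_i, 0_i)$ is a singleton. I then set $u_1 \colon (I_1, 0_2) \to (I_1, I_2)$ equal to $(\id_{I_1}, !)$ --- that is, $u_1 = 1 \times 0_2$ under the decomposition of $\ZI$ --- and symmetrically $u_2 = 0_1 \times 1$.

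Next I would identify $\cat{C}\restrict{u_1}$. By Proposition~\ref{prop:Cu} its objects are the objects $(A_1, A_2)$ of $\cat{C}$, and by Remark~\ref{rem:Cuequivalent} each is isomorphic there to $(A_1, A_2) \otimes (I_1, 0_2) \simeq (A_1, 0_2)$ via the absorption isomorphism $A_2 \otimes 0_2 \simeq 0_2$; a morphism $(A_1, A_2) \to (B_1, B_2)$ of $\cat{C}\restrict{u_1}$ is by definition a morphism $(A_1, 0_2) \to (B_1, B_2)$ of $\cat{C}$, and its second component is forced to be the unique map out of the initial object, so these morphisms are in natural bijection with $\cat{C}_1(A_1, B_1)$. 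Thus $(A_1, A_2) \mapsto A_1$ is fully faithful and essentially surjective, hence an equivalence $\cat{C}\restrict{u_1} \to \cat{C}_1$. I would then check that the twisted formulas of Proposition~\ref{prop:Cu} for composition, identities and the tensor product of morphisms in $\cat{C}\restrict{u_1}$ collapse --- using that $u_1$ is a genuine central idempotent and the absorption isomorphisms --- to ordinary composition and the pointwise tensor of $\cat{C}_1$, making this equivalence symmetric monoidal. Symmetrically $\cat{C}\restrict{u_2} \simeq \cat{C}_2$.

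It remains to transport the monoid. By Lemma~\ref{lem:adjunctionCu} the functor $\cat{C}\restrict{u_1 \leq 1} \colon \cat{C} \to \cat{C}\restrict{u_1}$ is strict monoidal, so it carries $M = (M_1, M_2)$ to a monoid in $\cat{C}\restrict{u_1}$; it acts as the identity on objects, so under the equivalence $\cat{C}\restrict{u_1} \simeq \cat{C}_1$ this monoid corresponds to $M_1$, and strict monoidality forces its unit and multiplication to agree with those of $M_1$. The same argument on the other coordinate finishes the plain-product case. For the silent-action version one takes instead $\cat{C} = \cat{C}_1 \times \cat{C}_2$ decorated with an additional ``clock'' component --- so that a monoid in $\cat{C}$ is a pair of monoids carrying compatible time-stamps --- and $M$ the free such decorated monoid on $M_1$ and $M_2$; one defines $u_1, u_2$ exactly as above, the restriction functors discard the clock component, and the identical computation shows that $\cat{C}\restrict{u_i \leq 1}(M)$ corresponds to $M_i$ under $\cat{C}\restrict{u_i} \simeq \cat{C}_i$.

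The main obstacle I anticipate lies in identifying $\cat{C}\restrict{u_i}$ with $\cat{C}_i$: Proposition~\ref{prop:Cu} specifies composition, identities and the tensor product of morphisms in $\cat{C}\restrict{u}$ by formulas laced with inverse unitors, associators, and the central idempotent $u$ itself, and one must patiently verify that after transporting along the absorption isomorphisms $A \otimes 0_i \simeq 0_i$ all of this degenerates to the plain structure of $\cat{C}_i$, and in particular that the equivalence is monoidal in the intended (strong) sense. By comparison the conceptual input --- that $0_i$ is a central idempotent and that $\ZI$ of a product is the product of the $\ZI$'s --- is short. A final subtlety is that with the plain product $\cat{C}\restrict{u_i}$ is only equivalent, not isomorphic, to $\cat{C}_i$, which is one more reason the silent-action presentation (where the object sets can be made to match) is convenient if an isomorphism is wanted.
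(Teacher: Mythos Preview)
Your plain-product construction $\cat{C}=\cat{C}_1\times\cat{C}_2$ with $u_1=(I_1,0_2)$, $u_2=(0_1,I_2)$ and $M=(M_1,M_2)$ is correct and does prove the lemma as stated; the verification that $(I_i,0_j)$ is a central idempotent, that $\cat{C}\restrict{u_i}\simeq\cat{C}_i$ monoidally, and that $M$ restricts to $M_i$ all go through essentially as you describe. This is, however, a genuinely different and much lighter construction than the paper's. The paper does not take the bare product: it first adjoins to $\cat{C}_1\times\cat{C}_2$ new endomorphisms $\tau_{A,B}\colon(A,B)\to(A,B)$ (the silent actions), so that morphisms of the resulting $\cat{C}'$ are finite alternating lists of product morphisms and $\tau$'s; it then takes the free symmetric monoidal category $\cat{C}''$ on $\cat{C}'$; and finally it quotients by a monoidal congruence that (i) kills each $\tau$ after tensoring with $(I,0)$ or $(0,I)$ and (ii) collapses the free tensor back to the pointwise one. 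Only after this quotient do $(I,0)$ and $(0,I)$ become central idempotents, and the congruence clauses are precisely what make $\cat{C}\restrict{u_i}$ recover $\cat{C}_i$.

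What each approach buys: yours is the minimal witness to the existence claim and avoids the free-monoidal-category/congruence machinery entirely. The paper's construction is engineered for the surrounding application (Mazurkiewicz traces, the writer-monad Example that follows): the $\tau$'s let $M=(M_1,M_2)$ record a genuine interleaving of the two agents' actions rather than treating them as independent, which the plain product cannot do. Your closing sketch of the ``silent-action version'' as the product decorated with a clock component is too vague and does not match what the paper actually does; if you want that refinement you would need the adjoin--free--quotient recipe rather than an extra factor. Your remark that the plain product yields only an equivalence $\cat{C}\restrict{u_i}\simeq\cat{C}_i$ is fair, and the paper's own argument is not obviously stricter on this point.
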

If $\cat{C}_i$ does not yet have an initial object $0$ satisfying $A \otimes 0 \simeq 0$, we may freely adjoin one to obtain a well-defined symmetric monoidal category.
\begin{proof}
  First construct a new category $\cat{C}'$. 
	Objects are pairs $(A,B)$ of $A \in \cat{C}_1$ and $B \in \cat{C}_2$.
	Morphisms $(A,B) \to (A',B')$ include pairs $(f,g)$ of $f \in \cat{C}_1(A,A')$ and $g \in \cat{C}_2(B,B')$, to which we freely adjoin morphisms $\tau_{A,B} \colon (A,B) \to (A,B)$ for each object $(A,B)$.
  Thus morphisms are finite lists $\big( (f_1,g_1), \tau_1, \ldots, \tau_{n-1}, (f_n,g_n) \big)$ where the domain of $\tau_n$ is the codomain of $f_n \otimes g_n$. Composition concatenates and then contracts:
  \begin{align*}
    & \big( (f'_1,g'_1), \tau'_1, \ldots, (f'_n,g'_n) \big) 
    \circ
    \big( (f_1,g_1), \tau_1, \ldots, (f_m,g_m) \big)
    \\&= 
    \big( (f_1,g_1), \tau_1, \ldots, (f'_1 \circ f_m, g'_1 \circ g_m), \tau, \ldots, (f'_n,g'_n) \big)
  \end{align*}
  Defining identity to be the trivial list $(\mathrm{id}[A],\mathrm{id}[B])$ makes $\cat{C}'$ into a well-defined category.

  Next, take the free symmetric monoidal category $\cat{C}''$ on $\cat{C}'$. Objects of $\cat{C}''$ are finite lists of objects of $\cat{C}'$, and morphisms are pairs $(\pi, h_1,\ldots,h_n)$ of a permutation $\pi$ of list indices and a list of morphisms in $\cat{C}'$; see for example~\cite{abramsky:scalars}.
  Finally, consider the generalised equivalence relation~\cite{bednarczyketal:congruences} on $\cat{C}''$ generated by 
  \begin{align*}
    (I,0) \otimes \tau_{A,B} & \sim (A,0) \\
    (0,I) \otimes \tau_{A,B} & \sim (0,B) \\
    \big( \pi, (f_1,g_1),(f_2,g_2)\big) & \sim 
    (\sigma,(f_1 \otimes f_2, g_1 \otimes g_2)) 
  \end{align*}
  where $\pi$ is the bijection $1 \mapsto 2$ and $2 \mapsto 1$ on $\{1,2\}$. This is a symmetric monoidal congruence, so $\cat{C} = \cat{C}'' \slash \mathop{\sim}$ is a well-defined symmetric monoidal category.

  Because $0$ is initial and $A \otimes 0 = 0$ in $\cat{C}_i$, the objects $(I,0)$ and $(0,I)$ in $\cat{C}'$ become central idempotents $u_1,u_2$ in $\cat{C}$, and moreover $(A,B) \mapsto [A]_\sim$ is an isomorphism $\cat{C}\restrict{u_1} \simeq \cat{C}_1$ and similarly for $u_2$. 
  Finally, $M=(M_1,M_2)$ is a monoid in $\cat{C}$, that localises to $M_i$ by construction.
\end{proof}

In the proof of the previous lemma, we could alternatively have described $\cat{C}$ as consisting of formal string diagrams generated by $\cat{C}_1 \times \cat{C}_2$ and the silent actions $\tau_{A,B}$~\cite{curienmimram:monoidalpresentations}, or as terms in a formal syntactic language~\cite{jay:monoidallanguages}.

\begin{example}
  Let $M_i$ be monoids in $\cat{C}_i=\cat{Set}$. 
  They induce writer monads $T_i(A)=M_i \otimes A$ on $\cat{C}_i$.
  Now the monoid $M$ in the category $\cat{C}$ of the previous lemma induces a writer monad $T$ on $\cat{C}$.
  The monad $T$ is localisable by Example~\ref{ex:strong}, and $T\restrict{i}$ corresponds to $T_i$ under the isomomorphism $\cat{C}\restrict{i} \simeq \cat{C}_i$.
  Thus $T$ tracks the agents' actions as side effects during a (distributed) computation. 
\end{example}

It seems possible to extend this example to a network where the communicating agents form the points of an arbitrary topological space.

\subsection{Stochastic processes}\label{subsec:stochastic}

Write $\cat{Meas}$ for the category of measurable spaces and measurable functions. This is a symmetric monoidal category, where the tensor unit is the singleton set with its unique $\sigma$-algebra, and the tensor product of two measurable spaces is the cartesian product of the sets with the tensor product of the $\sigma$-algebras. The monoidal category $\cat{Meas}$ has only two central idempotents: the empty set $\emptyset$, and the tensor unit $1$ itself. 

Instead, consider the functor category $[\mathbb{N},\cat{Meas}]$, where the partially ordered set $\mathbb{N}$ is considered as a category by having a morphism $m \to n$ if and only if $m \leq n$. Its objects are sequences $X_1,X_2,X_3,\ldots$ of measurable spaces. Lemma \ref{lem:centralidempotentsfunctorcategory} shows that this category has many more central idempotents.
It follows that central idempotents $u \colon U \Rightarrow 1$ in $[\mathbb{N},\cat{Meas}]$ correspond to upward-closed subsets of $\mathbb{N} \cup \{\infty\}$, or more succinctly, to elements of $n \in \mathbb{N} \cup \{\infty\}$, by
\[
  U(m) = \begin{cases}
    \emptyset & \text{ if } m<n \\
    1 & \text{ if } m \geq n
  \end{cases}
\]

The \emph{Giry monad} $G \colon \cat{Meas} \to \cat{Meas}$ takes a measurable space to the set of probability measures on it~\cite{giry:girymonad}. It extends to a monad on $[\cat{N},\cat{Meas}]$. 

\begin{example}
  The monad $\widehat{G} = G \circ (-)$ on $[\cat{N},\cat{Meas}]$ is localisable, where the maps $\widehat{G}(X) \otimes U \Rightarrow \widehat{G}(X \otimes U)$ can simply be taken to be identities (because $G(\emptyset)=\emptyset$).
  The restricted category $[\cat{N},\cat{Meas}]\restrict{n}$ is $[\{n,n+1,\ldots\},\cat{Meas}]$, and the monad $\widehat{G}\restrict{n}$ is simply the restriction of $\widehat{G}$ to $\{n,n+1,\ldots\}$.

  The adjunction between $\cat{Meas}$ and the Kleisli category $\Kl(G)$ lifts to an adjunction between $[\cat{N},\cat{Meas}]$ and $[\cat{N},\Kl(G)]$. The latter is not equivalent to the Kleisli category of $\widehat{\cat{G}}$ because the functor $[\cat{N},\cat{Meas}] \to [\cat{N},\Kl(G)]$ that turns a sequence of elements of measurable spaces into a sequence of Dirac measures it not essentially surjective~\cite[Theorem~9]{westerbaan:kleisli}. 

  The objects of $[\cat{N},\cat{Meas}]$ are \emph{stochastic processes}~\cite{lawvere:probabilistic,giry:girymonad,fritz:markov}. Instead of $(\mathbb{N},\leq)$, we could equally well have taken continuous time $(\mathbb{R}^{\geq 0},\leq)$. In fact, we could also have regarded the monoid $(\mathbb{N},+,0)$ or $(\mathbb{R}^{\geq 0},+,0)$ as a one-object category. Then $[\cat{N},\Kl(G)]$ would consist of stationary processes, but the central idempotents would remain the same by Lemma~\ref{lem:centralidempotentsfunctorcategory}: ideals of $\mathbb{N}$ or $\mathbb{R}^{\geq 0}$ under $+$ are also upward-closed subsets.
\end{example}

Rather than stochastic (Markov) processes, that depend on the history thus far (one time step ago only), we could have taken more interesting partially ordered sets than the totally ordered ones $\mathbb{N}$ and $\mathbb{R}^{\geq 0}$. 

\section{Algebras}\label{sec:algebras}

Let $\cat{C}$ be a symmetric monoidal category. As we have seen in Section~\ref{sec:formalmonads}, a localisable monad $T \colon \cat{C} \to \cat{C}$ is equivalently described as a formal monad $T\restrict{-}$ in the 2-category $\cat{K}=[\ZI(\cat{C})\op,\cat{Cat}]$. 
What are its formal (Eilenberg-Moore) algebras?

The general answer is described in~\cite{lackstreet:formalmonads2,street:formalmonads}. The formal algebra category is an object of $\cat{K}$ satisfying the following. 
For any object $X \in \cat{K}$, the formal monad $T\restrict{-}$ induces a (concrete) monad $K(X,T\restrict{-})$ on the category $\cat{K}(X,\cat{C}\restrict{-})$; this monad sends a natural transformation $\beta \colon X \Rightarrow \cat{C}\restrict{-}$ to the natural transformation with components $T\restrict{u} \circ \beta_u \colon X_u \to \cat{C}\restrict{u}$. This (concrete) monad has a (concrete) Eilenberg-Moore category of algebras. Objects are pairs of a natural transformation $\beta$ and a modification $\theta$ of type
\begin{equation}\label{eq:formalalgebra}\begin{tikzcd}
	& {\cat{C}\restrict{u}} \\
	{X_u} & {} & {\cat{C}\restrict{u}}
	\arrow["{\beta_u}", curve={height=-15pt}, from=2-1, to=1-2]
	\arrow["{T\restrict{u}}", curve={height=-15pt}, from=1-2, to=2-3]
	\arrow["{\beta_u}"', from=2-1, to=2-3]
	\arrow["{\theta_u}"', shorten <=6pt, shorten >=6pt, Rightarrow, from=1-2, to=2-2]
\end{tikzcd}\end{equation}
satisfying the algebra laws.
Morphisms are modifications $\varphi \colon \beta \Rrightarrow \beta'$ satisfying:
\begin{equation}\label{eq:formalalgebramap}\begin{tikzcd}[column sep=4mm]
	& {X_u} \\
	{\cat{C}\restrict{u}} \\
	& {\cat{C}\restrict{u}}
	\arrow["{\beta_u}"', curve={height=12pt}, from=1-2, to=2-1]
	\arrow["{T\restrict{u}}"', curve={height=12pt}, from=2-1, to=3-2]
	\arrow[""{name=0, anchor=center, inner sep=0}, "{\beta_u}"{description}, from=1-2, to=3-2]
	\arrow[""{name=1, anchor=center, inner sep=0}, "{\beta'_u}", out=-10,in=10,looseness=1.3, from=1-2, to=3-2]
	\arrow["{\theta_u}", shorten <=6pt, shorten >=9pt, Rightarrow, from=2-1, to=0]
	\arrow["{\varphi_u}", shorten <=12pt, shorten >=9pt, Rightarrow, from=0, to=1]
\end{tikzcd}
\qquad = \qquad
\begin{tikzcd}[column sep=4mm]
	& {X_u} \\
	{\cat{C}\restrict{u}} \\
	& {\cat{C}\restrict{u}}
	\arrow[""{name=0, anchor=center, inner sep=0}, "{\beta_u}"', curve={height=12pt}, from=1-2, to=2-1]
	\arrow["{T\restrict{u}}"', curve={height=12pt}, from=2-1, to=3-2]
	\arrow[""{name=1, anchor=center, inner sep=0}, "{\beta'_u}"{description}, curve={height=-25pt}, from=1-2, to=2-1]
	\arrow[""{name=2, anchor=center, inner sep=0}, "{\beta'_u}", out=-10,in=10,looseness=1.3, from=1-2, to=3-2]
	\arrow["{\varphi_u}", shorten <=7pt, shorten >=7pt, Rightarrow, from=0, to=1]
	\arrow["{\theta'_u}"', shorten <=15pt, shorten >=15pt, Rightarrow, from=1, to=2]
\end{tikzcd}\end{equation}
This defines the object-part of a 2-functor $\cat{K}\op \to \cat{Cat}$. Now $A \in \cat{K}$ is the \emph{formal algebra} object of the formal monad $\cat{T}\restrict{-}$ when this 2-functor is naturally isomorphic to $\cat{K}(-,A)$.

\begin{proposition}\label{prop:algebras}
  Let $T$ be a localisable monad on a symmetric monoidal category $\cat{C}$.
  The formal monad $T\restrict{-}$ in $[\ZI(\cat{C})\op,\cat{Cat}]$	has a formal algebra object $A\restrict{-}$ where $A\restrict{u}=\Alg(T\restrict{u})$ is the category of algebras of $T\restrict{u}$.
\end{proposition}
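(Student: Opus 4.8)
The plan is to verify directly that the assignment $u \mapsto \Alg(T\restrict{u})$ assembles into an object $A\restrict{-}$ of $\cat{K} = [\ZI(\cat{C})\op,\cat{Cat}]$ and that $\cat{K}(-,A\restrict{-})$ is naturally isomorphic to the 2-functor of formal-algebra data described just above the statement. First I would make $A\restrict{-}$ into a presheaf of categories: for $u \leq v$ with $u = v \circ m$, Lemma~\ref{lem:monadmorphism} gives a lax monad morphism $(\cat{C}\restrict{u \leq v}, \varphi)$ from $T\restrict{v}$ to $T\restrict{u}$, and by the standard fact that a lax monad morphism induces a functor between Eilenberg--Moore categories, this yields $\Alg(\cat{C}\restrict{u \leq v}) \colon \Alg(T\restrict{v}) \to \Alg(T\restrict{u})$, sending an algebra $a \colon T\restrict{v}(A) \to A$ to the algebra $\cat{C}\restrict{u \leq v}(a) \circ \varphi_A$ on $\cat{C}\restrict{u \leq v}(A)$. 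Functoriality in $u \leq v$ follows from the decomposition $\cat{C}\restrict{u \leq w} = \cat{C}\restrict{u \leq v} \circ \cat{C}\restrict{v \leq w}$ of Lemma~\ref{lem:decompFG} together with compatibility of the $\varphi$'s (which is exactly naturality in the sense of~\eqref{eq:st:naturalinu}), so $A\restrict{-} \in \cat{K}$.

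Next I would set up the comparison. Given $X \in \cat{K}$, an object of the (concrete) Eilenberg--Moore category of the induced monad $K(X,T\restrict{-})$ on $\cat{K}(X,\cat{C}\restrict{-})$ is a natural transformation $\beta \colon X \Rightarrow \cat{C}\restrict{-}$ together with a modification $\theta$ as in~\eqref{eq:formalalgebra} satisfying the algebra laws; componentwise this is, for each $u$, a functor $\beta_u \colon X_u \to \cat{C}\restrict{u}$ and a natural transformation $\theta_u \colon T\restrict{u}\circ\beta_u \Rightarrow \beta_u$ making each $(\beta_u(x), (\theta_u)_x)$ a $T\restrict{u}$-algebra naturally in $x$, i.e.\ exactly a functor $\widehat{\beta}_u \colon X_u \to \Alg(T\restrict{u})$ lifting $\beta_u$ along the forgetful functor. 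The modification and algebra-law conditions relating different $u$'s say precisely that these lifts commute with the transition functors $\cat{C}\restrict{u \leq v}$ and $\Alg(\cat{C}\restrict{u \leq v})$, hence assemble into a single natural transformation $\widehat{\beta} \colon X \Rightarrow A\restrict{-}$ in $\cat{K}$. Morphisms~\eqref{eq:formalalgebramap} unwind in the same way to modifications $X \Rrightarrow A\restrict{-}$ whose components are algebra morphisms, i.e.\ $2$-cells in $\cat{K}$ landing in $A\restrict{-}$. This gives, for each $X$, an isomorphism of categories between the formal-algebra data at $X$ and $\cat{K}(X,A\restrict{-})$, and naturality in $X$ is straightforward since everything is defined componentwise by precomposition.

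The main obstacle I expect is bookkeeping rather than conceptual: one has to check that the component-wise Eilenberg--Moore lifts $\widehat{\beta}_u$ are genuinely natural in $u$ as maps into $A\restrict{-}$, which means chasing the modification axiom for $\theta$ against the definition of $\Alg(\cat{C}\restrict{u \leq v})$ and the formula $\varphi_A = T(A)\otimes u$ from Lemma~\ref{lem:monadmorphism}; this uses~\eqref{eq:st:naturalinu} and bifunctoriality of $\otimes$ in essentially the same way as the proof of Lemma~\ref{lem:monadmorphism}. A secondary point requiring care is that the forgetful functors $\Alg(T\restrict{u}) \to \cat{C}\restrict{u}$ are compatible with the transition functors on both sides, so that the described correspondence respects the relevant fibrations; this is immediate from the definition of $\Alg(\cat{C}\restrict{u\leq v})$ on underlying objects. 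I would relegate these verifications, together with checking the algebra and modification laws translate correctly, to Appendix~\ref{sec:proofs}, and in the main text give the construction of $A\restrict{-}$ and the statement of the natural isomorphism $\cat{K}(-,A\restrict{-}) \cong (\text{formal-algebra data})$.
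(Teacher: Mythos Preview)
Your proposal is correct and follows essentially the same approach as the paper: build $A\restrict{-}$ as a presheaf using the lax monad morphism of Lemma~\ref{lem:monadmorphism} to obtain the transition functors $\Alg(T\restrict{v}) \to \Alg(T\restrict{u})$, and then identify the formal-algebra data at $X$ with $\cat{K}(X,A\restrict{-})$ componentwise. The paper's proof is terser---it simply asserts that the bookkeeping you spell out (naturality of the lifts in $u$, matching of morphisms with~\eqref{eq:formalalgebramap}, 2-naturality in $X$) is ``labour-intensive but straightforward''---so your version is in fact a more explicit rendering of the same argument.
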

\begin{proof}
  If $u \leq v$	then the monad morphism $\cat{C}\restrict{u \leq v}$ of Lemma~\ref{lem:monadmorphism} induces a functor $A\restrict{v} \to A \restrict{u}$, so $A$ is a well-defined object of $\cat{K}=[\ZI(\cat{C})\op,\cat{Cat}]$.
  Now, for an object $X \in \cat{K}$, the hom-category $\cat{K}(X,A)$ has as objects natural transformations $\beta_u \colon X_u \to \Alg(T\restrict{u})$. But the objects of $\Alg(T\restrict{u})$ are themselves morphisms $\theta_u \colon T\restrict{u}(B) \to B$ in $\cat{C}\restrict{u}$, that furthermore satisfy the algebra laws. These assemble into a modification satisfying~\eqref{eq:formalalgebra}. It is labour-intensive but straightforward to verify that the morphisms of $\Alg(T\restrict{u})$ similarly match modifications satisfying~\eqref{eq:formalalgebramap}, and that this in fact gives a 2-natural isomorphism to $A\restrict{-}$.
  Thus $A\restrict{-}$ is a formal algebra object.
\end{proof}

Similarly, a \emph{formal Kleisli algebra} object of the formal monad $T\restrict{-}$ is characterised in~\cite{lackstreet:formalmonads2,street:formalmonads} as a formal algebra object in the 2-category $[\ZI(\cat{C})\op,\cat{Cat}\op]$, where $\cat{Cat}\op$ has reversed the 1-cells but not the 2-cells of $\cat{Cat}$.

\begin{corollary}\label{cor:kleisli}
  Let $T$ be a localisable monad on a symmetric monoidal category $\cat{C}$. The formal monad $T\restrict{-}$ in $[\ZI(\cat{C})\op,\cat{Cat}]$	has a formal Kleisli object $K\restrict{-}$ where $K\restrict{u}=\Kl(T\restrict{u})$ is the Kleisli category of $T\restrict{u}$.
  \qed
\end{corollary}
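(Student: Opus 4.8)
The plan is to leverage the duality between Kleisli and Eilenberg--Moore constructions at the formal level, exactly as Street does for concrete monads. Recall that for a formal monad in a 2-category $\cat{K}$, the formal Kleisli object is the same thing as the formal algebra object computed in $\cat{K}$ with its 1-cells reversed but 2-cells unchanged; this is the observation already quoted before the corollary. So the strategy is: identify the 2-category in which we must take formal algebras, check that $T\restrict{-}$ is again a formal monad there, and then apply Proposition~\ref{prop:algebras} verbatim.

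First I would note that $[\ZI(\cat{C})\op,\cat{Cat}]\op$, in the sense relevant here (reverse 1-cells and 2-cells in the pointwise fashion dictated by the formal-Kleisli recipe), is $[\ZI(\cat{C})\op,\cat{Cat}\op]$: a natural transformation $\cat{C}\restrict{-}\Rightarrow\cat{C}\restrict{-}$ in this 2-category has, for each $u$, a functor $\cat{C}\restrict{u}\to\cat{C}\restrict{u}$ but the naturality squares for $u\leq v$ now commute with the arrows $\cat{C}\restrict{u\leq v}$ reversed --- which is harmless since $\formal{T}$ restricted to a single $u$ is just the concrete monad $T\restrict{u}$, and the compatibility with $\cat{C}\restrict{u\leq v}$ is the same equation~\eqref{eq:nat_on_mor} read in the opposite category. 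Thus $T\restrict{-}$, together with $\formal\mu$ and $\formal\eta$ (whose components $\mu\restrict{u},\eta\restrict{u}$ are unchanged, as 2-cells are not reversed), is a formal monad in $[\ZI(\cat{C})\op,\cat{Cat}\op]$.

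Next I would invoke Proposition~\ref{prop:algebras} applied to this formal monad in $[\ZI(\cat{C})\op,\cat{Cat}\op]$: its formal algebra object is $u\mapsto\Alg(T\restrict{u})$ computed in $\cat{Cat}\op$, that is, $u\mapsto\Alg(T\restrict{u})\op$. Dualising back --- i.e. recognising the formal algebra object in $[\ZI(\cat{C})\op,\cat{Cat}\op]$ as the formal Kleisli object in $[\ZI(\cat{C})\op,\cat{Cat}]$ --- and using that $\Kl(T\restrict{u})\simeq\Alg(T\restrict{u})\op$ is not quite right as stated; rather $\Kl(T\restrict{u})$ embeds as the free algebras, and the correct statement from~\cite{street:formalmonads} is that the formal Kleisli object of $T\restrict{-}$ in $\cat{K}$ is the formal algebra object of $T\restrict{-}$ in $\cat{K}$ with 1-cells reversed. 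So the content reduces to checking that the transition functors $\cat{C}\restrict{u\leq v}$ restrict to the Kleisli subcategories: given $u\leq v$, the lax monad morphism $\cat{C}\restrict{u\leq v}\colon T\restrict{v}\to T\restrict{u}$ of Lemma~\ref{lem:monadmorphism} induces a functor $\Kl(T\restrict{v})\to\Kl(T\restrict{u})$ (dually to the Eilenberg--Moore case), so $u\mapsto\Kl(T\restrict{u})$ is a well-defined object of $\cat{K}$, and it is the image under the reversal of the formal algebra object from Proposition~\ref{prop:algebras}. That it satisfies the universal property characterising the formal Kleisli object is then immediate from Proposition~\ref{prop:algebras} and the cited duality.

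The main obstacle I anticipate is purely bookkeeping: being careful about \emph{which} variance is reversed. The formal Kleisli object is not the formal algebra object in the full opposite 2-category (that would reverse 2-cells too), but in the mixed 2-category $\cat{Cat}\op$ indexed over the \emph{same} base $\ZI(\cat{C})\op$; one must check that the presheaf structure $u\mapsto\cat{C}\restrict{u}$ and the modifications $\formal\mu,\formal\eta$ survive this reversal (they do, because 2-cells and the indexing category are untouched), and that the lax/oplax distinction for the monad-morphism functors $\cat{C}\restrict{u\leq v}$ lines up correctly with the Kleisli rather than Eilenberg--Moore functoriality. Once that alignment is pinned down, no new calculation beyond Proposition~\ref{prop:algebras} is needed, which is why the proof can be stated as ``qed''.
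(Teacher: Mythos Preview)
Your approach is essentially the paper's: invoke the Street duality (formal Kleisli object in $\cat{K}$ equals formal Eilenberg--Moore object in $\cat{K}$ with 1-cells reversed) and rerun the pointwise argument of Proposition~\ref{prop:algebras} in $[\ZI(\cat{C})\op,\cat{Cat}\op]$. That is exactly why the paper discharges the corollary with a bare \qed.

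One minor imprecision worth flagging: you write that the \emph{lax} monad morphism $\cat{C}\restrict{u\leq v}$ induces a functor $\Kl(T\restrict{v})\to\Kl(T\restrict{u})$ ``dually to the Eilenberg--Moore case''. In general that is false---lax morphisms induce Eilenberg--Moore functors, while \emph{oplax} ones induce Kleisli functors. It causes no harm here because $\varphi_A = T(A)\otimes u$ is the identity on $T(A)$ in $\cat{C}\restrict{u}$ (this is the content of equations~\eqref{eq:equal_ob}--\eqref{eq:nat_on_mor}: $\overline{T}$ is a \emph{strict} natural transformation), so the monad morphism is simultaneously lax and oplax and hence induces both. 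Your self-corrected detour through $\Alg(T\restrict{u})\op$ is unnecessary: the formal algebra object in $\cat{Cat}\op$ of a monad $T$ is precisely $\Kl(T)$, not $\Alg(T)\op$, and once you use that directly the argument is one line.
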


A Kleisli category of a commutative monad on a symmetric monoidal category is again symmetric monoidal~\cite{day:kleisli}. It would be interesting to see if there is a notion that stands to localisability as commutativity stands to strength, that guarantees that the formal Kleisli algebra object of the previous corollary is a monoid in $\cat{K}=[\ZI(\cat{C})\op,\cat{Cat}]$. We leave this for future work, but give a tentative (re)definition now.

\begin{definition}\label{def:commutative}
  A localisable monad $T$ on a symmetric monoidal category $\cat{C}$ is \emph{commutative} when:
  \begin{equation}
\begin{tikzcd}[row sep = 3.2mm]
	{T(A) \otimes U \otimes V} && {T(A \otimes U) \otimes V} && {T(A \otimes U \otimes V)} \\
	\\
	{T(A) \otimes V \otimes U} && {T(A \otimes V) \otimes U} && {T(A \otimes V \otimes U)}
	\arrow["{T(A) \otimes \sigma_{U,V}}"', from=1-1, to=3-1]
	\arrow["{\str_{A,V}\otimes U}"', from=3-1, to=3-3]
	\arrow["{\str_{A \otimes V,U}}"', from=3-3, to=3-5]
	\arrow["{T(A \otimes \sigma_{V,U})}"', from=3-5, to=1-5]
	\arrow["{\str_{A,U} \otimes V}", from=1-1, to=1-3]
	\arrow["{\str_{A \otimes U,V}}", from=1-3, to=1-5]
\end{tikzcd}
  \label{eq:commutative}
  \end{equation}
\end{definition}

It follows from this definition that if $u \wedge v = 0$, then the computational side-effects modeled by $T_u$ and $T_v$ do not influence each other. Intuitively, side-effects $T_u$ and $T_v$ that act in disjoint areas must be independent of each other.

\section{Further work}\label{sec:conclusion}

There are several interesting directions for further research.
\begin{itemize}
  \item We have decomposed a localisable monad into monads on local monoidal categories, but can a monad on a local monoidal category be decomposed further? For example, the local state monad~\cite{plotkinpower:monads} is based on the presheaf category $[\cat{Inj},\cat{Set}]$. Its central idempotents correspond to natural numbers, topologised by saying that a subset is open when it is upward-closed under the usual ordering of natural numbers. This topological space is already local: every net converges to the focal point $0$. The `decomposition' using coends of~\cite{plotkinpower:monads} relies on the base category $[\cat{Inj},\cat{Set}]$ having much more structure rather than just a monoidal category. The successor function of natural numbers there affords the possibility to allocate fresh locations. Our example of local states in Section~\ref{subsec:quantum} completely ignored this possibility. Can this extra structure be axiomatised -- using open sets rather than points -- and used for a further decomposition? 

	\item Two monads on the same base category can be composed as soon as there is a distributive law between them~\cite{beck:distributivelaws,zwart:thesis}. When does a distributive law respect the localisable nature of the monads, and how does it interact with their decomposition into monads on local monoidal categories? 

	\item More generally than monads, when is a PROP localisable, and how does a localisable PROP decompose into local ones~\cite{lack:props,power:indexedlawvere}? 

	\item Formal monads form a bridge between the ``top-down'' localisable monads and the ``bottom-up'' approaches. Can this relationship be made more constructive? Given monads $T_i$ on possibly different monoidal base categories $\cat{C}_i$, can we construct a monad $T$ on a monoidal category $\cat{C}$ with central idempotents $i$ such that $\cat{C}\restrict{i} \simeq \cat{C}_i$ and $T\restrict{i} \simeq T_i$? The free construction of Lemma~\ref{lem:bottomup} is an initial step in this direction; can it be given a more elegant concrete description, and extended to arbitrary topogical spaces?

	\item Is there a notion that stands to localisability as commutativity stands to strength, that guarantees that the formal Kleisli object of Corollary~\ref{cor:kleisli} is a monoid in $[\ZI(\cat{C})\op,\cat{Cat}]$? Does it connect to partial commutativity as in the Mazurkiewicz traces of Section~\ref{subsec:concurrent}?
\end{itemize}

\bibliographystyle{plainurl}
\bibliography{bibliography}

\appendix
\section{Deferred proofs}\label{sec:proofs}

This appendix provides complete proofs of results that were shortened or omitted from the main text.

\restate{lem:monadmorphism}
\begin{lemma}
  Let $T$ be a localisable monad on $\cat{C}$.
  If $u \leq v$ are central idempotents, then the functor $\cat{C}\restrict{u \leq v}$ from Lemma~\ref{lem:adjunctionCu} is a (lax) monad morphism $T\restrict{v} \to T\restrict{u}$ with $\varphi_A = T(A) \otimes u$.
\end{lemma}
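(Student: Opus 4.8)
The strategy is a direct verification against the definition of a lax monad morphism in~\eqref{eq:monadmorphism}. Write $F := \cat{C}\restrict{u\leq v} \colon \cat{C}\restrict{v}\to\cat{C}\restrict{u}$; recall it is the identity on objects and sends $f\colon A\otimes V\to B$ to $f\circ(A\otimes m)\colon A\otimes U\to B$. Because $F$ is the identity on objects, $T\restrict{u}(F(A))=T(A)=F(T\restrict{v}(A))$, so $\varphi_A := T(A)\otimes u \colon T(A)\otimes U\to T(A)$ is a well-typed endomorphism of $T(A)$ in $\cat{C}\restrict{u}$, which is exactly what a component of $\varphi\colon T\restrict{u}\circ F\Rightarrow F\circ T\restrict{v}$ must be. It then remains to check (i) naturality of $\varphi$, (ii) the unit triangle, and (iii) the multiplication square of~\eqref{eq:monadmorphism}. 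In each case the first move is to rewrite every composite occurring in $\cat{C}\restrict{u}$ or $\cat{C}\restrict{v}$ explicitly as a composite in $\cat{C}$ using the formulas of Proposition~\ref{prop:Cu}, after which the condition becomes a plain diagram in $\cat{C}$. (One could instead exploit the adjunction $\cat{C}\Restrict{u\leq v}\dashv F$ and the mate correspondence, but the strength axioms are tailor-made for the direct route.)

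For (i), fix $f\colon A\to B$ in $\cat{C}\restrict{v}$. Unfolding $T\restrict{u}(F(f)) = T(f)\circ T(A\otimes m)\circ\str_{A,U}$ (functoriality of $T$), $F(T\restrict{v}(f)) = T(f)\circ\str_{A,V}\circ(T(A)\otimes m)$, and the two $\cat{C}\restrict{u}$-composites with $\varphi$, the naturality square cancels a common factor $(T(A)\otimes U\otimes u)^{-1}$ and reduces to an identity in $\cat{C}$ that follows from naturality of the strength in the central idempotent, equation~\eqref{eq:st:naturalinu} (to swap $T(A\otimes m)\circ\str_{A,U}$ for $\str_{A,V}\circ(T(A)\otimes m)$), together with bifunctoriality of $\otimes$ — in particular the interchange law, since $u$ and $m$ act on different tensor factors.

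For (ii), both $F\big((\eta\restrict{v})_A\big)$ and the $\cat{C}\restrict{u}$-composite $\varphi_A\circ(\eta\restrict{u})_{F(A)}$ unwind to morphisms $A\otimes U\to T(A)$, and both collapse to $\eta_A\otimes u$: on one side because $u=v\circ m$ and $\otimes$ is a bifunctor, on the other because the spurious copy of $u$ introduced by the $\cat{C}\restrict{u}$-composition formula is absorbed using the idempotency of $u$. Step (iii) is the only laborious one; its key input is that $T\restrict{u}(\varphi_A) = T(T(A)\otimes u)\circ\str_{T(A),U}$ equals $\str_{T(A),I}\circ(T^2(A)\otimes u)$ by~\eqref{eq:st:naturalinu} (applied with the idempotent $1\colon I\to I$ and $m=u$), and that $\str_{T(A),I}$ is the unitor isomorphism by~\eqref{eq:st:unitor}; so up to unitors $T\restrict{u}(\varphi_A)$ is again just $T^2(A)\otimes u$. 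Since the other legs $\varphi_{T\restrict{v}A}$, $F\big((\mu\restrict{v})_A\big)$ and $(\mu\restrict{u})_{F(A)}$ similarly simplify to $T^2(A)\otimes u$, $\mu_A\otimes u$ and $\mu_A\otimes u$, the multiplication square reduces to the statement that both composites equal $\mu_A\otimes u$, once more by bifunctoriality and idempotency of $u$.

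The main obstacle is therefore not conceptual but bookkeeping: one must be scrupulous about which category each composite lives in — Proposition~\ref{prop:Cu} silently inserts a factor $(X\otimes U\otimes u)^{-1}$ at every composition in $\cat{C}\restrict{u}$ — and about the unitor and associator isomorphisms that this section suppresses. Once these are tracked the mathematical content reduces to~\eqref{eq:st:naturalinu}, \eqref{eq:st:unitor}, functoriality of $T$, bifunctoriality of $\otimes$, and the idempotency of central idempotents; I would record the handful of resulting coherence diagrams explicitly in the appendix.
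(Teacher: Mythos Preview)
Your proposal is correct and follows essentially the same direct-verification strategy as the paper: both unwind the $\cat{C}\restrict{u}$-composites into $\cat{C}$ and reduce naturality and the two diagrams of~\eqref{eq:monadmorphism} to~\eqref{eq:st:naturalinu} together with bifunctoriality of $\otimes$. Your treatment of the multiplication square is slightly more explicit than the paper's (you isolate the simplification $T\restrict{u}(\varphi_A)=T^2(A)\otimes u$ via~\eqref{eq:st:naturalinu} and~\eqref{eq:st:unitor}, whereas the appendix just records the resulting diagram), but the mathematical content is identical.
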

\begin{proof}
  Naturality of $\varphi$ comes down to commutativity of the following diagram in $\cat{C}$:
	\[\begin{tikzcd}[column sep=14mm]
		{T(A) \otimes U \otimes U} & {T(A) \otimes U \otimes V} & {T(A) \otimes U \otimes V} & {T(A) \otimes V} \\
		{T(A \otimes U) \otimes U} & {T(A \otimes U) \otimes V} & {T(A) \otimes V \otimes V} \\
		{T(A \otimes V) \otimes U} & {T(A \otimes V)} & {T(A \otimes V)} & {T(A \otimes V)} \\
		{T(B) \otimes U} &&& {T(B)}
		\arrow["{\str^u_{A,U} \otimes U}"', from=1-1, to=2-1]
		\arrow["{T(A \otimes m) \otimes U}"', from=2-1, to=3-1]
		\arrow["{T(f) \otimes U}"', from=3-1, to=4-1]
		\arrow["{\varphi_B = T(B) \otimes u}"', from=4-1, to=4-4]
		\arrow["{T(f)}", from=3-4, to=4-4]
		\arrow["{\str^v_{A,V}}", from=1-4, to=3-4]
		\arrow["{T(A) \otimes U \otimes m}", from=1-1, to=1-2]
		\arrow["{T(A \otimes u) \otimes m}"', from=2-1, to=2-2]
		\arrow["{\str^u_{A,U} \otimes V}"', from=1-2, to=2-2]
		\arrow["{T(A \otimes V) \otimes u}"', from=3-1, to=3-2]
		\arrow["{T(A \otimes m) \otimes v}"', from=2-2, to=3-2]
		\arrow["{\str^v_{A,V} \otimes V}", from=2-3, to=3-3]
		\arrow[Rightarrow, no head, from=3-3, to=3-4]
		\arrow[Rightarrow, no head, from=3-3, to=3-2]
		\arrow["{T(A) \otimes u \otimes V}", from=1-3, to=1-4]
		\arrow["{T(A) \otimes m \otimes V}", from=1-3, to=2-3]
		\arrow[Rightarrow, no head, from=1-2, to=1-3]
	\end{tikzcd}\]
	Here the top centre rectangle follows from~\eqref{eq:st:naturalinu}.
	The diagrams~\eqref{eq:monadmorphism} in $\cat{C}\restrict{v}$ unfold to the following in $\cat{C}$:
	\[\begin{tikzcd}[column sep=1mm]
		{A \otimes U \otimes U} && {T(A) \otimes U} \\
		{A \otimes U} && {T(A)} \\
		& {A \otimes V}
		\arrow["{\eta_A \otimes u \otimes U}", from=1-1, to=1-3]
		\arrow["{T(A) \otimes u}"', from=1-3, to=2-3]
		\arrow["{A \otimes m}"', from=2-1, to=3-2]
		\arrow["{\eta_A \otimes v}"', from=3-2, to=2-3]
		\arrow["{\eta_A \otimes u}", from=2-1, to=2-3]
		\arrow["{A \otimes U \otimes u}", from=1-1, to=2-1]
	\end{tikzcd}
	\begin{tikzcd}[column sep=2mm]
		{T^2(A) \otimes U \otimes U} && {T^2(A) \otimes U \otimes U \otimes U} \\
		{T(A) \otimes U} && {T(T(A) \otimes U)\otimes U \otimes U} \\
		&& {T^2(A) \otimes U \otimes U} \\
		{T(A)} && {T^2(A) \otimes U} \\
		& {T^2(A) \otimes V}
		\arrow["{(T^2(A) \otimes U \otimes U \otimes u)^{-1}}", from=1-1, to=1-3]
		\arrow["{\mu_A \otimes v}", from=5-2, to=4-1]
		\arrow["{\mu_A\otimes u \otimes U}", from=1-1, to=2-1]
		\arrow["{\str_{T^2(A),U}\otimes U \otimes U}"', from=1-3, to=2-3]
		\arrow["{T^2(A) \otimes u \otimes U}"', from=3-3, to=4-3]
		\arrow["{T^2(A) \otimes m}", from=4-3, to=5-2]
		\arrow["{\mu_A \otimes u}", from=4-3, to=4-1]
		\arrow["{T(A) \otimes u}", from=2-1, to=4-1]
		\arrow["{T(T(A) \otimes u) \otimes U \otimes U}"', from=2-3, to=3-3]
	\end{tikzcd}\]
	That these commute follows from~\eqref{eq:st:naturalinu} and bifunctoriality of the tensor product.
\end{proof}

For the next proof, to simplify the notation we will rename the adjoint functors $\Fuv$ and $\Guv$ into $F^{\utv}$ and $G^{\utv}$. Moreover, the monad $T\restrict{v}$ will be denoted by $T_v$.

\restate{thm:formal_to_loc}
\begin{proposition}
	Let $\cC$ be a stiff category. Let $(\formal{T},\formal{\mu},\formal{\eta})$ be a formal monad in $[\ZI(\cat{C})^{\op}, \cat{Cat}]$ above $\formal{\cC}$ and let $u \leq v$ be central idempotents. Then the monad $T_v$ is a localisable monad with the strength $\str_{A,U} \colon T_v(A)\otimes U \to T_v(A\otimes U)$ defined as the following composition in $\cC\restrict{v}$ for any object $A$ in $\cC\restrict{v}$:
\begin{equation*}
\begin{tikzcd}
	{T_v(A)\otimes U=F^\utv G^\utv T_vA = F^\utv T_u G^\utv A} \\
	{F^\utv T_u G^\utv F^\utv  G^\utv A = F^\utv  G^\utv T_v F^\utv  G^\utv A} \\
	{T_v F^\utv  G^\utv A = T_v(A\otimes U)}
	\arrow["{\varepsilon^{u\leq v}_{T_v F^\utv G^\utv  A}}", from=2-1, to=3-1]
	\arrow["{F^\utv  T_u\eta^{u\leq v}_{G^\utv  A}}", from=1-1, to=2-1]
\end{tikzcd}
\end{equation*}
where $\eta^{u\leq v}$ and $\varepsilon^{u\leq v}$ are the unit and counit of adjunction $F^\utv \dashv G^\utv$.
\end{proposition}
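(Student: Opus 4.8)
The plan is to check, one axiom at a time, that the morphisms $\str_{A,U}$ defined by~\eqref{eq:def_strength} satisfy~\eqref{eq:st:unitor}--\eqref{eq:st:naturalina} of Definition~\ref{def:localisablemonad} for the monad $T_v$ on $\cC\restrict{v}$, with $U$ ranging over domains of central idempotents below $v$. Each verification is a pasting of standard ingredients: the two triangle identities and the naturality of the unit $\eta^{u\leq v}$ and counit $\varepsilon^{u\leq v}$ of the adjunction $F^{u\leq v}\dashv G^{u\leq v}$ of Lemma~\ref{lem:adjunctionCu} (equivalently Lemma~\ref{thm:coKleisli}), under which $-\otimes U=F^{u\leq v}\circ G^{u\leq v}$; the strict naturality of $\formal{T}$ recorded by~\eqref{eq:equal_ob}--\eqref{eq:nat_on_mor}, namely $G^{u\leq v}\circ T_v=T_u\circ G^{u\leq v}$ and $G^{u\leq v}T_v(f)=T_u G^{u\leq v}(f)$ on the nose; the modification laws~\eqref{eq:modification} for $\formal{\eta}$ and $\formal{\mu}$; the composition identities of Lemma~\ref{lem:decompFG} for nested idempotents; and functoriality of $\formal{\cC}$. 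For simplicity I would run the whole argument for $v=1$, the general case being word-for-word identical with $F^{u\leq v},G^{u\leq v}$ in place of $F^{u\leq 1},G^{u\leq 1}$.

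The formal axioms go first. The two naturality conditions~\eqref{eq:st:naturalinu} and~\eqref{eq:st:naturalina} unfold, after substituting~\eqref{eq:def_strength}, into a composite of naturality squares for $\eta^{u\leq v}$ and $\varepsilon^{u\leq v}$ glued to the strict-naturality equation~\eqref{eq:nat_on_mor}; for~\eqref{eq:st:naturalinu} one additionally compares the adjunctions for two idempotents $u\leq u'\leq v$ via Lemma~\ref{lem:decompFG}. For the unit axiom~\eqref{eq:st:unit}, expand $\eta_{A\otimes U}$ as $(\eta\restrict{v})_{F^{u\leq v}G^{u\leq v}A}$, use the left square of~\eqref{eq:modification} (the $\formal{\eta}$-modification law) to move $(\eta\restrict{v})$ past $G^{u\leq v}$ so that it becomes $(\eta\restrict{u})$, and collapse the remainder with the triangle identity $\varepsilon^{u\leq v}\circ F^{u\leq v}\eta^{u\leq v}=\id$. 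The unitor axiom~\eqref{eq:st:unitor} is the degenerate instance $u=1$: there $F^{1\leq v},G^{1\leq v},\eta^{1\leq v},\varepsilon^{1\leq v}$ are identities up to the unitor $\rho$, so~\eqref{eq:def_strength} reduces to $T_v(\rho_A^{-1})\circ\rho_{T_v(A)}$, which is exactly~\eqref{eq:st:unitor}.

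The multiplication axiom~\eqref{eq:st:multiplication} is the longest single diagram but still involves only one idempotent: after expanding the two occurrences of $\str$ together with $T(\str)$ and $\mu$, one slides $T_v$ past $G^{u\leq v}$ using~\eqref{eq:nat_on_mor}, applies the right square of~\eqref{eq:modification} (the $\formal{\mu}$-modification law) to turn $(\mu\restrict{v})$ adjacent to $G^{u\leq v}$ into $G^{u\leq v}$ adjacent to $(\mu\restrict{u})$, and finishes with naturality of $\varepsilon^{u\leq v}$ and a triangle identity; this is essentially dual to the computation already carried out for Proposition~\ref{prop:smallmonad}.

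The one real obstacle is the associativity axiom~\eqref{eq:st:associator}, the only axiom mixing two distinct central idempotents together with their meet, whose domain is $U\otimes V$. The crucial point is that $-\otimes(U\otimes V)$ is, by associativity of $\otimes$, the composite comonad $(-\otimes U)\otimes V$; hence $\cC\restrict{u\wedge v}$ arises as an iterated co-Kleisli category and Lemma~\ref{lem:decompFG} factors the adjunction $F^{u\wedge v}\dashv G^{u\wedge v}$ through those for $u$ and for $v$. Translating $\str_{A,U\otimes V}$ along this factorisation and repeatedly invoking the triangle identities together with the strict naturality $G\circ T=T\circ G$ should turn it into $\str_{A\otimes U,V}\circ(\str_{A,U}\otimes V)$ up to the associator $\alpha_{A,U,V}$; this is also the step where stiffness of $\cC$ enters, guaranteeing that the categories $\cC\restrict{u},\cC\restrict{v},\cC\restrict{u\wedge v}$ and the pullback presentation of $A\otimes U\otimes V$ interact as required. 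I expect the genuine difficulty to be the bookkeeping here — tracking in which of $\cC\restrict{v},\cC\restrict{u},\cC\restrict{u\wedge v}$ each arrow lives and which instance of the adjunction is being applied — while the remaining axioms are formal consequences of the adjunctions and the modification laws.
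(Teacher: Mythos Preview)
Your plan is correct and matches the paper's proof: each axiom is verified by pasting the triangle identities and naturality of $\eta^{u\leq v},\varepsilon^{u\leq v}$, the strict naturality~\eqref{eq:equal_ob}--\eqref{eq:nat_on_mor} of $\formal{T}$, the modification laws~\eqref{eq:modification}, and (for~\eqref{eq:st:associator} and~\eqref{eq:st:naturalinu}) the decomposition identities of Lemma~\ref{lem:decompFG}, with~\eqref{eq:st:associator} indeed being the large bookkeeping diagram. One small correction: stiffness is not actually invoked in the diagram chase for~\eqref{eq:st:associator}---the paper's argument there uses only Lemma~\ref{lem:decompFG}, the counit decomposition~\eqref{eq:decomp_counit}, and naturality, so stiffness is a standing hypothesis on $\cC$ for the ambient setup rather than a tool used at that step.
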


\begin{proof}
We need to prove each of the axioms of Definition \ref{def:localisablemonad}. This consist of many commutativity diagrams which we  present below. In order to simplify the already very heavy notation, we make the following changes. Whenever the context is clear, we will drop the superscripts $u\leq v$ and simply write $F$ and $G$ (and $\eta$ and $\varepsilon$ for the unit $\eta^{\utv}$ and counit $\varepsilon^{\utv}$ of the adjunction). We will also omit the $v=1$ when possible and simply write $F^u$ and $G^u$ for $F^{u\leq 1}$ and $G^{u\leq 1}$. Finally, we write out the proof for the case $v=1$; the same arguments hold for any $T_v$ by using the relevant strength.

\begin{enumerate}
\item Condition~\eqref{eq:st:unitor}, that $T_1(\rho_A)\circ \str_{A,I}=\rho_{TA}$, follows from the commutativity of:
 \[\begin{tikzcd}
	{FT_uGA} & {FGT_1A} && {T_1A} \\
	\\
	{FT_uGFGA} & {FGT_1FGA} && {T_1FGA}
	\arrow[""{name=0, anchor=center, inner sep=0}, "{\varepsilon_{T_1A}}", from=1-2, to=1-4]
	\arrow["{T_1\varepsilon_A}"', from=3-4, to=1-4]
	\arrow[""{name=1, anchor=center, inner sep=0}, Rightarrow, no head, from=1-2, to=1-1]
	\arrow["{FT_u\eta_{GA}}"', from=1-1, to=3-1]
	\arrow[""{name=2, anchor=center, inner sep=0}, Rightarrow, no head, from=3-1, to=3-2]
	\arrow[""{name=3, anchor=center, inner sep=0}, "{\varepsilon_{T_1FGA}}"', from=3-2, to=3-4]
	\arrow["{FGT_1\varepsilon_A}"{description}, from=3-2, to=1-2]
	\arrow[Rightarrow, draw=none, from=2, to=1]
	\arrow["{\textbf{nat. }\varepsilon}"{description}, shift right=1, Rightarrow, draw=none, from=3, to=0]
\end{tikzcd}\]
Note that $\rho_A= \varepsilon_A$. The left square follows from the zigzag equation $G\varepsilon_A\circ \eta_{GA} = \id_{GA}$.

\item Next, consider equation~\eqref{eq:st:associator}:
  $$T_1(\alpha_{A,U,V}) \circ \str_{A,U \otimes V}  = \str_{A \otimes U,V} \circ (\str_{A,U} \otimes V) \circ \alpha_{T_1A,U,V}.$$ 
  This comes down to a large commuting diagram, shown in Figure~\ref{fig:axiom2_proof} on page~\pageref{fig:axiom2_proof}.
  First note the following decomposition of the counit, that follows from Lemma~\ref{lem:decompFG}. For central idempotents $u\leq v$, we have that $F^u=F^v \circ F^{u\leq v}$ and $G^u=G^{u\leq v} \circ G^v$, and hence a map
\[\begin{tikzcd}
	{F^uG^u} && {F^vG^v}
	\arrow["{F^v\varepsilon^{u\leq v}_{G^vA}}", from=1-1, to=1-3]
\end{tikzcd}\]
where $\varepsilon^{\utv}$ if the counit of the adjunction $F^\utv\dashv G^\utv$.
Moreover, as a consequence of Lemma~\ref{lem:decompFG}:
\begin{align}\label{eq:decomp_counit}
\varepsilon^{u}=\varepsilon^{v}\circ F^v(\varepsilon^{u\leq v}_{G^v}).
\end{align}
Using this, we note that the associator $\alpha_{A,U,V}$, as an operation on the adjoint functors $F^u\dashv G^u$, is defined as the composition:
\[\begin{tikzcd}
	{F^\uov G^\uov A} && {F^uG^uA} && {F^uG^uF^uG^uA} && {F^vG^vF^uG^uA}
	\arrow["{F^u\varepsilon^{\uov\leq u}_{G^uA}}", from=1-1, to=1-3]
	\arrow["{F^v\varepsilon^{u\leq v}_{G^vF^uG^uA}}", from=1-5, to=1-7]
	\arrow["{F^u\eta_{G^uA}}", from=1-3, to=1-5]
\end{tikzcd}\]
The commutativity of Figure \ref{fig:axiom2_proof} heavily relies on the naturality of the unit and counit. It also uses the decomposition~\eqref{eq:decomp_counit} of the counit. Moreover, the entire diagram implicitly uses the decomposition of the adjoint functors presented in Lemma~\ref{lem:decompFG} and equation~\eqref{eq:equal_ob} which is a consequence of the naturality of $\formal{T}$.
Additionally, section $\textbf{(a)}$ in Figure \ref{fig:axiom2_proof} holds due to the commutativity of the following diagram:
%
\[\begin{tikzcd}
	{F^uT_uG^uF^uG^uA=F^uG^uT_1F^uG^uA} &&&& {F^u G^u T_1 F^\uov G^\uov A} \\
	\\
	\\
	{F^uG^uT_1F^uG^uF^uG^uA} \\
	\\
	{F^u G^u T_1F^vG^vF^uG^uA}
	\arrow["{F^uG^uT_1F^u\varepsilon^{\uov\leq u}_{G^uA}}"{description}, from=1-5, to=1-1]
	\arrow["{F^uG^uT_1(\alpha_{A,U,V})}"{description}, from=1-5, to=6-1]
	\arrow["{F^uG^uT_1F^v\varepsilon^{u\leq v}_{G^vF^uG^uA}}"{description}, from=4-1, to=6-1]
	\arrow["{F^uG^uT_1\varepsilon^u_{F^uG^uA}}"{description, pos=0.4}, from=4-1, to=1-1]
	\arrow["{F^uG^uT_1F^u\eta_{G^uA}}"{description, pos=0.4}, shift left=8, curve={height=-30pt}, from=1-1, to=4-1]
	\arrow["{F^uG^uT_1\varepsilon^v_{F^uG^uA}}"{description, pos=0.6}, shift left=15, curve={height=-30pt}, from=6-1, to=1-1]
\end{tikzcd}\]
This diagram uses the definition of the associator, the zigzag equation $\varepsilon_{F^uG^uA}\circ \eta_{G^uA} = \id_{G^uA}$, and the decomposition property of the counit.

 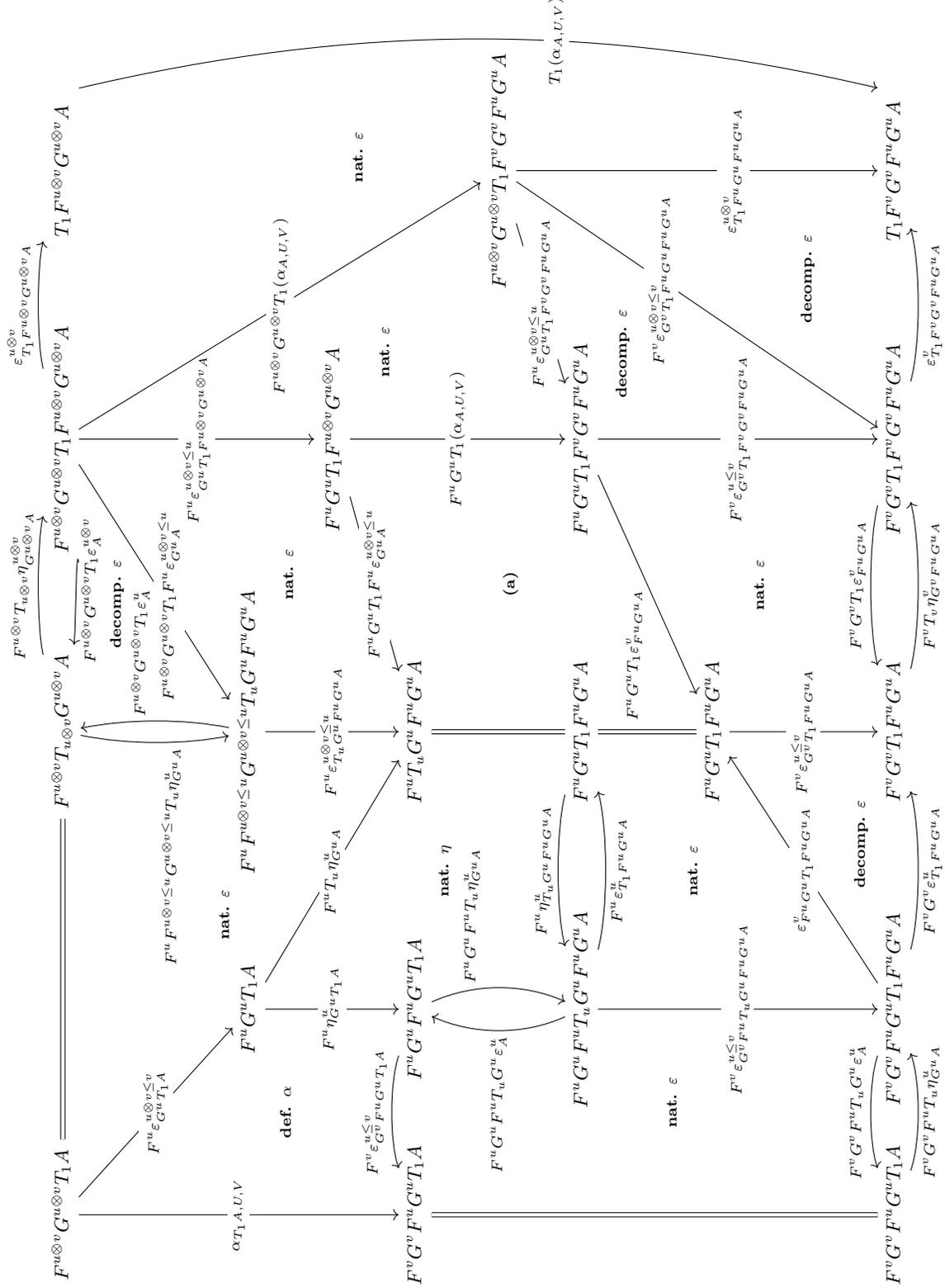
\begin{sidewaysfigure}[b!]
\[\begin{tikzcd}[column sep = 3mm, row sep = 8mm, font = \small]
	{F^\uov G^\uov T_1A} && {F^\uov T_\uov G^\uov A} & {F^\uov G^\uov T_1 F^\uov G^\uov A} & {T_1F^\uov G^\uov A} \\
	\\
	\\
	& {F^uG^uT_1A} & {F^uF^{\uov\leq u}G^{\uov\leq u}T_uG^uF^uG^uA} \\
	&&& {F^u G^u T_1 F^\uov G^\uov A} \\
	{F^vG^vF^uG^uT_1A} & {F^uG^uF^uG^uT_1A} & {F^uT_uG^uF^uG^uA} \\
	&&&& {F^\uov G^\uov T_1F^vG^vF^uG^uA} \\
	& {F^uG^uF^uT_uG^uF^uG^uA} & {F^uG^uT_1F^uG^uA} & {F^u G^u T_1F^vG^vF^uG^uA} \\
	\\
	&& {F^uG^uT_1F^uG^uA} \\
	\\
	\\
	{F^vG^vF^uG^uT_1A} & {F^vG^vF^uG^uT_1F^uG^uA} & {F^vG^vT_1F^uG^uA} & {F^vG^vT_1F^vG^vF^uG^uA} & {T_1F^vG^vF^uG^uA}
	\arrow["{\varepsilon^\uov_{T_1 F^\uov G^\uov A}}", curve={height=-12pt}, from=1-4, to=1-5]
	\arrow[""{name=0, anchor=center, inner sep=0}, "{F^u\eta^u_{G^uT_1A}}"{description}, from=4-2, to=6-2]
	\arrow[""{name=1, anchor=center, inner sep=0}, "{F^vG^v\varepsilon^u_{T_1F^uG^uA}}"', curve={height=12pt}, from=13-2, to=13-3]
	\arrow[""{name=2, anchor=center, inner sep=0}, "{\varepsilon^v_{T_1F^vG^vF^uG^uA}}"', curve={height=12pt}, from=13-4, to=13-5]
	\arrow[""{name=3, anchor=center, inner sep=0}, "{F^uG^uT_1(\alpha_{A,U,V})}"{description}, from=5-4, to=8-4]
	\arrow[""{name=4, anchor=center, inner sep=0}, "{F^u\varepsilon^{\uov\leq u}_{G^u T_1 F^\uov G^\uov A}}"{description}, from=1-4, to=5-4]
	\arrow[""{name=5, anchor=center, inner sep=0}, "{T_1(\alpha_{A,U,V})}"{description, pos=0.6}, shift left=15, curve={height=-30pt}, from=1-5, to=13-5]
	\arrow[""{name=6, anchor=center, inner sep=0}, "{F^v\varepsilon^{\uov\leq v}_{G^v T_1F^uG^uF^uG^uA}}"{description, pos=0.4}, from=7-5, to=13-4]
	\arrow["{\varepsilon^\uov_{T_1F^uG^uF^uG^uA}}"{description, pos=0.6}, from=7-5, to=13-5]
	\arrow[""{name=7, anchor=center, inner sep=0}, "{F^u\varepsilon^{\uov \leq u}_{G^u T_1F^vG^vF^uG^uA}}"{description}, from=7-5, to=8-4]
	\arrow[""{name=8, anchor=center, inner sep=0}, "{F^v\varepsilon^{u\leq v}_{G^vT_1F^vG^vF^uG^uA}}"{description}, from=8-4, to=13-4]
	\arrow[""{name=9, anchor=center, inner sep=0}, "{F^uG^uT_1\varepsilon^v_{F^uG^uA}}"', from=8-4, to=10-3]
	\arrow["{F^v\varepsilon^{u\leq v}_{G^vT_1F^uG^uA}}"{description}, from=10-3, to=13-3]
	\arrow["{F^uG^uF^uT_u\eta^u_{G^uA}}"{pos=0.4}, curve={height=-12pt}, from=6-2, to=8-2]
	\arrow["{F^\uov G^\uov T_1F^u\varepsilon^{\uov\leq u}_{G^uA}}"{description, pos=0.6}, from=1-4, to=4-3]
	\arrow["{F^uG^uF^uT_uG^u\varepsilon^u_A}", curve={height=-12pt}, from=8-2, to=6-2]
	\arrow[""{name=10, anchor=center, inner sep=0}, Rightarrow, no head, from=8-3, to=10-3]
	\arrow[""{name=11, anchor=center, inner sep=0}, "{F^\uov G^\uov T_1\varepsilon^u_A}"'{pos=0.6}, curve={height=6pt}, from=4-3, to=1-3]
	\arrow["{F^uF^{\uov\leq u}G^{\uov\leq u}T_u\eta^u_{G^uA}}"'{pos=0.6}, curve={height=6pt}, from=1-3, to=4-3]
	\arrow["{F^vG^vF^uT_uG^u\varepsilon^u_A}"', curve={height=12pt}, from=13-2, to=13-1]
	\arrow["{F^vG^vF^uT_u\eta^u_{G^uA}}"', curve={height=12pt}, from=13-1, to=13-2]
	\arrow[""{name=12, anchor=center, inner sep=0}, "{F^v\varepsilon^{u\leq v}_{G^vF^uT_uG^uF^uG^uA}}"{description}, from=8-2, to=13-2]
	\arrow[""{name=13, anchor=center, inner sep=0}, "{F^vG^vT_1\varepsilon^v_{F^uG^uA}}"', curve={height=12pt}, from=13-4, to=13-3]
	\arrow["{F^vT_v\eta^v_{G^vF^uG^uA}}"', curve={height=12pt}, from=13-3, to=13-4]
	\arrow[""{name=14, anchor=center, inner sep=0}, "{F^u\varepsilon^{\uov\leq u}_{T_uG^uF^uG^uA}}"{description}, from=4-3, to=6-3]
	\arrow[Rightarrow, no head, from=6-3, to=8-3]
	\arrow[""{name=15, anchor=center, inner sep=0}, "{F^uT_u\eta^u_{G^uA}}"{description}, from=4-2, to=6-3]
	\arrow[""{name=16, anchor=center, inner sep=0}, "{\varepsilon^v_{F^uG^uT_1F^uG^uA}}"{description}, from=13-2, to=10-3]
	\arrow["{F^u\varepsilon^u_{T_1F^uG^uA}}"', curve={height=12pt}, from=8-2, to=8-3]
	\arrow[""{name=17, anchor=center, inner sep=0}, "{F^u\eta^u_{T_uG^uF^uG^uA}}"', curve={height=12pt}, from=8-3, to=8-2]
	\arrow[""{name=18, anchor=center, inner sep=0}, "{\alpha_{T_1A,U,V}}"{description}, from=1-1, to=6-1]
	\arrow["{F^v\varepsilon^{u\leq v}_{G^vF^uG^uT_1A}}"', curve={height=12pt}, from=6-2, to=6-1]
	\arrow[""{name=19, anchor=center, inner sep=0}, Rightarrow, no head, from=6-1, to=13-1]
	\arrow[""{name=20, anchor=center, inner sep=0}, "{F^\uov G^\uov T_1(\alpha_{A,U,V})}"{description}, from=1-4, to=7-5]
	\arrow["{F^u\varepsilon^{\uov \leq v}_{G^uT_1A}}"{description}, from=1-1, to=4-2]
	\arrow[""{name=21, anchor=center, inner sep=0}, Rightarrow, no head, from=1-1, to=1-3]
	\arrow["{F^\uov G^\uov T_1\varepsilon^{\uov}_A}", curve={height=-6pt}, from=1-4, to=1-3]
	\arrow["{F^\uov T_\uov \eta^\uov_{G^\uov A}}", curve={height=-12pt}, from=1-3, to=1-4]
	\arrow[""{name=22, anchor=center, inner sep=0}, "{F^uG^uT_1F^u\varepsilon^{\uov\leq u}_{G^uA}}"{description}, from=5-4, to=6-3]
	\arrow["{\textbf{nat. }\mathbf{\varepsilon}}"{description}, shift left=5, Rightarrow, draw=none, from=20, to=5]
	\arrow["{\textbf{decomp. }\mathbf{\varepsilon}}"{description}, shift left=5, Rightarrow, draw=none, from=6, to=2]
	\arrow["{\textbf{def. }\mathbf{\alpha}}"{description}, Rightarrow, draw=none, from=18, to=0]
	\arrow["{\textbf{nat. }\mathbf{\varepsilon}}"{description, pos=0.6}, Rightarrow, draw=none, from=21, to=15]
	\arrow["{\textbf{nat. }\mathbf{\eta}}"{description}, Rightarrow, draw=none, from=15, to=17]
	\arrow["{\textbf{nat. }\mathbf{\varepsilon}}"{description}, Rightarrow, draw=none, from=12, to=10]
	\arrow["{\textbf{nat. }\mathbf{\varepsilon}}"{description}, Rightarrow, draw=none, from=9, to=13]
	\arrow["{\textbf{decomp. }\mathbf{\varepsilon}}"{description}, shift left=5, Rightarrow, draw=none, from=16, to=1]
	\arrow["{\textbf{(a)}}"{description}, Rightarrow, draw=none, from=22, to=9]
	\arrow["{\textbf{nat. }\mathbf{\varepsilon}}"{description}, shift right=5, Rightarrow, draw=none, from=3, to=20]
	\arrow["{\textbf{nat. }\mathbf{\varepsilon}}"{description}, shift right=5, Rightarrow, draw=none, from=14, to=4]
	\arrow["{\textbf{decomp. }\mathbf{\varepsilon}}"{description}, Rightarrow, draw=none, from=11, to=1-4]
	\arrow["{\textbf{nat. }\mathbf{\varepsilon}}"{description}, shift left=5, Rightarrow, draw=none, from=19, to=12]
	\arrow["{\textbf{decomp. }\mathbf{\varepsilon}}"{description, pos=0.6}, shift right=1, Rightarrow, draw=none, from=8, to=7]
\end{tikzcd}\]
\caption{Commuting diagram establishing axiom~\eqref{eq:st:associator} in the proof of Proposition \ref{thm:formal_to_loc}.}
\label{fig:axiom2_proof}
 \end{sidewaysfigure}

\item We move to axiom~\eqref{eq:st:unit}: $(\eta\restrict{1})_{A\otimes U} = \str_{A,U}\circ ((\eta\restrict{1})_A\otimes U)$. This holds because the diagram below commutes, where we use naturality  and the zigzag equation $\varepsilon_{FGA}\circ F\eta_{GA} = \id_{FGA}$:
\[\begin{tikzcd}
	{FGT_1A} & {FT_uGA} &&& {FT_uGFGA} \\
	FGA && FGFGA && {FGT_1FGA} \\
	&&&& {T_1FGA}
	\arrow[Rightarrow, no head, from=1-5, to=2-5]
	\arrow[Rightarrow, no head, from=1-1, to=1-2]
	\arrow["{FG(\eta\restrict{1})_A}", from=2-1, to=1-1]
	\arrow[""{name=0, anchor=center, inner sep=0}, "{FT_u\eta_A}", from=1-2, to=1-5]
	\arrow[""{name=1, anchor=center, inner sep=0}, "{\varepsilon_{T_1FGA}}", from=2-5, to=3-5]
	\arrow["{F\eta_{GA}}"{description, pos=0.6}, curve={height=-6pt}, from=2-1, to=2-3]
	\arrow[""{name=2, anchor=center, inner sep=0}, "{\varepsilon_{FGA}}"{description, pos=0.6}, curve={height=-6pt}, from=2-3, to=2-1]
	\arrow["{FG (\eta\restrict{1})_{FGA}}", from=2-3, to=2-5]
	\arrow["{(\eta\restrict{1})_{FGA}}"{description}, shift right=1, curve={height=18pt}, from=2-1, to=3-5]
	\arrow["{\textbf{nat. } \mathbf{(\eta\restrict{1})}}"{description}, shift left=3, curve={height=6pt}, Rightarrow, draw=none, from=0, to=2-3]
	\arrow["{\textbf{nat. } \mathbf{\varepsilon}}"{description}, shift left=4, Rightarrow, draw=none, from=1, to=2]
\end{tikzcd}\]
Here $(\eta\restrict{1})$ refers to the unit of the monad $T_1$, while $\eta$ (and $\varepsilon$) refers to the unit (and counit) of the adjunction $F\dashv G$.

\item Equation~\eqref{eq:st:multiplication}, that $(\mu\restrict{1})_{A \otimes U} \circ T_1(\str_{A,U}) \circ \str_{T_1(A),U} = \str_{A,U} \circ ((\mu\restrict{1})_A \otimes U)$, follows from  commutivity of the diagram below:
\[\begin{tikzcd}[column sep = 3 mm, row sep = 8mm]
	{FGT_1T_1A} &&&& {FGT_1A} \\
	{FT_uGT_1A} && {FT_uT_uGA} && {FT_uGA} \\
	{FT_uGFGT_1A} & {FT_uGFT_uGA} \\
	{FGT_1FGT_1A} & {FGT_1FT_uGA} && {FT_uT_uGFGA} & {FT_uGFGA} \\
	&& {FT_uGFT_uGFGA} \\
	{T_1FGT_1A} && {FT_uGFGT_1FGA} & {FT_uGT_1FGA} & {FGT_1FGA} \\
	& {FGT_1FT_uGFGA} & {FGT_1FGT_1FGA} & {FGT_1T_1FGA} \\
	\\
	{T_1FT_uGA} & {T_1FT_uGFGA} & {T_1FGT_1FGA} & {T_1T_1FGA} & {T_1FGA}
	\arrow[""{name=0, anchor=center, inner sep=0}, "{FG(\mu\restrict{1})_A}", from=1-1, to=1-5]
	\arrow[Rightarrow, no head, from=1-1, to=2-1]
	\arrow["{FT_u\eta_{GT_1A}}"{description}, from=2-1, to=3-1]
	\arrow[Rightarrow, no head, from=3-1, to=4-1]
	\arrow["{\varepsilon_{T_1FGT_1A}}"{description}, from=4-1, to=6-1]
	\arrow[Rightarrow, no head, from=6-1, to=9-1]
	\arrow[""{name=1, anchor=center, inner sep=0}, "{T_1GT_u\eta_{GA}}"', from=9-1, to=9-2]
	\arrow[Rightarrow, no head, from=9-2, to=9-3]
	\arrow[""{name=2, anchor=center, inner sep=0}, "{T_1\varepsilon_{T_1FGA}}"', from=9-3, to=9-4]
	\arrow[""{name=3, anchor=center, inner sep=0}, "{(\mu\restrict{1})_{FGA}}"', from=9-4, to=9-5]
	\arrow["{\varepsilon_{T1FGA}}"{description}, from=6-5, to=9-5]
	\arrow[""{name=4, anchor=center, inner sep=0}, Rightarrow, no head, from=4-5, to=6-5]
	\arrow["{FT_u\eta_{GA}}"{description}, from=2-5, to=4-5]
	\arrow[Rightarrow, no head, from=1-5, to=2-5]
	\arrow[Rightarrow, no head, from=2-1, to=2-3]
	\arrow[""{name=5, anchor=center, inner sep=0}, "{F(\mu\restrict{u})_{GA}}"{description}, from=2-3, to=2-5]
	\arrow[""{name=6, anchor=center, inner sep=0}, "{F(\mu\restrict{u})_{GFGA}}", curve={height=-12pt}, from=4-4, to=4-5]
	\arrow["{FT_uT_u\eta_{GA}}"{description}, from=2-3, to=4-4]
	\arrow[""{name=7, anchor=center, inner sep=0}, Rightarrow, no head, from=4-4, to=6-4]
	\arrow[Rightarrow, no head, from=6-4, to=7-4]
	\arrow[""{name=8, anchor=center, inner sep=0}, "{GF(\mu\restrict{1})_{FGA}}"{description, pos=0.6}, curve={height=6pt}, from=7-4, to=6-5]
	\arrow["{\varepsilon_{T_1T_1FGA}}"{description}, from=7-4, to=9-4]
	\arrow[Rightarrow, no head, from=3-1, to=3-2]
	\arrow[Rightarrow, no head, from=3-2, to=4-2]
	\arrow[Rightarrow, no head, from=4-1, to=4-2]
	\arrow["{\varepsilon_{T_1FT_uGA}}"{description}, from=4-2, to=9-1]
	\arrow[""{name=9, anchor=center, inner sep=0}, "{FGT_1\varepsilon_{T_1FGA}}"', curve={height=12pt}, from=7-3, to=7-4]
	\arrow["{\varepsilon_{T_1FGT_1FGA}}"{description}, from=7-3, to=9-3]
	\arrow[Rightarrow, no head, from=7-2, to=7-3]
	\arrow[""{name=10, anchor=center, inner sep=0}, "{FGT_1FT_u\eta_{GA}}"{description}, from=4-2, to=7-2]
	\arrow["{\varepsilon_{T_1FT_uGFGA}}"{description}, from=7-2, to=9-2]
	\arrow[""{name=11, anchor=center, inner sep=0}, "{FT_u\eta_{T_uGA}}"{description}, from=2-3, to=3-2]
	\arrow[Rightarrow, no head, from=6-3, to=7-3]
	\arrow[Rightarrow, no head, from=5-3, to=6-3]
	\arrow["{FT_uFT_u\eta_{GA}}"{description, pos=0.7}, from=3-2, to=5-3]
	\arrow[Rightarrow, no head, from=5-3, to=7-2]
	\arrow["{FT_u\eta_{GT_1FGA}}", curve={height=-12pt}, tail reversed, no head, from=6-3, to=6-4]
	\arrow[""{name=12, anchor=center, inner sep=0}, "{FT_u\eta_{T_uGFGA}}"{description}, from=4-4, to=5-3]
	\arrow["{FT_uG\varepsilon_{T_1FGA}}"', curve={height=12pt}, from=6-3, to=6-4]
	\arrow["{\textbf{nat. }\mathbf{\varepsilon}}"{description}, Rightarrow, draw=none, from=9, to=2]
	\arrow["{\textbf{nat. }\mathbf{\varepsilon}}"{description}, Rightarrow, draw=none, from=8, to=3]
	\arrow["{\textbf{nat. }\mathbf{\varepsilon}}"{description, pos=0.6}, Rightarrow, draw=none, from=10, to=1]
	\arrow["{\textbf{nat. }\mathbf{\eta}}"{description}, Rightarrow, draw=none, from=11, to=12]
	\arrow["{\textbf{(a)}}"{description}, shift right=1, Rightarrow, draw=none, from=0, to=2-3]
	\arrow["{\textbf{nat. }\mathbf{\mu}}"{description}, Rightarrow, draw=none, from=5, to=6]
	\arrow["{\textbf{(a)}}"{description}, Rightarrow, draw=none, from=7, to=4]
\end{tikzcd}\]
Here, most polygons are strictly equal due to our formal monad being a natural transformation (equations~\eqref{eq:equal_ob} and \eqref{eq:nat_on_mor}). Some follow from naturality, while those marked $(a)$ follow from $\mu$ being a modification (equation~\eqref{eq:modification}).

\item Axiom \eqref{eq:st:naturalinu}, which reads $\str_{A,V} \circ (T(A) \otimes m)  = T(A \otimes m) \circ \str_{A,U}$, is established as follows:
\[\begin{tikzcd}[column sep = 4mm, row sep = 6mm]
	{F^uG^uT_1A} & {F^uT_uG^uA} &&& {F^uT_uG^uF^uG^uA} & {F^uG^uT_1F^uG^uA} &&& {T_1F^uG^uA} \\
	\\
	&&&& {F^vT_vG^vF^uG^uA} & {F^vG^vT_1F^uG^uA} \\
	\\
	{F^vG^vT_1A} & {F^vT_vG^vA} &&& {F^vT_vG^vF^vG^vA} & {F^vG^vT_1F^vG^vA} &&& {T_1F^vG^vA}
	\arrow[""{name=0, anchor=center, inner sep=0}, Rightarrow, no head, from=1-1, to=1-2]
	\arrow[""{name=1, anchor=center, inner sep=0}, Rightarrow, no head, from=1-5, to=1-6]
	\arrow[""{name=2, anchor=center, inner sep=0}, Rightarrow, no head, from=5-5, to=5-6]
	\arrow["{F^v\varepsilon^{\utv}_{G^vT_1A}}"{description}, from=1-1, to=5-1]
	\arrow[""{name=3, anchor=center, inner sep=0}, "{\varepsilon^v_{T_1F^vG^vA}}"', from=5-6, to=5-9]
	\arrow[""{name=4, anchor=center, inner sep=0}, Rightarrow, no head, from=5-1, to=5-2]
	\arrow[""{name=5, anchor=center, inner sep=0}, "{F^vT_v\eta^v_{G^vA}}"', from=5-2, to=5-5]
	\arrow["{T_1F^v\varepsilon^{\utv}_{G^vA}}"{description}, from=1-9, to=5-9]
	\arrow[""{name=6, anchor=center, inner sep=0}, "{\varepsilon^u_{T_1F^uG^uA}}", from=1-6, to=1-9]
	\arrow[""{name=7, anchor=center, inner sep=0}, "{F^uT_u\eta^u_{G^uA}}", from=1-2, to=1-5]
	\arrow["{F^v\varepsilon^{\utv}_{G^vT_1F^uG^uA}}"{description}, from=1-6, to=3-6]
	\arrow["{F^vG^vT_1F^v\varepsilon^{\utv}_{G^vA}}"{description}, from=3-6, to=5-6]
	\arrow["{F^vT_vG^vF^v\varepsilon^{\utv}_{G^vA}}"{description}, from=3-5, to=5-5]
	\arrow["{F^v\varepsilon^{\utv}_{T_vG^vA}}"{description}, from=1-2, to=5-2]
	\arrow["{F^v\varepsilon^{\utv}_{T_vG^vF^uG^uA}}"{description}, from=1-5, to=3-5]
	\arrow["{\textbf{(b)}}"{description}, Rightarrow, draw=none, from=7, to=5]
	\arrow["{\textbf{(a)}}"{description}, Rightarrow, draw=none, from=0, to=4]
	\arrow["{\textbf{(c)}}"{description}, Rightarrow, draw=none, from=1, to=2]
	\arrow["{\textbf{(d)}}"{description}, Rightarrow, draw=none, from=6, to=3]
\end{tikzcd}\]
The left square \textbf{(a)} is technically an equality from~\eqref{eq:equal_ob}, while \textbf{(c)} also trivially follows from~\eqref{eq:equal_ob}. Squares \textbf{(b)} and \textbf{(d)} follow from the naturality of the counit and the decomposition property of~\eqref{eq:decomp_counit}.

\item Finally, equation~\eqref{eq:st:naturalina} asks that strength is natural: $\str_{B,U} \circ \big( T(f) \otimes U \big) = T(f \otimes U) \circ \str_{A,U}$. 
\[\begin{tikzcd}[column sep = 6mm, row sep = 12mm]
	{FGT_1A} & {FT_uGA} && {FT_uGFGA} & {FGT_1FGA} && {T_1FGA} \\
	{FGT_1B} & {FT_uGB} && {FT_uGFGB} & {FGT_1FGB} && {T_1FGB}
	\arrow[Rightarrow, no head, from=1-1, to=1-2]
	\arrow[""{name=0, anchor=center, inner sep=0}, "{FT_u\eta_{GA}}", from=1-2, to=1-4]
	\arrow[Rightarrow, no head, from=1-4, to=1-5]
	\arrow[""{name=1, anchor=center, inner sep=0}, "{\varepsilon_{T_1FGA}}", from=1-5, to=1-7]
	\arrow["{T_1FGf}", from=1-7, to=2-7]
	\arrow[""{name=2, anchor=center, inner sep=0}, "{\varepsilon_{T_1FGB}}"', from=2-5, to=2-7]
	\arrow[Rightarrow, no head, from=2-4, to=2-5]
	\arrow[""{name=3, anchor=center, inner sep=0}, "{FT_u\eta_{GB}}"', from=2-2, to=2-4]
	\arrow[Rightarrow, no head, from=2-1, to=2-2]
	\arrow["{FGT_1f}"', from=1-1, to=2-1]
	\arrow["{FT_uGf}"{description}, from=1-2, to=2-2]
	\arrow["{FT_uGFGf}"{description}, from=1-4, to=2-4]
	\arrow["{FGT_1FGf}"{description}, from=1-5, to=2-5]
	\arrow["{\textbf{\text{nat. }} \mathbf{\eta}}"{description}, Rightarrow, draw=none, from=0, to=3]
	\arrow["{\textbf{nat. } \mathbf{\varepsilon}}"{description}, Rightarrow, draw=none, from=1, to=2]
\end{tikzcd}\]
\end{enumerate}
This follows from naturality of the unit and counit. 
\end{proof}

\restate{prop:loc_to_formal}
\begin{proposition}
  A localisable monad $T$ on a stiff category $\cat{C}$ induces a formal monad on $\overline{\cat{C}}$ in $[\ZI(\cat{C})\op,\cat{Cat}]$.
  The natural transformation $\overline{T} \colon \overline{\cat{C}} \Rightarrow \overline{\cat{C}}$ has components $T\restrict{u}$, the modification $\overline{\eta} \colon \overline{\cat{C}} \Rrightarrow \overline{T}$ has components $\eta\restrict{u}$, and the modification $\overline{\mu} \colon \overline{T}^2 \Rrightarrow \overline{T}$ has components $\mu\restrict{u}$ as in Proposition~\ref{prop:smallmonad}.
\end{proposition}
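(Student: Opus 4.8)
The plan is to check, in turn, the three ingredients of a formal monad in the $2$-category $[\ZI(\cat{C})\op,\cat{Cat}]$: that $\overline{T}$ is a well-defined (strict) natural transformation $\overline{\cat{C}}\Rightarrow\overline{\cat{C}}$, that $\overline{\eta}$ and $\overline{\mu}$ are modifications of the stated types, and that together they satisfy the monad laws. Throughout, the key point is that each equation must be read in the appropriate fibre $\cat{C}\restrict{u}$ and then translated back into $\cat{C}$, so the main bookkeeping is keeping track of which category a given composite lives in.

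First I would invoke Proposition~\ref{prop:smallmonad} to know that each $(T\restrict{u},\eta\restrict{u},\mu\restrict{u})$ is a genuine monad on $\cat{C}\restrict{u}$; in particular $T\restrict{u}$ is a functor and $\eta\restrict{u}$, $\mu\restrict{u}$ are natural in $A$. For the naturality of $\overline{T}$ I would verify~\eqref{eq:equal_ob} and~\eqref{eq:nat_on_mor} for central idempotents with $u=v\circ m$. Equation~\eqref{eq:equal_ob} is immediate, since on objects $T\restrict{u}(\Guv A)=T(A)=\Guv(T\restrict{v}A)$. For~\eqref{eq:nat_on_mor}, take $f\colon A\otimes V\to B$ representing a morphism of $\cat{C}\restrict{v}$; unwinding the definition of $\Guv$ on morphisms (postcomposition with $T(A)\otimes m$, resp.\ precomposition with $A\otimes m$) and of $T\restrict{u}$, $T\restrict{v}$ on morphisms reduces the required equality to $T(A\otimes m)\circ\str_{A,U}=\str_{A,V}\circ(T(A)\otimes m)$ in $\cat{C}$, which is precisely axiom~\eqref{eq:st:naturalinu}.

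Next, I would observe that by Lemma~\ref{lem:monadmorphism} the functor $\Guv$ is a lax monad morphism $T\restrict{v}\to T\restrict{u}$ with structure map $\varphi_A=T(A)\otimes u$; since the identity on an object $X$ of $\cat{C}\restrict{u}$ is $X\otimes u$, this $\varphi$ is in fact the identity natural transformation. The two coherence squares~\eqref{eq:monadmorphism} for this monad morphism, read with $\varphi$ an identity, are exactly the two squares~\eqref{eq:modification} asserting that $\overline{\eta}$ and $\overline{\mu}$ are modifications; so this step follows directly from Lemma~\ref{lem:monadmorphism}. The componentwise naturality of the modifications is just the naturality of $\eta\restrict{u}$ and $\mu\restrict{u}$ in $A$, already recorded.

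Finally, each monad law for $(\overline{T},\overline{\mu},\overline{\eta})$ is an equality of modifications, hence holds precisely when it holds in every fibre; but evaluated at $u$, the two unit laws and the associativity law become the corresponding monad laws for $(T\restrict{u},\eta\restrict{u},\mu\restrict{u})$, which hold by Proposition~\ref{prop:smallmonad}. The main obstacle is not any single hard computation but the accumulated bookkeeping: correctly interpreting ``modification'' and ``natural'' in $[\ZI(\cat{C})\op,\cat{Cat}]$, and tracking which of $\cat{C}$, $\cat{C}\restrict{u}$, $\cat{C}\restrict{v}$ a given composite lives in, so that the large diagrams collapse cleanly onto the strength axioms~\eqref{eq:st:unitor}--\eqref{eq:st:naturalina}, onto Lemma~\ref{lem:monadmorphism}, and onto Proposition~\ref{prop:smallmonad}. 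The remaining routine diagram chases are straightforward and are spelled out below.
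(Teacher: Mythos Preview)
Your proposal is correct and follows essentially the same approach as the paper: naturality of $\overline{T}$ reduces to axiom~\eqref{eq:st:naturalinu}, the modification conditions for $\overline{\eta}$ and $\overline{\mu}$ are supplied by Lemma~\ref{lem:monadmorphism}, and the monad laws hold fibrewise by Proposition~\ref{prop:smallmonad}. Your observation that $\varphi_A=T(A)\otimes u$ is the identity morphism in $\cat{C}\restrict{u}$, so that the monad-morphism triangles~\eqref{eq:monadmorphism} become literally the modification squares~\eqref{eq:modification}, is a clean way to make that step transparent---the paper leaves this implicit.
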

\begin{proof}
  To see that $\overline{T}$ is natural, suppose $u = v \circ m$. We are to show that the following diagram in $\cat{Cat}$ commutes:
\[\begin{tikzcd}[column sep =4 mm, row sep = 2mm]
	{\cC\restrict{v}} && {\cC\restrict{v}} \\
	\\
	{\cC\restrict{u}} && {\cC\restrict{u}}
	\arrow["{T\restrict{v}}", from=1-1, to=1-3]
	\arrow["{T\restrict{u}}"', from=3-1, to=3-3]
	\arrow["\cC\restrict{u\leq v}"', from=1-1, to=3-1]
	\arrow["\cC\restrict{u\leq v}", from=1-3, to=3-3]
\end{tikzcd}\]
  On objects this is clear, because the vertical functors act as the identity on objects, and the horizontal functors act as $T$ on objects. Let $f \colon A \otimes V \to B$ be a morphism $A \to B$ in $\cat{C}\restrict{v}$. Mapping it along the left-bottom path sends it first to $f \circ (A \otimes m) \colon A \otimes U \to B$ and finally to $T(f) \circ T(A \otimes m) \circ \str_{A,U}$ in $\cat{C}\restrict{u}$.
  Mapping it along the top-right path sends it to $T(f) \circ \str_{A,V} \circ (T(A) \otimes m)$. But these two morphisms are equal by~\eqref{eq:st:naturalinu}.

  That $\overline{\eta}$ and $\overline{\mu}$ are modifications comes down to the components $\eta\restrict{u}$ and $\mu\restrict{u}$ satisfying the equations of~\eqref{eq:modification}. But this follows directly from Lemma \ref{lem:monadmorphism}. Naturality of $(\eta\restrict{u})_A \colon A \to T\restrict{u}(A)$ in $\cat{C}\restrict{u}$ means that if $f \colon A \otimes U \to B$ in $\cat{C}$, then we must have $(\eta\restrict{u})_B \circ (f \otimes U) \circ (A \otimes U \otimes u)^{-1} = T(f) \circ \str_{T(A),U} \circ ((\eta\restrict{u})_A \otimes U) \circ (A \otimes U \otimes u)^{-1}$. This is indeed the case by~\eqref{eq:st:unit} as shown in the diagram on the left below. It remains to verify that ${(\mu\restrict{u})_A}$ is natural in $\cat{C}\restrict{u}$. This means that for $f \colon A \otimes U \to B$ in $\cat{C}$ the following diagram on the right must commute in $\cat{C}$.  But this follows from~\eqref{eq:st:multiplication}, bifunctoriality of the tensor, and naturality of $\mu$.
\[\begin{tikzcd}[column sep = 4 mm, row sep = 4 mm]
	{A \otimes U \otimes U} &&&& {T(A) \otimes U} \\
	\\
	& {A \otimes U} &&& {T(A \otimes U)} \\
	& B \\
	{B \otimes U} &&&& {T(B)}
	\arrow["{f \otimes U}"', from=1-1, to=5-1]
	\arrow["{A \otimes U \otimes u}"{description}, from=1-1, to=3-2]
	\arrow["{B \otimes u}"{description}, from=5-1, to=4-2]
	\arrow["f"', from=3-2, to=4-2]
	\arrow["{\eta_B \otimes u}"', from=5-1, to=5-5]
	\arrow["{\eta_A \otimes u \otimes U}", from=1-1, to=1-5]
	\arrow["{\str_{A,U}}", from=1-5, to=3-5]
	\arrow["{\eta_{A \otimes U}}"{description}, from=3-2, to=3-5]
	\arrow["{\eta_A \otimes U}"{description}, from=3-2, to=1-5]
	\arrow["{\eta_B}"{description}, from=4-2, to=5-5]
	\arrow["{T(f)}", from=3-5, to=5-5]
\end{tikzcd}
\qquad
\begin{tikzcd}[column sep =4 mm]
	{T^2(A) \otimes U \otimes U} && {T(A) \otimes U} \\
	{T(T(A) \otimes U) \otimes U} \\
	{T^2(A \otimes U) \otimes U} && {T(A \otimes U)} \\
	{T^2(B) \otimes U} && {T(B)}
	\arrow["{\str_{T(A),U} \otimes U}"', from=1-1, to=2-1]
	\arrow["{T(\str_{A,U})\otimes U}"', from=2-1, to=3-1]
	\arrow["{T(f) \otimes U}"', from=3-1, to=4-1]
	\arrow["{\mu_B \otimes u}"', from=4-1, to=4-3]
	\arrow["{\mu_{A} \otimes u \otimes U}", from=1-1, to=1-3]
	\arrow["{\str_{A,U}}", from=1-3, to=3-3]
	\arrow["{T(f)}", from=3-3, to=4-3]
	\arrow["{\mu_{A \otimes U} \otimes u}", from=3-1, to=3-3]
\end{tikzcd}\]
  Finally, that $\overline{\eta}$ and $\overline{\mu}$ satisfy the monad laws (pointwise) follows from Proposition~\ref{prop:smallmonad}.
\end{proof}

\end{document}